\keywords{Ludics, Linear Logic, Incarnation, Normalization, Game Semantics}
\def\eg{{\em e.g.}}
\def\ie{{\em i.e.}}
\def\Ie{{\em I.e.}}
\colorlet{couleurCourante}{black}
\newcommand{\ROUGE}[1]{{
\colorlet{oldcouleurCourante}{couleurCourante}\color{red}\colorlet{couleurCourante}{red}%
#1%
\color{oldcouleurCourante}\colorlet{couleurCourante}{oldcouleurCourante}\xspace
}}%
\newcommand{\ROUGE}[1]{\ignorespaces
}%
\newcommand{\ROUGEbis}[1]{{
\colorlet{oldcouleurCourante}{couleurCourante}\color{violet}\colorlet{couleurCourante}{violet}%
#1%
\color{oldcouleurCourante}\colorlet{couleurCourante}{oldcouleurCourante}\xspace
}}%
\newcommand{\ROUGEbis}[1]{\ignorespaces
}%
\newcommand{\MAGENTA}[1]{{
\colorlet{oldcouleurCourante}{couleurCourante}\color{magenta}\colorlet{couleurCourante}{magenta}%
#1%
\color{oldcouleurCourante}\colorlet{couleurCourante}{oldcouleurCourante}\xspace
}}%
\newcommand{\MAGENTA}[1]{\ignorespaces
}%
\newcommand\bigshpos{
\mathinner{\mathchoice%
{\raise-.9pt\hbox{\scalebox{1.1}{$\displaystyle\shpos$}}}
{\raise-.9pt\hbox{\scalebox{1.1}{$\textstyle\shpos$}}}
{\raise-.3pt\hbox{\scalebox{1}{$\scriptstyle\shpos$}}}
{\raise.1pt\hbox{\scalebox{1}{$\scriptscriptstyle\shpos$}}}
}%
}
\newcommand\bigshneg{
\mathinner{\mathchoice%
{\raise-.9pt\hbox{\scalebox{1.1}{$\displaystyle\shneg$}}}
{\raise-.9pt\hbox{\scalebox{1.1}{$\textstyle\shneg$}}}
{\raise-.3pt\hbox{\scalebox{1}{$\scriptstyle\shneg$}}}
{\raise.1pt\hbox{\scalebox{1}{$\scriptscriptstyle\shneg$}}}
}%
}
\def\boldvdashdisplay{%
\mkern4mu%
\raise-.3pt\hbox{$\vdash$}\mkern-10.5mu\hbox{$\vdash$}\mkern-10.5mu\raise.3pt\hbox{$\vdash$}%
\mkern2mu%
}
\def\boldvdashscript{%
\mkern4mu%
\raise-.4pt\hbox{$\vdash$}\mkern-13mu\hbox{$\vdash$}\mkern-13mu\raise.4pt\hbox{$\vdash$}%
\mkern2mu%
}
\def\boldtopdisplay{%
\raise-.3pt\hbox{$\top$}\mkern-14.5mu\hbox{$\top$}\mkern-14.5mu\raise.3pt\hbox{$\top$}
}
\def\boldtopscript{%
\mkern4mu
\raise-.2pt\hbox{$\scriptstyle\top$}\mkern-14mu\hbox{$\scriptstyle\top$}\mkern-14mu\raise.2pt\hbox{$\scriptstyle\top$}
\mkern2mu%
}
\newcommand{\boldtop}{%
\mathord{\mathchoice%
{\boldtopdisplay}
{\boldtopdisplay}
{\boldtopscript}
{\boldtopscript}
}%
}
\newcommand{\boldbot}{%
\mathord{%
\raise-.3pt\hbox{$\bot$}\mkern-13.5mu\hbox{$\bot$}\mkern-13.5mu\raise.3pt\hbox{$\bot$}
}%
}
\newcommand{\fullotensor}{\mathbin{\mathchoice%
{\hbox to 7.85 pt{\hss$\fullotensortextstyle$\hss}}
{\hbox to 7.85 pt{\hss$\fullotensortextstyle$\hss}}
{\hbox to 6.37 pt{\hss$\fullotensorscriptstyle$\hss}}
{\hbox to 5.4 pt{\hss$\fullotensorscriptscriptstyle$\hss}}
}}
\newcommand{\fullotensortextstyle}{\mathbin{%
\begin{tikzpicture}[line width=.2pt,baseline=-.65ex]
\draw[clip] (0,0) circle(3.3pt);
\fill[black] (-.12,0) -- (.12,-.11) -- (.12,.11) -- (-.12,0);
\end{tikzpicture}
}}
\newcommand{\fullotensorscriptstyle}{\mathbin{%
\begin{tikzpicture}[line width=.2pt,baseline=-.45ex]
\draw[clip] (0,0) circle(2.6pt);
\fill[black] (-.095,0) -- (.12,-.10) -- (.12,.10) -- (-.095,0);
\end{tikzpicture}
}}
\newcommand{\fullotensorscriptscriptstyle}{\mathbin{%
\begin{tikzpicture}[line width=.2pt,baseline=-.33ex]
\draw[clip] (0,0) circle(2pt);
\fill[black] (-.072,0) -- (.12,-.09) -- (.12,.09) -- (-.072,0);
\end{tikzpicture}
}}
\newcommand{\shuffle}{\mathbin{\mathchoice%
{\hbox to 7.85 pt{\hss$\shuffletextstyle$\hss}}
{\hbox to 7.85 pt{\hss$\shuffletextstyle$\hss}}
{\hbox to 6.4 pt{\hss$\shufflescriptstyle$\hss}}
{\hbox to 5.4 pt{\hss$\shufflescriptscriptstyle$\hss}}
}}
\newcommand{\shuffletextstyle}{\mathbin{%
\begin{tikzpicture}[line width=.5pt,baseline=-.08ex]
\draw[cap=round] (0,0) -- (.28,0);
\draw[cap=round] (0,0) -- (0,.19);
\draw[cap=round] (.14,0) -- (.14,.19);
\draw[cap=round] (.28,0) -- (.28,.19);
\end{tikzpicture}
}}
\newcommand{\shufflescriptstyle}{\mathbin{%
\begin{tikzpicture}[line width=.4pt,baseline=-.08ex]
\draw[cap=round] (0,0) -- (.20,0);
\draw[cap=round] (0,0) -- (0,.13);
\draw[cap=round] (.10,0) -- (.10,.13);
\draw[cap=round] (.20,0) -- (.20,.13);
\end{tikzpicture}
}}
\newcommand{\shufflescriptscriptstyle}{\mathbin{%
\begin{tikzpicture}[line width=.35pt,baseline=-.08ex]
\draw[cap=round] (0,0) -- (.16,0);
\draw[cap=round] (0,0) -- (0,.095);
\draw[cap=round] (.08,0) -- (.08,.095);
\draw[cap=round] (.16,0) -- (.16,.095);
\end{tikzpicture}
}}
\newcommand{\daimon}{{\scriptstyle \maltese}}
\newcommand{\chronicle}[1]{{\mathfrak{#1}}}
\newcommand{\design}[1]{{\mathfrak{#1}}}
\newcommand{\designset}[1]{{\mathrm{#1}}}
\newcommand{\behaviour}[1]{{\mathbf{#1}}}
\newcommand{\Bincarnation}[1]{|{\designset{#1}}|}
\newcommand{\Dincarnation}[2]{|{\design{#1}}|_{\designset{#2}}}
\newcommand{\normalisation}[1]{[\![#1]\!]}
\newcommand{\psdes}[2]{[\![ #1 , #2 ]\!]}
\newcommand{\proj}[2]{
\mathinner{\mathchoice%
{#1\!\!\upharpoonright\!\!#2}
{#1\!\!\upharpoonright\!\!#2}
{#1\mkern1mu\upharpoonright\mkern1mu #2}
{#1\!\!\upharpoonright\!\!#2}
}%
}
\newcommand{\normalisationSeq}[2]{\left<#1\!\!\leftarrow\!\!#2\right>}
\newcommand{\normalisationDes}[2]{\fullview{\normalisationSeq{\design{#1}}{\design{#2}}}}
\DeclareFontFamily{OT1}{pzc}{}
\DeclareFontShape{OT1}{pzc}{m}{it}{<-> [1.1] pzcmi8t}{} 
\DeclareMathAlphabet{\mathpzc}{OT1}{pzc}{m}{it}
\newcommand{\pathLL}[1]{\mathpzc{#1}}
\newlength{\Viewheight}
\newlength{\ulcornerheight}
\newcommand{\view}[1]{%
\settoheight{\Viewheight}{$#1$}%
\settoheight{\ulcornerheight}{$\ulcorner$}%
\addtolength{\Viewheight}{-\ulcornerheight}%
\addtolength{\Viewheight}{1pt}%
\raisebox{\Viewheight}{$\ulcorner$}{#1}\raisebox{\Viewheight}{$\urcorner$}}
\newlength{\Fullviewheight}
\newcommand{\fullview}[1]{%
\settoheight{\Fullviewheight}{$#1$}%
\settoheight{\ulcornerheight}{$\ulcorner$}%
\addtolength{\Fullviewheight}{-\ulcornerheight}%
\addtolength{\Fullviewheight}{1pt}%
\raisebox{\Fullviewheight}{$\ulcorner\mkern-6mu\ulcorner\mkern-2mu$}{#1}\raisebox{\Fullviewheight}{$\mkern-2mu\urcorner\mkern-6mu\urcorner$}}
\newcommand{\dai}{\design{Dai}}
\newlength{\dualwidth}
\newlength{\dualheight}
\newcommand{\dual}[1]{
\mathinner{\mathchoice%
{
\dddual{#1}
}%
{
\dddual{#1}
}%
{
\dddual{\scriptstyle #1}
}%
{
\dddual{\scriptscriptstyle #1}
}%
}
}
\newcommand{\dddual}[2][.8]{
\settowidth{\dualwidth}{$#2$}%
\settoheight{\dualheight}{$#2$}%
\makebox[\dualwidth][c]{\mbox{\rule{0cm}{#1\dualheight}$\Widetilde[#1]{#2}$}}
}
\newlength{\ddualwidth}
\newlength{\ddualheight}
\newcommand{\ddual}[2][1]{
\settowidth{\ddualwidth}{$#2$}%
\settoheight{\ddualheight}{$#2$}%
\makebox[\ddualwidth][c]{\mbox{\rule{0cm}{#1\ddualheight}$\WWidetilde[#1]{#2}$}}
}
\def\lshuffledef#1{{\mkern-2.05mu\hbox{$\shuffle$}\mkern-11mu \hbox{\raisebox{2.5pt}{$\scriptscriptstyle l$}}\mkern2.5mu \hbox{\raisebox{2.5pt}{$\scriptscriptstyle l$}}}\mkern1mu}
\newcommand*{\lshuffle}{\mathbin{\mathchoice%
{\lshuffledef\displaystyle}
{\lshuffledef\textstyle}
{\lshuffledef\scriptstyle}
{\lshuffledef\scriptscriptstyle}}}
\newcommand{\PosOrder}{\olessthan}
\def\notPosOrderdef#1{\hbox{\hbox to -2pt{$#1/$\hss}$#1\PosOrder$}}
\newlength{\Overlineheight}
\newlength{\Overlinewidth}
\newcommand{\Overline}[2][\Overlinestretch]{%
\settowidth{\Overlinewidth}{$#2$}%
\settoheight{\Overlineheight}{$#2$}%
\setlength{\Overlineheight}{#1\Overlineheight}%
\rlap{$
\overline{
	\makebox[\Overlinewidth][c]{
		\rule{0cm}{\Overlineheight}
		}
	}$
}
#2
}
\newlength{\Widetildeheight}
\newlength{\Widetildewidth}
\newcommand{\Widetildestretch}{1}
\newcommand*\Widetilde[2][\Widetildestretch]{%
        \begingroup
        \mathchoice{\Widetilde@helper{#1}{#2}{\displaystyle}{\textfont}}
                   {\Widetilde@helper{#1}{#2}{\textstyle}{\textfont}}
                   {\Widetilde@helper{#1}{#2}{\scriptstyle}{\scriptfont}}
                   {\Widetilde@helper{#1}{#2}{\scriptscriptstyle}{\scriptscriptfont}}%
        \endgroup
}
\newcommand*\Widetilde@helper[4]{%
\settowidth{\Widetildewidth}{$#2$}%
\settoheight{\Widetildeheight}{$#2$}%
\setlength{\Widetildeheight}{#1\Widetildeheight}%
\rlap{\raisebox{\Widetildeheight}{$\resizebox{\Widetildewidth}{\height}{$\sim$}$}}{#2}
}
\newlength{\WWidetildeheight}
\newlength{\WWidetildewidth}
\newcommand{\WWidetildestretch}{1}
\newcommand*\WWidetilde[2][\WWidetildestretch]{%
        \begingroup
        \mathchoice{\WWidetilde@helper{#1}{#2}{\displaystyle}{\textfont}}
                   {\WWidetilde@helper{#1}{#2}{\textstyle}{\textfont}}
                   {\WWidetilde@helper{#1}{#2}{\scriptstyle}{\scriptfont}}
                   {\WWidetilde@helper{#1}{#2}{\scriptscriptstyle}{\scriptscriptfont}}%
        \endgroup
}
\newcommand*\WWidetilde@helper[4]{%
\settowidth{\WWidetildewidth}{$#2$}%
\settoheight{\WWidetildeheight}{$#2$}%
\setlength{\WWidetildeheight}{#1\WWidetildeheight}%
\rlap{\raisebox{\WWidetildeheight}{$\resizebox{\WWidetildewidth}{\height}{$\ssim$}$}}{#2}
}
\newcommand*{\ssim}{%
\mathrel{\vcenter{\offinterlineskip
\hbox{$\sim$}\vskip-.35ex\hbox{$\sim$}}}}
\begin{document}

\title[Ludics Behaviours]{Study of Behaviours via Visitable Paths}

\author[C.~Fouquer\'e]{Christophe Fouquer\'e}
\address{Universit\'e Paris 13, Sorbonne Paris Cité, LIPN, CNRS, France.}
\email{christophe.fouquere@lipn.univ-paris13.fr}

\author[M.~Quatrini]{Myriam Quatrini}	
\address{I2M, Aix-Marseille Universit\'e, CNRS, France.}	
\email{myriam.quatrini@univ-amu.fr}

\begin{abstract}
  \noindent Around 2000, J.-Y. Girard developed a logical theory, called Ludics. This theory was a step in his program of Geometry of Interaction, the aim of which being to account for the dynamics of logical proofs.  In Ludics, objects called designs keep only what is relevant for the cut elimination process, hence the dynamics of a proof: a design is an abstraction of a formal proof. The notion of behaviour is the counterpart in Ludics  of the notion of type or the logical notion of formula. Formally a behaviour is a closed set of designs.
Our aim is to explore the constructions of behaviours and to analyse their properties.
In this paper a design is viewed as a set of coherent paths. We recall or give variants of properties concerning visitable paths, where a visitable path is a path in a design or a set of designs that may be traversed by interaction with a design of the orthogonal of the set. 
We are then able to answer the following question: which properties should satisfy a set of paths for being exactly the set of visitable paths of a behaviour? Such a set and its dual should be prefix-closed, daimon-closed and satisfy two saturation properties. This allows us to have a means for defining the whole set of visitable paths of a given set of designs without closing it explicitly, that is without computing the orthogonal of this set of designs.
We finally apply all these results for making explicit the structure of a behaviour generated by constants and multiplicative/additive connectives. We end by proposing an oriented tensor for which we give basic properties.
\end{abstract}

\maketitle

\section{Introduction}
%

\subsection{Context}
At the beginning of the decade 2000, J.-Y. Girard developed a logical theory, called Ludics~\cite{DBLP:journals/mscs/Girard01}. This theory was a step in his program of Geometry of Interaction, the aim of which being to account for the dynamics of logical proofs.  
In Ludics, objects called designs keep only what is relevant for the cut elimination process, hence the dynamics of a proof: a design is an abstraction of a formal proof.
Some notable successes have been achieved with Ludics, although it has limitations. This theory succeeds in providing several new concepts and tools for manipulating logic as a theory of interaction.  It enables a new formulation of useful properties  like stability, associativity, monotonicity, traditionally required for a model of computation,   in Ludics in a unique language instead of the usual dichotomy syntax/semantics. Original concepts are also present: separation (a design is completely defined by its interacting counterdesigns), incarnation (the relevant part of a design with respect to a set of designs). Furthermore additive connectives are satisfactorily handled. One of its most important results is a full completeness theorem for second-order multiplicative additive (and affine) Linear Logic. Nevertheless, Ludics presents also limitations. A first one is that objects are rather unfamiliar and seem difficult to manipulate. A more serious limitation is due to the construction itself. The focalisation property of Linear Logic makes it possible to consider only proofs respecting a very constrained procedure, making proof search considerably easier. This enables to grasp cut elimination by means of proofs built with very few rules. The counterpart is that such proofs are strictly sequential. A still more serious limitation, when interested in computation,  is the difficulty to grasp the duplication/contraction phenomenon as Ludics is strictly linear.

Even if works in Geometry of Interaction focus more on proof nets (\cite{DBLP:journals/tcs/Girard87} as seminal paper, \cite{dicosmo_kesner_polonovski_2003,Hughes:2005:PNU:1094622.1094629,GUERRINI20111958,DBLP:conf/lics/Seiller16} for some recent ones), Ludics provides an interesting setting for applications. For example, the authors with other researchers have developed in Ludics a modelling of natural language dialogues and of several other linguistic aspects~\cite{LQwollic,Lbook,FQlacl}. 
For such purposes, the specificities of Ludics compared to other theoretical frameworks are until now essential.
In particular, Ludics is developed on  an ontological reversal, in the sense that primitive objects in Ludics are not formulas or types but their inhabitants, called designs. 
Hence, for example, non-unicity of typing is given for free and not an overlay of a logical theory.

\subsection{Ludics and Game Semantics}

The approach developed in Ludics is closely related to the Game Semantics approach where execution, \ie, interaction, occurs between player strategies in a game. Game Semantics has been extremely fruitful for studying various fragments of Linear Logic or Classical Logic  in order to obtain 
full completeness results (among others, \cite{HylandOng93, DBLP:journals/jsyml/AbramskyJ94, DBLP:conf/lics/Loader94,DBLP:conf/lics/Lamarche95, DBLP:conf/lics/AbramskyM99,Laurent10}).
Basic concepts of Ludics may in fact be expressed in terms of Game Semantics~\cite{DBLPconf/csl/Faggian02, DBLP:conf/csl/FaggianH02, faggiancurien,  DBLP:journals/corr/abs-cs-0501039,faggianbasaldella}. In a few words:
\begin{itemize}
\item an action is a {\em move}, the  abstraction of the application of a rule,
\item the sequence of actions used during interaction is a {\em play}, the cut-elimination steps,
\item a design is an {\em innocent strategy}, it is also a frame of a sequent calculus derivation.
\end{itemize}
However there is a fundamental difference between Game Semantics as it is generally used and Ludics.
Strategies are typed, while designs are {\it a priori} untyped.
More concretely, a game comes with a set of {\em plays}, \ie, sequences of {\em moves} that satisfy particular conditions, a {\em strategy} is nothing else but such a set of plays. In Ludics, a play is what results from the interaction between two designs, and a game, what denotes a formula,
 is interpreted as a behaviour, \ie, a set of designs which is closed under bi-orthogonality. 
Notice that, considered as an element of a behaviour, only part of a design may be travelled during interactions with designs in the orthogonal of the behaviour, and this part has to be considered as a strategy in terms of Game Semantics. C. Faggian~\cite{DBLPconf/csl/Faggian02,DBLP:journals/tcs/Faggian06} studied which part of a design may be travelled during an interaction. In this paper we go further by studying in which extent a design may be travelled, when this design is part of a behaviour: in that case, other designs in the behaviour constrain what can really be travelled by interaction.

\subsection{Extensions of (the original) Ludics}
Further works extended original Ludics and corrected its initial limitations, in particular strict linearity and sequentiality. 
Ludics nets were developed as a game model for concurrent interaction by F. Maurel and
C. Faggian~\cite{faggianmaurel} and more thoroughly analyzed by P.-L. Curien and C. Faggian in~\cite{faggiancurien}. In their work, Ludics nets generalize Ludics designs (or innocent strategies in Game Semantics) by allowing less sequentiality than required in the focalisation procedure. This is similar to what is at stake when transforming proof structures into proof nets.
M. Basaldella and C. Faggian in~\cite{faggianbasaldella} extended Ludics with non-linear terms. More precisely they introduce specific actions (or moves) that are neutral, hence not polarized. The interaction procedure is modified in such a way that neutral actions may be reused. They obtain a full completeness result for a variant of Multiplicative-Exponential Linear Logic.
K. Terui proposed in~\cite{terui}  a reformulation of Ludics by means of a notion of $c$-design, more suitable from a computational point of view. In particular, a $c$-design is presented in a $\lambda$-calculus style, and may not be linear. Whereas actions in Ludics (or moves) have fixed locations, \ie, are constants, actions have relative addresses in a $c$-design: variables are introduced in the model. This work was further developed with M. Basaldella in~\cite{DBLP:journals/corr/abs-1011-1625} to prove a full completeness result with respect to polarized Linear Logic. To go back to their formal framework, the forest presentation of Ludics is replaced by a linear one in such a way that the interaction rule becomes an elegant generalization of the $\beta$-reduction rule of $\lambda$-calculus. However the characterization of {\em interaction paths}, \ie, sequences of actions that may be followed in an interaction, is not as simple as in the original presentation of Ludics (see~\cite{PavauxCSL17} for a work in this direction). For that reason we stick to the original presentation of Ludics.

 \subsection{Our aim}
 We are interested in the study of {\it behaviours}.
The notion of behaviour is the counterpart in Ludics  of the notion of type or the logical notion of formula.
Typed programming language may be an important domain of application. Whereas operational semantics is often developed before defining a correct and complete type system, our purpose is to define a language for behaviours that may be used for conceiving a type system that extends traditional type systems. The operational semantics and computational rules for a language with this type system are (quite) given for free: the corresponding concept of cut-elimination is given {\it ab initio} in Ludics, not only for behaviours representing multiplicative-additive connectives but for the whole system. In this direction, we propose in this paper a definition of an oriented tensor, different from what has been done for example by M. Churchill, J. Laird and G. McCusker~\cite{DBLP:journals/apal/ChurchillLM13} in the framework of Game Semantics. 
 
 Our aim is therefore to explore the constructions of behaviours and to analyse their properties in order to define a language of behaviours.

 \subsection{Our methodology}
 
We study behaviours from two complementary approaches:
\begin{itemize}
\item Study of behaviours by means of their {\em incarnation}. 
Intuitively the incarnation of a behaviour is its greatest subset with each element fully used by interaction with counter-designs, thus sufficient for recovering the behaviour.
Furthermore, incarnation is the core of the original notion of {\em internal completeness}. Whereas a behaviour is formally defined as the closure by bi-orthogonality of a set of designs, tensor or sum of behaviours may be already defined by construction, without the necessity of bi-orthogonality.  What is at stake is to be able to study a behaviour with respect to its designs, without making explicit neither its counter-designs nor the whole set of designs of the behaviour. 
\item Study of operations on behaviours. 
  We already mentioned the internal completeness properties for additive and multiplicative connectives. However what characterizes a behaviour generated from these connectives and logical constants was not known. We prove that such a behaviour should be {\em regular}, \ie, designs in the incarnation are made of interaction paths. We hope this sheds light on the structure of such connectives. The characterization we obtain in this paper may be viewed as a generalization to the non-intuitionistic case of what was given in Game Semantics in the intuitionistic case. In particular the tensor is semantically obtained as (quite) a shuffle. This characterization provides also a means for unveiling the structure of a non-regular behaviour. What is at stake is then to specify which constructions may be defined, which ones are logically justified and enable the decomposition of all behaviours.
\end{itemize}
Both these two approaches are complementary in the sense that incarnation, and therefore internal completeness, is a guideline for defining relevant constructions on behaviours.
 
For these two approaches the concept of {\em visitable path} is central. A visitable path is a path, inside a design, which is traversed during interaction between this design and a counter design. This notion is necessarily defined with respect to a fixed behaviour. Nevertheless the definition may be relaxed using a non necessarily closed set of designs instead of a behaviour. The notion of visitable path is equivalent to the one of legal play in Game Semantics when we forget that the path is part of a design of some fixed behaviour. Among the results we obtain studying visitable paths, let us mention the following ones:
\begin{itemize}
\item Characterization of the set of visitable paths of a given set of designs without making explicit its dual.
\item Characterization of the incarnation of a behaviour generated by a set of designs without making explicit its dual. 
\item Characterization of the kind of sets of paths that are sets of visitable paths of some behaviour, without making explicit neither the behaviour nor its dual.  
\end{itemize}

\subsection{Content of the paper}

The reader may find in the annex (section~\ref{annex:basicLudics}) the original concepts of Ludics. 
In section~\ref{section:visitablePaths} we depart from this presentation recalling that a design may equivalently be viewed as a set of coherent paths. We define next what is the shuffle of paths. Finally we recall or give variants of properties concerning visitable paths, where a visitable path is a path in a design of a set of designs that may be traversed by interaction with a design of the orthogonal of the set. This last notion is important as a behaviour, where a behaviour is a closure of a set of designs with respect to interaction, is fully characterized by its set of visitable paths.

Section~\ref{section:behaviours} is devoted to answering the following question: which properties should satisfy a set of paths for being the set of visitable paths of a behaviour? Such a set of paths is said to be {\em ludicable}. A set is ludicable when this set and its dual are prefix-closed, daimon-closed and satisfy two saturation properties. We explain by means of examples why these properties are required. This allows us to have a means for defining the whole set of visitable paths of a given set of designs (not necessarily a behaviour) without computing explicitly the orthogonal of this set of designs.

In section~\ref{section:grammar}, we apply all these results for proving properties concerning behaviours obtained by means of (focalised) linear logic connectives, mainly the tensor `$\otimes$' and the plus `$\oplus$'. We can then make explicit the structure of a behaviour generated by constants and multiplicative/additive connectives. 

We begin in a last section (section~\ref{sec:SimpleOrientedTensor}) the study of non multiplicative-additive behaviours. For this purpose we consider three cases of non-commutative tensors. We prove that the one defined by J.-Y. Girard is non-commutative but may be expressed by means of the commutative tensor. We analyze the non-commutative tensor defined by Churchill et al. in terms of visitable paths. We end by defining a new non-commutative tensor and state a few properties.

\section{Ludics in Terms of Visitable Paths}\label{section:visitablePaths}
The precise definition of basic objects of Ludics is set in J.-Y. Girard's seminal paper~\cite{DBLP:journals/mscs/Girard01}. Roughly speaking,   the main objects are {\em designs} together with a notion of interaction between them.  As in Game Semantics,
the basic steps of interaction are {\em actions} (moves). In Ludics, an action is either a special one called {\em daimon}, written $\daimon$, or a proper action written $(+/-,\xi,I)$, where the {\em polarity} is relative to  the point of view one adopts (one side or the other of interaction), the  {\em focus} $\xi$ determines the position (or address) where this action may occur in an interaction. 
The daimon $\daimon$ will be considered as a positive (non-proper) action.
An action either terminates the interaction (if it is the daimon) or creates new addresses (a set $I$) on which the interaction may continue. Therefore, the (dynamics of the) interaction consists in following two dual alternate sequences of actions, one in each design.  Two kinds of sequences of actions, paths and chronicles, may equivalently be used to define designs. The latter is used by J.-Y. Girard to define a design as a set of chronicles, the former closer to the notion of {\em play}  is used in~\cite{faggianbasaldella} as an alternate presentation of Ludics, making explicit the link between Ludics and Game Semantics. More details on the notions and proofs of above results may be found in~\cite{MQ-CF-LMCS11}, in particular for the notion of paths and the computation of the incarnation of a behaviour generated from a set of designs.
\\
The reader may find in section~\ref{annex:basicLudics} complementary definitions.


\subsection{Designs as Sets of Chronicles or Sets of Paths}
 The notions of paths and chronicles are closely linked. From paths, we obtain chronicles by means of an operation of {\em view}, 
 while from chronicles we obtain paths by means of an operation of {\em shuffle}.
A design may then be viewed either as a set of paths or a set of chronicles. 
Furthermore a design has a {\em base}, \ie, the specification of initial addresses and the polarity of the actions that have such foci.
Finally an address $\xi.i$ (resp.\ an action $\kappa$ with focus $\xi.i$) is either initial if in the base or justified by the address $\xi$ (resp.\ an action $\kappa'$ of focus $\xi$ and of polarity opposite to the one of $\kappa$).

Throughout this paper, we note $\kappa^+$ a positive action, $\kappa^-$ a negative action, $\kappa$ when the polarity may be positive or negative.

\begin{defi}[Base]
A {\tt base} is  a non-empty finite set of sequents: $\Gamma_1 \vdash \Delta_1, \dots, \Gamma_n \vdash \Delta_n$ such that each $\Delta_j$ is a finite set of addresses, at most one $\Gamma_i$ may be empty and the other $\Gamma_i$ contain each exactly one address.
Furthermore if an address appears twice then one occurrence is in one of $\Gamma_i$ of a sequent and the other in one of $\Delta_j$ of another sequent, otherwise an address appears only once.
\end{defi}

An address may appear twice in a base when this address is a cut in terms of Sequent Calculus: interaction corresponds to travelling through such pairs.

\begin{defi}[Based Sequence]
A sequence of actions is based on $\beta = \Gamma_1 \vdash \Delta_1, \dots, \Gamma_n \vdash \Delta_n$ if each proper action of the sequence which is not justified by one of the previous actions in the sequence, \ie, is initial, has its focus in one of $\Gamma_i$ (resp.\ $\Delta_i$) if the action is negative (resp.\ positive).
\end{defi}

\begin{defi}[View]
Let $\pathLL{s}$ be a sequence of actions based on $\beta$, the {\tt view} $\view{\pathLL{s}}$ is the subsequence of $\pathLL{s}$ defined as follows:
\begin{itemize}
\item $\view{\epsilon}:=\epsilon$ where $\epsilon$ is the empty sequence;
\item $\view{\kappa}:=\kappa$ where $\kappa$ is an action;
\item $\view{w\kappa^+}:=\view{w}\kappa^+$ where $\kappa^+$ is a positive action;
\item $\view{w\kappa^-}:=\view{w_0}\kappa^-$ where $\kappa^-$ is a negative action and $w_0$ either is empty if $\kappa^-$ is initial or is the prefix of $w$ ending with the positive action $\kappa^+$ which justifies $\kappa^-$, \ie, the focus of $\kappa^-$ is built from the one of $\kappa^+$.
\end{itemize}
\end{defi}

\begin{defi}[Path]
A {\tt path} $\pathLL{p}$ based on $\beta = \Gamma_1 \vdash \Delta_1, \dots, \Gamma_n \vdash \Delta_n$ is a finite alternated sequence of actions based on $\beta$ such that:
\begin{itemize}
\item  Let $w\kappa^+$ be a prefix of  $\pathLL p$, if $\kappa^+$ is not initial and is justified by $\kappa^-$ then $\kappa^-\in\view{w\kappa^+}$. Roughly speaking, ``there is no view change or jump on a positive action''.
\item Actions in $\pathLL p$ have distinct foci.
\item One of the $\Gamma_i$ is empty iff $\pathLL p$ is non-empty with first action positive.
\item A daimon can only occur as the last action of $\pathLL p$.
\end{itemize}
\end{defi}

We say a path is positive-ended (resp.\ negative-ended) if its last action is positive (resp.\ negative), a path is positive (resp.\ negative) if its first action is positive (resp.\ negative). Chronicles are particular paths: namely non-empty paths such that each non-initial negative action is justified by the immediately previous (positive) action and there is at most one initial negative action.
Then, by construction, a view of a path is a chronicle. 
More generally, if we consider  the set of all prefixes $\pathLL{q}$ of a path $\pathLL{p}$, we obtain the set written $\fullview{\pathLL{p}}$ of chronicles $\view{\pathLL{q}}$ induced by $\pathLL{p}$. Conversely,  it is possible to rebuild the path $\pathLL{p}$ from the set of chronicles $\fullview{\pathLL{p}}$. The relevant operation to build paths from chronicles is the operation of shuffle. The shuffle operation may more generally be defined on paths. The standard shuffle operation consists in interleaving sequences keeping each element and respecting the order. We depart from this definition first by imposing that alternation of polarities should be satisfied, second by taking care of the daimon that should only appear at the end of a path. By this way, being a path is preserved by the shuffle operation.

\begin{defi}[Shuffle of paths]\label{shuffle}~
\begin{itemize}
\item Let $\pathLL{p}$ and $\pathLL{q}$ be two positive-ended negative paths on disjoint bases $\beta$ and $\gamma$,   and such that at least one path does   not end   on a daimon.  The {\tt shuffle} of $\pathLL{p}$ and $\pathLL{q}$, noted $\pathLL{p} \shuffle \pathLL{q}$, is the set of sequences $
\pathLL{p}_1\pathLL{q}_1\dots\pathLL{p}_n\pathLL{q}_n
$, based on $\beta \cup \gamma$ such that:\\
- each sequence $\pathLL{p}_i$ and $\pathLL{q}_i$ is either empty or a positive-ended negative path,\\
- $\pathLL{p}_1\dots\pathLL{p}_n=\pathLL{p}$ and $\pathLL{q}_1\dots\pathLL{q}_n=\pathLL{q}$,\\
- if $\pathLL{p}_n$ ends with $\daimon$ then $\pathLL{q}_n$ is empty.
\item The definition is extended to paths $\pathLL{p}\kappa_1\daimon$ and $\pathLL{q}\kappa_2\daimon$ where  $\pathLL{p}$ and $\pathLL{q}$ are two positive-ended negative paths on disjoint bases: 
$$ \pathLL{p}\kappa_1\daimon\shuffle\pathLL{q}\kappa_2\daimon
	:=
	(\pathLL{p}\kappa_1\daimon\shuffle\pathLL{q})
	\cup 
	(\pathLL{p}\shuffle\pathLL{q}\kappa_2\daimon)$$
\item The definition is extended to paths $\pathLL{r}\pathLL{p}$ and $\pathLL{r}\pathLL{q}$ where $\pathLL{r}$ is a positive-ended path and  $\pathLL{p}$ and $\pathLL{q}$ are two positive-ended negative paths on disjoint bases:  
$$ \pathLL{r}\pathLL{p}\shuffle\pathLL{r}\pathLL{q}:=\pathLL{r}(\pathLL{p}\shuffle\pathLL{q})$$
\end{itemize}
\end{defi}

It is possible to build paths from a given set of chronicles, provided that these chronicles are pairwise  coherent.
Informally, coherence  ensures that, after a common positive-ended prefix, paths are made of negative paths either on disjoint bases or with first actions of same focus. 
 Let $\pathLL{p}$ be a path, it follows from the definition of a view and the coherence relation that $\fullview{\pathLL{p}}$ is a clique of chronicles, \ie, a set of pairwise coherent chronicles. Then we may extend the coherence relation to paths: $\pathLL p$ and $\pathLL q$ are two coherent paths when $\fullview{\pathLL p} \cup \fullview{\pathLL q}$ is a clique of chronicles.

Finally a {\it design} is a clique of chronicles, or equivalently a clique of paths, satisfying supplementary conditions.
Formal definitions of coherence relation as well as design are given in annex~\ref{annex:basicLudics}.

We proved in~\cite{MQ-CF-LMCS11} that a non-empty clique of non-empty paths may give rise to a net of designs. Furthermore, when $\design{R}$ is a net of designs, the closure by shuffle of $\design{R}$, denoted  $\design{R}^{\shuffle}$ is a set of coherent paths (and we say that $\pathLL{p}$ is a positive-ended path {\em of} a net $\design{R}$ whenever $\pathLL{p}$ is in $\design{R}^{\shuffle}$).
That makes explicit the link between paths and chronicles  of a design, and more generally of a net of designs, hence justifies the switch from/to the reading of designs or nets as cliques of chronicles to/from the reading of designs or nets as cliques of paths.


\subsection{Duality and Legal Paths, Interaction}

\begin{defi}[Duality, Legality]
Let $\pathLL p$ be a positive-ended alternate sequence.
\begin{itemize}
\item The {\tt dual} of $\pathLL{p}$ (possibly empty) is the positive-ended alternate sequence of actions $\dual{\pathLL{p}}$ (possibly empty) such that\footnote{The notation $\overline{\kappa}$ is simply $\Overline{(\pm,\xi,I)} := (\mp,\xi,I)$ and may be extended on sequences by $\Overline{\epsilon} := \epsilon$ and $\Overline{w\kappa} := \Overline{w}\,\Overline{\kappa}$.}:
\begin{itemize}
\item If $\pathLL{p} = w\daimon$ then $\dual{\pathLL{p}} := \Overline{w}$.
\item Otherwise $\dual{\pathLL{p}} := \Overline{\pathLL{p}}\daimon$.
\end{itemize}
\item When $\pathLL p$ and $\dual{\pathLL p}$ are positive-ended paths, we say that $\pathLL p$ is {\tt legal}.
\end{itemize}
\end{defi}
There exist paths such that their duals are not paths, as illustrated in example~\ref{exa:paths_dualnotpath}. Nevertheless, the dual of a chronicle is a path.
\begin{exa}\label{exa:paths_dualnotpath}
Let us consider the following design. The reader may find in annex~\ref{annex:basicLudics} the way a design may be drawn as a tree of sequents. For ease of reading, in examples, we present designs as trees of sequents.
 
\noindent\begin{minipage}{.3\textwidth}
$$
\infer{\vdash \xi, \sigma}
	{
	\infer{\xi 0 \vdash \sigma}
		{
		\infer{\vdash \xi 00, \sigma}
			{
			\infer{\sigma 0 \vdash \xi 00}
				{
				\infer{\vdash \sigma 00, \xi 00}
					{
					\xi 000 \vdash \sigma 00
					}
				}
			}
		}
	&
	\infer{\xi 1 \vdash}
		{
		\infer{\vdash \xi 10}
			{
			 \xi 100 \vdash
			}
		}
	}
$$
\end{minipage}
\begin{minipage}{.70\textwidth}
The sequence $s = \scriptstyle (+,\xi,\{0,1\}) (-,\xi 0,\{0\}) (+, \sigma, \{0\}) (-,\xi 1,\{0\}) (+, \xi 10,\{0\})$ is a path based on $\vdash \xi, \sigma$. On the contrary its dual $\dual{s} = \scriptstyle (-,\xi,\{0,1\}) (+,\xi 0,\{0\}) (-, \sigma, \{0\}) (+,\xi 1,\{0\}) (-, \xi 10,\{0\}) \daimon$ is not a path based on the net $\xi \vdash ~ \sigma \vdash$: it does not satisfy the `no positive jump' condition.
\end{minipage}
\end{exa}

Interaction between two designs (or more generally two nets of designs) is at the heart of Ludics: Two designs are orthogonal when interaction converges. Interaction is realized by following a path on one design and its dual on the other, hence interaction stops when encountering a daimon. 
Recall that a path $\pathLL p$ is a path of a net of designs (that may be a single design) $\design D$ if $\pathLL p$ is in $\design D^{\shuffle}$.
We give below a definition equivalent to the seminal one~\cite{DBLP:journals/mscs/Girard01}:
\begin{defi}
Two (nets of) designs ${\design D}$ and ${\design E}$ are orthogonal if there exists a path $\pathLL p$ of ${\design D}$ such that $\dual{\pathLL p}$ is a path of ${\design E}$.
\end{defi}

The path followed by interaction may be defined as an abstract machine in the following way:
\begin{defi}[Interaction path]
Let $(\design{D},\design{R}$) be a convergent, \ie, orthogonal, closed cut-net such that all the cut loci belong to the base of $\design{D}$. The {\tt interaction path} of $\design{D}$ with $\design{R}$, denoted $\normalisationSeq{\design{D}}{\design{R}}$, is the sequence of actions of $\design{D}$ visited during the normalization. The construction goes as follows where $n$ is the number of normalization steps so far obtained:
Let $\kappa_1\dots\kappa_n$ be the prefix of $\normalisationSeq{\design{D}}{\design{R}}$ already defined (or the empty sequence if $n=0$).
\begin{itemize}
\item Either the interaction stops: if the main design is a subdesign of $\design{R}$ then $\normalisationSeq{\design{D}}{\design{R}}=\kappa_1\dots\kappa_n$, otherwise the main design is a subdesign of $\design{D}$ then $\normalisationSeq{\design{D}}{\design{R}}=\kappa_1\dots\kappa_n\daimon$.
\item Or, let $\kappa^+$ be the first proper action of the closed cut-net obtained after step $n$, $\normalisationSeq{\design{D}}{\design{R}}$ begins with $\kappa_1\dots\kappa_n\overline{\kappa^+}$ if the main design is a subdesign of $\design{R}$, or it begins with $\kappa_1\dots\kappa_n\kappa^+$ if the main design is a subdesign of $\design{D}$.
\end{itemize}
\noindent We note $\normalisationSeq{\design{R}}{\design{D}}$ the sequence of actions visited in $\design{R}$ during the normalization with $\design{D}$.
\end{defi}

It follows from the definition that $\normalisationSeq{\design D}{\design R} = \dual{\normalisationSeq{\design R}{\design D}}$.
Abusively, we also call interaction path the sequence of actions followed by interaction, even if the cut-net is not convergent, \ie, the definition is followed as long as divergence does not occur.

 The closure by bi-orthogonality of a set of designs allows to recover the notion of type, called in Ludics {\em behaviour}.
The study of these behaviours is in some aspects more graspable when interaction is defined on designs presented as cliques of paths. 
A visitable path in a behaviour, \ie, in a design of this behaviour, is a path that may be traversed by interaction. 

 \begin{defi}[Visitability]
 Let $E$ be a set of designs of same base.
Let $\pathLL p$ be a path, $\pathLL p$ is {\tt visitable} in $E$ if there exist a design $\design D$ in $E$ and a net $\design R$ in ${E}^\perp$ such that $\pathLL p = \normalisationSeq{\design{D}}{\design{R}}$. 
\\
We write $V_{E}$ the set of visitable paths of $E$.
 \end{defi}
 
Remark that a visitable path is necessarily a positive-ended path as it is the result of a normalization. The following proposition follows immediately from the definitions of visitability and orthogonality:
\begin{prop}\label{prop:dual_visitable}~\cite{MQ-CF-LMCS11}
Let $\behaviour{A}$ be a behaviour, $V_\behaviour{A} = \dual{V_{\behaviour{A}^\perp}}$.
\end{prop}


One of the complex aspects of Ludics is to be able to characterize the incarnation of a behaviour. This incarnation is in some sense the essence of a behaviour: it contains all visitable paths of the behaviour. However not all paths in the incarnation are visitable. Hence being able to identify visitable paths is a real challenge.
Conversely, there exist sets of paths that are not sets of visitable paths of behaviours. In terms of Game Semantics, a set of plays/paths is not necessarily the set of plays/visitable paths of a type/behaviour. The next section will be devoted to such a characterization.
We set here simple properties relating designs, visitable paths and incarnation of behaviours.

\begin{defi}[Completion of designs]\label{defi:closure}
Let $\design{D}$ be a design of base $\beta$, the {\tt completion} of $\design{D}$, noted $\design{D}^c$, is the design of base $\beta$ obtained from $\design{D}$ in the following way:\\
\centerline{
	$\design{D}^c := \design D \cup \{\chronicle c \kappa^-\daimon ~;~ \chronicle c \in \design D, \chronicle c \kappa^-\not\in \design D, \chronicle c\kappa^-\daimon$ is a chronicle of base $\beta\}$
}\\
Let $\design R$ be a net of designs, the {\em completion} of $\design R$ also written $\design R^c$ is the net of completions of designs of $\design R$.
\end{defi}

\noindent Note that $\design{D}^c$ is a design:
\begin{itemize}
\item An action $\kappa^-$ is either initial or justified by the last action of $\chronicle c$ (in $\chronicle c \kappa^-$) hence linearity is satisfied.
\item As chronicles $\chronicle c$ are in $\design D$ and $\kappa^-$ are negative actions, chronicles $\chronicle c \kappa^-\daimon$ are pairwise coherent and coherent with chronicles of $\design D$.
\end{itemize}

\begin{prop}\label{prop:completed_subdesign}
Let $\design{D}$ be a design in a behaviour $\behaviour{A}$, consider a design $\design{C} \subset \design{D}$ then $\design{C}^c \in \behaviour{A}$.
\end{prop}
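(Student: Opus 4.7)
The plan is to invoke the path characterization of orthogonality (lemma~\ref{lem:ortho_path}): given any $\design{R} \in \behaviour{A}^\perp$, I will construct from the witnessing path of $\design{D}\perp\design{R}$ a witnessing path of $\design{C}^c\perp\design{R}$, by truncating at the first point where the view leaves $\design{C}$ and appending the daimon supplied by the completion. Since $\behaviour{A}=\behaviour{A}^{\perp\perp}$, establishing $\design{C}^c\perp\design{R}$ for every $\design{R}\in\behaviour{A}^\perp$ will yield $\design{C}^c\in\behaviour{A}$.

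Fix $\design{R}\in\behaviour{A}^\perp$. From $\design{D}\in\behaviour{A}$ we have $\design{D}\perp\design{R}$, so lemma~\ref{lem:ortho_path} supplies a path $\pathLL{p}=\kappa_1\dots\kappa_n$ of $\design{D}$ with $\dual{\pathLL{p}}$ a path of $\design{R}$. Write $w_k=\kappa_1\dots\kappa_k$, so that each $\view{w_k}$ is a chronicle of $\design{D}$. If every $\view{w_k}$ already lies in $\design{C}$, then $\pathLL{p}$ is itself a path of $\design{C}\subset\design{C}^c$ and we are done. Otherwise, let $k$ be the smallest index such that $\view{w_k}\notin\design{C}$.

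The crucial step is to show that $\kappa_k$ must be negative. Suppose, for a contradiction, that $\kappa_k$ is positive. Then $k\ge 2$, because in the positive-base case the cliques $\design{C}$ and $\design{D}$, sharing the same base and being non-empty, must begin with the same first positive action. Hence $\view{w_{k-1}}\in\design{C}$ is a chronicle ending with the negative action $\kappa_{k-1}$, and $\view{w_k}=\view{w_{k-1}}\kappa_k$. But coherence forces the positive extension of a chronicle ending negatively to be unique inside any design, and since $\design{C}\subset\design{D}$ the unique positive extension of $\view{w_{k-1}}$ in $\design{C}$ must coincide with the one in $\design{D}$, namely $\kappa_k$. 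This forces $\view{w_k}\in\design{C}$, contradicting the choice of $k$. Therefore $\kappa_k$ is negative and $\view{w_k}=\view{w'}\kappa_k$, where $w'$ is the prefix of $w_{k-1}$ ending with the justifier of $\kappa_k$ (or empty if $\kappa_k$ is initial). Since $w'$ is a strict prefix of $w_k$, minimality of $k$ gives $\view{w'}\in\design{C}$, and as $\view{w'}\kappa_k\notin\design{C}$, the definition of completion provides $\view{w'}\kappa_k\daimon\in\design{C}^c$.

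Setting $\pathLL{q}:=w_k\daimon$, it remains to apply lemma~\ref{lem:ortho_path}. Alternation, justification, linearity, and the negative jump condition for $\pathLL{q}$ are inherited from $\pathLL{p}$, and the terminal $\daimon$ is legitimate because $\kappa_k$ is negative. For $j<k$, the views $\view{w_j}$ lie in $\design{C}$ by minimality; $\view{w_k}=\view{w'}\kappa_k$ is a prefix of the added chronicle $\view{w'}\kappa_k\daimon$ and hence lies in $\design{C}^c$; and $\view{w_k\daimon}$ is that added chronicle itself. Thus $\pathLL{q}$ is a path of $\design{C}^c$. Dually, $\dual{\pathLL{q}}=\Overline{w_k}$ is a positive-ended prefix of $\dual{\pathLL{p}}$, hence a path of $\design{R}$. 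So $\design{C}^c\perp\design{R}$, and as $\design{R}$ was arbitrary, $\design{C}^c\in\behaviour{A}^{\perp\perp}=\behaviour{A}$. The only real subtlety is the coherence argument pinning down the polarity of $\kappa_k$; once that is granted, the definition of completion is tailor-made to supply the closing daimon of $\pathLL{q}$.
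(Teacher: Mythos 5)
Your proof is correct and follows essentially the same route as the paper's: run the interaction of $\design{D}$ against an arbitrary $\design{R}\in\behaviour{A}^\perp$, locate the first point where the visited path leaves $\design{C}$, use the uniqueness of the positive extension of a negatively-ended chronicle in a design to show the exit must occur on a negative action, and close the truncated path with the daimon supplied by the completion. The paper phrases the cut point as the longest positive-ended prefix that is a path of $\design{C}$ rather than the first view falling outside $\design{C}$, but the argument is the same.
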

\begin{proof}
Let $\design{E} \in \behaviour{A}^\perp$. Hence $\design{E} \perp \design{D}$. Let $\pathLL{p}$ be the longest positive-ended path in the design $\design{C}$ that is a prefix of $\normalisationSeq{\design{D}}{\design{E}}$. 
Either $\pathLL{p} = \normalisationSeq{\design{D}}{\design{E}}$, hence $\design{E} \perp \design{C}$, and also $\design{E} \perp \design{C}^c$. 
Or there exist actions $\kappa^-$, $\kappa^+$ and a sequence $w$ such that $\normalisationSeq{\design{D}}{\design{E}} = \pathLL{p}\kappa^-\kappa^+w$. 
Consider the chronicle $\chronicle{c}$ such that $\view{\pathLL{p}\kappa^-} = \chronicle{c}\kappa^-$. By construction, $\chronicle{c} \in \design{C}$. Either $\chronicle{c}\kappa^- \in \design{C}$ hence also $\chronicle{c}\kappa^-\kappa^+ \in \design{C}$ as $\design{C} \subset \design{D}$ and there is a unique positive action after a negative action, contradiction as $\pathLL{p}$ is then not maximal. 
Or $\chronicle{c}\kappa^-\daimon \in \design{C}^c$ hence $\design{E} \perp \design{C}^c$.
\end{proof}

The proposition~\ref{prop:completed_subdesign} is also true when we have nets of designs instead of designs.

\begin{prop}\label{prop:visitable-carac}
Let $E$ be a set of designs of same base, let $\pathLL p$ be a path of a design of $E$, then
$\pathLL p$ is visitable in $E$ iff $\fullview{\,\dual{\pathLL p}\,}^c \in E^{\perp}$.
\end{prop}
\begin{proof}
Suppose that $\pathLL p$ is visitable in $E$. Then there exist $\design D \in E$ and $\design R \in E^\perp$ such that $\pathLL p =  \normalisationSeq{\design{D}}{\design{R}}$. 
Furthermore $\dual{\pathLL p}$ is a path in $\design R$, hence $\fullview{\,\dual{\pathLL p}\,} \subset \design R$. It follows from proposition~\ref{prop:completed_subdesign} that $\fullview{\,\dual{\pathLL p}\,}^c \in E^\perp$. \\
Suppose that $\fullview{\,\dual{\pathLL p}\,}^c \in E^{\perp}$, let $\design D$ be the design in $E$ such that $\pathLL p$ is a path of $\design D$. Note that $\design D \perp \fullview{\,\dual{\pathLL p}\,}^c$ and that $\pathLL p = \normalisationSeq{\design{D}}{\fullview{\,\dual{\pathLL p}\,}^c}$.
Hence $\pathLL p$ is visitable in $E$.
\end{proof}

\begin{cor}\label{cor:chemvisitable}
Let $\behaviour E$ be a behaviour, let $\pathLL p$ be a path of a design of $\behaviour E$, then
$\pathLL p$ is visitable in $\behaviour E$ iff $\fullview{\pathLL p}^c \in \behaviour E$.
\end{cor}
\begin{proof}
The path $\pathLL p$ is visitable in $\behaviour E$ iff $\dual{\pathLL p}$ is visitable in ${\behaviour E}^\perp$, and we know by the previous proposition that the path 
$\dual{\pathLL p}$  is visitable in ${\behaviour E}^\perp$ iff $\fullview{\,\ddual{\pathLL p}\,}^c \in \behaviour E^{\perp\perp}=\behaviour E$.  
\end{proof}

Proposition~\ref{prop:visitable-carac} gives a means to compute the set of visitable paths of a set $E$ of designs of the same base when $E$ is a finite set of finite designs: take each positive-ended path $\pathLL{p}$ of some design of $E$, test if for all design $\design D$ in $E$ we have $\design D \perp \fullview{\dual{\pathLL p}}^c$.

\section{To Grasp Behaviours from Visitable Paths}\label{section:behaviours}
A behaviour is fully characterized by its incarnation, and this latter by the set of visitable paths of the behaviour: we proved in~\cite{MQ-CF-LMCS11} that designs in the incarnation coincide with maximal cliques of visitable paths satisfying extra conditions.

The question now is: In which extent a given set of paths may 
characterize
a behaviour? More precisely, what conditions should satisfy a  set of paths $S$ to be the set of visitable paths of some behaviour $\behaviour B_S$, $\behaviour B_S$ not given {\it a priori}? Such a set of paths will be called {\em ludicable}. Obviously the paths of $S$ should be legal. Moreover, 
$S$ has to be prefix and daimon closed. Then, maximal cliques of $S$ are relevant candidates for retrieving designs of $|\behaviour B_S|$. 
However, the extra conditions mentioned in~\cite{MQ-CF-LMCS11} assume already given the behaviour! On the contrary, we give in this paper necessary and sufficient conditions for $S$ to be ludicable without the behaviour given {\em a priori}.  
First, a path $\pathLL p\kappa^+$  obtained as a shuffle of two coherent elements of $S$ may be visited during an interaction as soon as $\pathLL p\daimon$ belongs to $S$. Such a constraint is satisfied if $S$ is {\it positively saturated}: positive saturation ensures that there are enough specific maximal cliques (called {\em positively saturated} cliques) that are really relevant candidates for building a behaviour $\behaviour B_S$.     
 Second, if   a negative-ended path  $\pathLL r$ is such that, for   every design generated by such a relevant clique, either    $\pathLL r$ is a path of this design or exits from it on a positive action, then the path $\pathLL r\daimon$ should necessarily be visited during an interaction. The condition of {\it negative saturation} expresses that such paths $\pathLL r\daimon$ should belong effectively to $S$. Last, these two constraints should be satisfied by the set $S$ and its dual $\dual S$. This guarantees that no unexpected designs may be generated by cliques of $\dual S$, hence no additional path may be visitable. 

In order to fully answer the question of ludicability, \ie, of characterizing sets of paths that give rise to behaviours (subsection~\ref{subsec:lucability}), we begin with a study of positively saturated cliques (subsection~\ref{subsec:pos_saturated_cliques}) where we prove that designs in the incarnation of a behaviour are generated by cliques that satisfy this constraint. This is a necessary first step towards a study of the two main constraints that should be satisfied for ludicability: positive and negative saturation (subsection~\ref{subsec:PosNegSaturation}). We end this section by providing a constructive way for closing a set of paths with respect to ludicability (subsection~\ref{subsec:LudicableClosures}).

Let us notice that such a question: characterizing sets of paths which are  sets of visitable paths of some behaviour seems meaningless in terms of Game Semantics. Indeed, as Game Semantics is defined, the notion of plays ontologically depends on the primitive notion of arena, which itself is relative to a given type. The question makes sense in a realisability approach, where programs and formulas come together, or even more where programs, as paths or designs in Ludics, may arise before types. Nevertheless, such a question (and its answer)  may suggest extensions in Game semantics:  in order to be able to define new types, why not consider plays  before defining  types?
Indeed, in subsection~\ref{subsec:sequoid}, we give in Ludics an elegant definition of the (linear) sequoid connective previously defined in terms of Game Semantics by Churchill et al~\cite{DBLP:journals/apal/ChurchillLM13}. An extension of our approach to exponentials is however necessary to fully address the issue.


\subsection{Positively saturated cliques}\label{subsec:pos_saturated_cliques}

We already proved in~\cite{MQ-CF-LMCS11} that a design in the incarnation of a behaviour $\behaviour B$ is a maximal clique $C$ of visitable paths such that $\dual C$ is finite-stable and saturated. 
$\dual C$ finite-stable means that if a strictly increasing sequence $(\pathLL p_n)$ of paths in $\dual C$ is such that $(\bigcup \fullview{\pathLL p_n})^c \in \behaviour B^\perp$ then this sequence $(\pathLL p_n)$ is finite.
$\dual C$ saturated means that if $\pathLL p$ is a prefix of an element of $C$ and $\pathLL p\kappa^-\daimon \in V_{\behaviour B}$ then $\pathLL p\kappa^-$ is a prefix of an element of $C$. While finite-stability is not really a constraint, since such strictly increasing sequences of paths generate designs either in a behaviour or in its dual\footnote{A more complete explanation is given in subsection~\ref{subsec:lucability}.}, saturation is not completely satisfying for our issue, as it presupposes an already given behaviour. We replace the condition of saturation on the dual of a clique by a condition of positive saturation on the clique itself. Below we first define positive saturation for cliques, then we give  an example that explains this notion. Last, we prove that this new property still enables to characterize designs of the incarnation of a behaviour.

\begin{defi}[Positively saturated clique]
Let $S$ be a set of positive-ended paths. A clique $C$ of $S$ is {\tt positively saturated for} $S$ when: for all $\pathLL m\in C$, $\pathLL n\kappa^-\kappa^+\in C$, if $\pathLL m\kappa^-\daimon\in S$ then $\pathLL m\kappa^-\kappa^+\in S$ (hence, if $C$ is a maximal clique, $\pathLL m\kappa^-\kappa^+\in C$).
 \end{defi} 

Let us give now an example that explains this notion of positive saturation.

\begin{exa}
Let $\behaviour B = \{\design D, \design E\}^{\perp\perp}$ where the designs $\design D$ and $\design E$ are given below.
$$
\design D = 
	\infer[\kappa^+]{\vdash \xi}
		{
		\infer[\kappa_1^-]{\xi1 \vdash}
			{
			\infer[\kappa_1^+]{\vdash \xi11}{\xi111 \vdash}
			}
		&
		\infer[\kappa_2^-]{\xi2 \vdash}
			{
			\infer[\kappa_{12}^+]{\vdash \xi22}{\xi221 \vdash}
			}
		&
		\infer[\kappa_3^-]{\xi3 \vdash}
			{
			\infer[\daimon]{\vdash \xi33}{}
			}
		}
~~~~~
\design E = 
	\infer[\kappa^+]{\vdash \xi}
		{
		\infer[\kappa_1^-]{\xi1 \vdash}
			{
			\infer[\daimon]{\vdash \xi11}{}
			}
		&
		\infer[\kappa_2^-]{\xi2 \vdash}
			{
			\infer[\kappa_{32}^+]{\vdash \xi22}{\xi223 \vdash}
			}
		&
		\infer[\kappa_3^-]{\xi3 \vdash}
			{
			\infer[\kappa_3^+]{\vdash \xi33}{\xi333 \vdash}
			}
		}
$$
The following paths (in particular) are visitable in $\behaviour B$:
	$\pathLL m = \kappa^+\kappa_1^-\kappa_1^+\kappa_2^-\kappa_{12}^+$
	and
	$\kappa^+\kappa_3^-\kappa_3^+$.
Let $C$ be the set containing these two paths and their positive-ended prefixes.
Then $C$ is a maximal clique of visitable paths of $\behaviour B$.
Remark that  $\pathLL m\kappa_3^-\daimon \in V_{\behaviour B}$ however $\pathLL m\kappa_3^-\kappa_3^+ \not\in V_{\behaviour B}$: $C$ is not positively saturated for $V_{\behaviour B}$. Note that $\fullview{C} \not\in \behaviour{B}$.
\end{exa}

We are able now to characterize designs in the incarnation:

\begin{prop}\label{prop:caracDessinIncarnation}
Let $\behaviour B$ be a behaviour,
$\design D$ is a design in the incarnation $|\behaviour B|$ iff $\design D = \fullview{C}$ where $C \subset V_{\behaviour B}$ is a maximal clique, positively saturated for $V_{\behaviour B}$ and such that $\dual C$ is finite-stable for $\behaviour B^\perp$.
\end{prop} 
\begin{proof}
($\Rightarrow$) We prove in~\cite{MQ-CF-LMCS11} that there exists a maximal clique $C$ of $V_{\behaviour B}$ such that $\dual C$ is finite-stable and saturated for $\behaviour B^\perp$ and $\fullview{C}=\design D$. 
Let  $\pathLL m$ and $\pathLL n\kappa^-\kappa^+$ be two elements of $C$, and suppose that  $\pathLL m\kappa^-\daimon \in V_{\behaviour B}$.
Since $\pathLL m$ is an element of $C$, and since $\pathLL m\kappa^-\daimon \in V_{\behaviour B}$, $\pathLL m\kappa^-$ is a prefix of an element of $C$ as $\dual C$ is saturated, \ie, there is a path $\pathLL m\kappa^-\kappa_0^+\pathLL w \in C$, hence $\pathLL m\kappa^-\kappa_0^+ \in C$ as $C$ is a maximal clique and $V_{\behaviour B}$ is prefix-closed. Since  $\pathLL m\kappa^-\kappa^+$ is a path of $\fullview C$, the only possibility is $\kappa^+ = \kappa_0^+$ as $C$ is a clique, \ie, $\pathLL m\kappa^-\kappa^+\in C$.

($\Leftarrow$) Let $\design E \in \behaviour B^\perp$. We consider the interaction between $\design D$ and $\design E$. First we prove that positive-ended prefixes of $\normalisationSeq{\design D}{\design E}$ are elements of $C$. 
If $C = \{\daimon\}$ the result is immediate. 
Otherwise note that if $\design D$ is a positive design then its first (hence positive) action is an element of $C$ (as $C$ is a clique), if $\design D$ is a negative design then the empty sequence is an element of $C$. 
Thus $\design D$ being of positive or negative base, there exists a prefix of $\normalisationSeq{\design D}{\design E}$ that is an element of $C$. Suppose that $\pathLL p$ is a prefix of an element of $C$ whereas $\pathLL p\kappa$ is not a prefix of an element of $C$ and $\pathLL p\kappa$ is a prefix of $\normalisationSeq{\design D}{\design E}$:
		\begin{itemize}
		\item Either $\kappa$ is a positive action. The path $\pathLL p$ is negative-ended. The path $\pathLL p$ cannot be empty otherwise $\kappa$, as a prefix of $\normalisationSeq{\design D}{\design E}$, is the first action of $\design D$ hence also of paths of $C$, contradicting the hypothesis. Then $\pathLL p$ is of the form $\pathLL p'\kappa^-$. As $\pathLL p$ is an element of $C$ and elements of $C$ are positive-ended paths, there exists $\pathLL q \in C$ of the form $\pathLL p'\kappa^-\kappa^+\pathLL p''$. As $C$ is a clique and as normalization is deterministic, $\kappa^+ = \kappa$. Thus $\pathLL p'\kappa^-\kappa \in C$, contradiction.
		\item Or $\kappa$ is a negative action. Note that $\fullview{\pathLL p}^c \in \behaviour B$ as $\pathLL p \in V_{\behaviour B}$ and that $\pathLL p\kappa^-\daimon = \normalisationSeq{\fullview{\pathLL p}^c}{\design E}$ thus $\pathLL p\kappa^-\daimon \in V_{\behaviour B}$. Furthermore $\kappa^- = \kappa$ as normalization is deterministic. As $\kappa^-$ appears in $\fullview{C}$, there exists a path $\pathLL n\kappa^-\kappa'^+ \in C$. Hence as $C$ is positively saturated for $V_{\behaviour B}$, $\pathLL p\kappa^-\kappa'^+ \in V_{\behaviour B}$. 
Hence $\pathLL p\kappa^-\kappa'^+ \in C$ as $C$ is a maximal clique.
		\end{itemize}
We prove now by contradiction that $\design D \perp \design E$ analysing each possible case. Suppose $\design D \not\perp \design E$:
		\begin{itemize}
		\item Either normalization goes on infinitely: there is a strictly increasing sequence $(\pathLL p_n)$ of $C$ such that 
$\pathLL p_n$ is a path of $\design D$ and $\overline{\pathLL p_n}$ is a path of $\design E$.
In particular for all $n$ $\fullview{\dual{p_n}} \subset \design E$. Then as $\dual C$ is finite-stable, the sequence $(\pathLL p_n)$ is finite. Contradiction.
		\item Or 
if $\pathLL p\kappa^+$ is a path of $\design D$, \ie, $\pathLL p\kappa^+ \in C$, $\overline{\pathLL p}$ is a path of $\design E$ but $\overline{\pathLL p\kappa^+}$ is not a path of $\design E$.
We have that $\pathLL p\kappa^+ \in V_{\behaviour B}$, thus $\fullview{\pathLL p\kappa^+}^c \in \behaviour B$ (corollary~\ref{cor:chemvisitable}). Thus contradiction as $\design E \in \behaviour B^\perp$.
		\item Or 
if $\overline{\pathLL p\kappa^-}$ is a path of $\design E$, $\pathLL p$ is a path of $\design D$ but $\pathLL p\kappa^-$ is not a path of $\design D$.
Hence in particular $\kappa^-$ is not present in $\design D$.
Furthermore $\pathLL p$ is a (positive-ended) prefix of $\normalisationSeq{\design D}{\design E}$ thus $\pathLL p \in C$ as a consequence of the first part of this proof. Then $\fullview{\pathLL p}^c \in \behaviour B$ (corollary~\ref{cor:chemvisitable}).
But $\pathLL p\kappa^-\daimon = \normalisationSeq{\fullview{\pathLL p}^c}{\design E}$ thus $\pathLL p\kappa^-\daimon \in V_{\behaviour B}$. 
Thus $\pathLL p\kappa^-\daimon$ could be added to $C$, contradiction with the fact that $C$ is a maximal clique.
		\end{itemize}
Hence $\design D \in \behaviour B$. Suppose that $\design D \not\in |\behaviour B|$, then there exists a design $\design E \in \behaviour B$ such that $\design E \subsetneq \design D$, \ie, there exists a path $\pathLL p\kappa^-\kappa^+ \in C$ such that $\pathLL p$ is a path in $\design E$, $\pathLL p\kappa^-\kappa^+$ is not a path in $\design E$. As $\pathLL p\kappa^-\kappa^+ \in C$, there exists a design $\design F \in \behaviour B^\perp$ such that $\dual{\pathLL p\kappa^-\kappa^+}$ is a path in $\design F$. Hence $\design E \not\perp \design F$, contradiction.
\end{proof}


\subsection{Positive Saturation, Negative Saturation of a set of paths}\label{subsec:PosNegSaturation}Let us go
back to the initial question: let $S$ be a set of legal paths, is $S$ the set of visitable paths of a behaviour? We know from proposition~\ref{prop:caracDessinIncarnation} that we should base our analysis on the set of positively saturated maximal cliques of $S$. As suggested in example~\ref{exa:PosNegSaturation}, each path in $S$ should be in some positively saturated maximal clique of $S$. Another condition, called {\em negative saturation}, should also be satisfied: a path $\pathLL p\kappa^-\daimon$ contained in the design generated by some positively saturated maximal clique should be in $S$ as soon as the design $\fullview{\overline{\pathLL p\kappa^-}}$ is orthogonal to all positively saturated maximal clique of $S$ (see example~\ref{exa:PosNegSaturation}). Before that, we recall the following points: let $\pathLL m\pathLL n$ be a visitable path with $\pathLL m$ positive-ended then $\pathLL m$ is also a visitable path, let $\pathLL p\kappa^+$ be a visitable path then $\pathLL p\daimon$ is also a visitable path. Hence a set $S$ should be prefix and daimon closed:

\begin{defi}[Prefix closure, daimon closure]
Let $S$ be a set of legal paths of same base:
	\begin{itemize}
	\item $S$ is {\tt prefix-closed} if all positive-ended prefix of an element of $S$ is an element of $S$.
	\item $S$ is {\tt daimon-closed} if for all path $\pathLL{p}\kappa^+ \in S$, $\pathLL{p}\daimon \in S$.
	\end{itemize}
\end{defi}

\begin{exa}\label{exa:PosNegSaturation}
Let us first remark that each path in $S$ should be in some positively saturated maximal clique of $S$.
Let $S$ be the prefix and daimon closure of the two paths $\kappa^+\kappa_1^-\kappa_1^+\kappa_2^-\kappa_2^+$ and $\kappa^+\kappa_2^-\daimon$. We only have the following maximal cliques of $S$:
	\begin{itemize}
	\item $C_0 = \{\daimon\}$
	\item $C_1 = \{\kappa^+, \kappa^+\kappa_1^-\daimon, \kappa^+\kappa_2^-\daimon\}$
	\item $C_2 = \{\kappa^+, \kappa^+\kappa_1^-\kappa_1^+,\kappa^+\kappa_1^-\kappa_1^+\kappa_2^-\daimon,\kappa^+\kappa_2^-\daimon\}$
	\item $C_3 = \{\kappa^+, \kappa^+\kappa_1^-\kappa_1^+,\kappa^+\kappa_1^-\kappa_1^+\kappa_2^-\kappa_2^+\}$
	\end{itemize}
Note that $C_3$ is not positively saturated: $\kappa^+ \in C_3$, $\kappa^+\kappa_1^-\kappa_1^+\kappa_2^-\kappa_2^+ \in C_3$, $\kappa^+\kappa_2^-\daimon \in S$ but $\kappa^+\kappa_2^-\kappa_2^+ \not\in S$. Thus there is no positively saturated maximal clique containing the path $\kappa^+\kappa_1^-\kappa_1^+\kappa_2^-\kappa_2^+$. By the way $S$ is not the set of visitable paths of some behaviour.

Let us now motivate the need for negative saturation. Let $S$ be the set of paths $\{\daimon, \kappa^+, \kappa^+\kappa_1^-\kappa_1^+, \kappa^+\kappa_1^-\daimon, \kappa^+\kappa_2^-\daimon\}$: $S$ is daimon and positive prefix closed. We have the following maximal cliques of $S$:
	\begin{itemize}
	\item $C_0 = \{\daimon\}$
	\item $C_1 = \{\kappa^+, \kappa^+\kappa_1^-\daimon, \kappa^+\kappa_2^-\daimon\}$
	\item $C_2 = \{\kappa^+, \kappa^+\kappa_1^-\kappa_1^+, \kappa^+\kappa_2^-\daimon\}$
	\end{itemize}
Notice that these three cliques are positively saturated.
However the design $\fullview{C_2}$ contains the path $\kappa^+\kappa_1^-\kappa_1^+\kappa_2^-\daimon$.
Remark that the design $\fullview{\overline{\kappa^+\kappa_1^-\kappa_1^+\kappa_2^-}}$ is orthogonal to all three designs $\fullview{C_i}$ but the path $\kappa^+\kappa_1^-\kappa_1^+\kappa_2^-\daimon$ is not in $S$.
Here too, $S$ is not the set of visitable paths of some behaviour.
\end{exa}

\begin{defi}[Positive saturation condition]
 A set $S$ of legal paths of same base is {\tt positively saturated} if for each path $\pathLL p\in S$, there exists a positively saturated maximal clique $C$ of $S$ such that $\pathLL p\in C$.
\end{defi}

In the following proposition, we remark that, if it exists, there is a standard positively saturated clique. This clique $C_{\pathLL p}$ is in fact the smallest positively saturated maximal clique containing a given path $\pathLL p$.

 \begin{prop}\label{prop:Cp}
Let $S$ be a set of legal paths of same base, prefix and daimon closed.
The three conditions are equivalent:
\begin{itemize}
\item[$i)$] $S$ is positively saturated.
\item[$ii)$] For each path $\pathLL p\in S$, the clique $C_{\pathLL p}$ defined below is positively saturated for $S$:
$$C_{\pathLL p}=\{ \pathLL q ~;~ \pathLL q \in S, \fullview{\pathLL q}\subset\fullview{\pathLL p}\}\cup\{\pathLL w\kappa^-\daimon ~;~ \pathLL w\kappa^-\daimon\in S,  \fullview{\pathLL w}\subset\fullview{\pathLL p}, \fullview{\pathLL w\kappa^-}\not\subset\fullview{\pathLL p}\}$$
\item[$iii)$] For each path $\pathLL p\in S$, there exists a positively saturated maximal clique $C$ of $S$ without infinite strictly increasing sequence and such that $\pathLL p\in C$.
\end{itemize}
 \end{prop}
 
\begin{proof}
$i) \Rightarrow ii)$ Suppose that there exists a path $\pathLL p\in S$ such that $C_{\pathLL p}$ is not positively saturated. 
This means that there exist two paths  $\pathLL m\in C_{\pathLL p}$ and  $\pathLL n\kappa^-\kappa^+\in C_{\pathLL p}$, such that $\pathLL m\kappa^-\daimon\in S$ but  $\pathLL m\kappa^-\kappa^+\notin S$.
Note that $\kappa^+ \neq \daimon$. Note also that $\pathLL m$ cannot end with a daimon, otherwise $\pathLL m\kappa^-\kappa^+$ is not a path. Therefore $\fullview{\pathLL m}\subset\fullview{\pathLL p}$ and $\fullview{\pathLL n\kappa^-\kappa^+}\subset\fullview{\pathLL p}$.
As $S$ is positively saturated, there exists a positively saturated maximal clique $C$ containing $\pathLL p$, hence also $\pathLL m$ and  $\pathLL n\kappa^-\kappa^+$ thus $\pathLL m\kappa^-\kappa^+ \in S$. Contradiction.

$ii) \Rightarrow iii)$ Let us remark that the clique $C_{\pathLL p}$ is a maximal clique of $S$. Indeed, let $\pathLL m$ be an element of $S$ which is coherent with $\pathLL p$. Either $\fullview{\pathLL m}\subset\fullview{\pathLL p}$ then $\pathLL m\in C_{\pathLL p}$, or there exists a prefix $\pathLL{m}_0\kappa^-$ of $\pathLL m$ such that $\fullview{\pathLL{m}_0}\subset\fullview{\pathLL p}$ and $\kappa^-$ does not occur in $\pathLL p$. In such a case, $\pathLL{m}_0\kappa^-\daimon\in C_{\pathLL p}$ and either $\pathLL m=\pathLL{m}_0\kappa^-\daimon$ which belongs to $C_{\pathLL p}$, or $\pathLL m\not\coh\pathLL{m}_0\kappa^-\daimon$, \ie, $\pathLL m\not\coh C_{\pathLL p}$. 
Finally the path $\pathLL p$ has a finite length, hence a path in $\fullview{\pathLL p}$ has a length bounded by the length of $\pathLL p$ (no possibility of using twice an action by linearity condition). A path of $C_{\pathLL p}$ is either a path in $\fullview{\pathLL p}$, hence its length is bounded by the length of $\pathLL p$, or extended by a sequence of the form $\kappa^-\daimon$ as $\daimon$ should end a path, thus its length is bounded by the length of $\pathLL p$ plus $2$.
Thus there cannot be infinite strictly increasing sequences in $C_{\pathLL p}$.

$iii) \Rightarrow i)$ This result is obvious.
\end{proof}

Let us now define the negative saturation condition.
The negative saturation condition ensures that, as soon as a path $\pathLL p$ belongs to $S$, and $\overline{\pathLL p\kappa^-}$ generates a design which is orthogonal with all positively satured maximal cliques of $S$, then $\pathLL p\kappa^-\daimon\in S$.
 
\begin{defi}[Negative saturation condition]
Let $S$ be a set of legal paths of same base, prefix-closed and daimon-closed.
\\
$S$ satisfies {\tt negative saturation} if for all path $\pathLL p \in S$ such that $\pathLL p\kappa^-\daimon$ is a legal path and for all positively saturated maximal clique $C$ of $S$ we have that $\fullview{C} \perp \fullview{\overline{\pathLL p\kappa^-}}^c$,
then the path $\pathLL p\kappa^-\daimon\in S$.
\end{defi}

The following lemma gives rise to equivalent formulations for the negative saturation condition.

\begin{lem}\label{lem:negativesat}
 Let $S$ be a set of legal paths of same base, prefix-closed and daimon-closed. Let $\pathLL p$ be a path belonging to $S$, $\pathLL p\kappa^-\daimon$ be a legal path. Then, the three following conditions are equivalent:
	\begin{itemize}
	\item[$i)$] For all positively saturated maximal clique $C$ of $S$, we have $\fullview C\perp\fullview{\overline{\pathLL p\kappa^-}}^c$.
	\item[$ii)$] For all positively saturated maximal clique $C$ of $S$ without infinite strictly increasing sequence, we have $\fullview C\perp\fullview{\overline{\pathLL p\kappa^-}}^c$.
	\item[$iii)$] For all positively saturated maximal clique $C$ of $S$, for all legal path $\pathLL m$ in the design $\fullview C$, for all negative action $\kappa_0^-$ such that $\overline{\pathLL m\kappa_0^-}$ is a path in $\fullview{\overline{\pathLL p\kappa^-}}$ then $\pathLL m\kappa_0^-$ is a path of $\fullview C$. 
	\end{itemize}
 \end{lem}
 
 \begin{proof} 
$i)\Rightarrow ii)$ The result is obvious.

$ii)\Rightarrow i)$ Let $C$ be a positively saturated maximal clique of $S$ and suppose that $\fullview C\not\perp\fullview{\overline{\pathLL p\kappa^-}}^c$. As paths in $\fullview{\overline{\pathLL p\kappa^-}}^c$ have finite length, divergence occurs in a finite number of steps, say $n$. Because $\fullview{\overline{\pathLL p\kappa^-}}^c$ has all negative actions, we know that the action that causes divergence is a positive one in $\fullview{\overline{\pathLL p\kappa^-}}^c$, \ie, the dual negative action is not available in $C$.
Let us define a design $\design D$ to be the set of chronicles $\chronicle c\kappa'^- \daimon$ as soon as $\chronicle c$ has length $n+1$ and there exist actions $\kappa'^-$ and $\kappa'^+$ such that $\chronicle c\kappa'^-\kappa'^+ \in \fullview C$. Let $C'$ be the set of paths $\pathLL q$ such that $\pathLL q$ is a path in $\design D$ and either $\pathLL q\in C$ or $\pathLL q = \pathLL q_0\daimon$ and there exists a positive action $\kappa''^+$ such that $\pathLL q_0\kappa''^+ \in C$. Remark that $C'$ is a clique as all paths are in a design, and that $C'$ is maximal and positively saturated as $C$ is maximal and positively saturated. Hence by hypothesis $\fullview{C'}\perp\fullview{\overline{\pathLL p\kappa^-}}^c$: $\normalisationSeq{\fullview{C'}}{\fullview{\overline{\pathLL p\kappa^-}}^c} = \pathLL m\kappa_0^-\pathLL m'$. Hence contradiction as $\pathLL m\kappa_0^-$ should also be the prefix of a path in $\fullview C$. 

$iii)\Rightarrow i)$ Let $C$ be a positively saturated maximal clique of $S$.  Suppose that $\fullview C\not\perp\fullview{\overline{\pathLL p\kappa^-}}^c$. Let $\pathLL m$ be the longest path in $\fullview C$ such that $\overline{\pathLL m}$ is a path in $\fullview{\overline{\pathLL p\kappa^-}}^c$ (as $\pathLL p$ has a finite length, paths in $\fullview{\overline{\pathLL p\kappa^-}}^c$ have also a finite length, thus $\pathLL m$ is well defined). The path $\pathLL m$ is necessarily daimon-free otherwise the interaction converges. As $\fullview{\overline{\pathLL p\kappa^-}}^c$ is complete with respect to negative action, there is a negative action $\kappa_0^-$ such that $\pathLL m\kappa_0^-$ is not a path in $\fullview C$ whereas $\overline{\pathLL m\kappa_0^-}$ is a path in $\fullview{\overline{\pathLL p\kappa^-}}^c$. In that case we should have in fact $\overline{\pathLL m\kappa_0^-}$ to be a path in $\fullview{\overline{\pathLL p\kappa^-}}$. Thus $\pathLL m\kappa_0^-$ is a path in $\fullview C$ (because of hypothesis $iii$). Contradiction. Thus $\fullview C\perp\fullview{\overline{\pathLL p\kappa^-}}^c$.

$i)\Rightarrow iii)$ Let $C$ be a positive saturated maximal clique of $S$, let $\pathLL m$ be a legal path in the design $\fullview C$, let $\kappa_0^-$ be a negative action such that $\overline{\pathLL m\kappa_0^-}$ is a path in $\fullview{\overline{\pathLL p\kappa^-}}$. Then, since the interaction path between $\fullview C$ and $\fullview{\overline{\pathLL p\kappa^-}}^c$ begins with the path $\pathLL m$, it should go on with $\overline{\kappa_0^-}$ in $\fullview{\overline{\pathLL p\kappa^-}}^c$, then $\pathLL m\kappa_0^-$ is a path in $\fullview{C}$. 
 \end{proof}


\subsection{Ludicable sets}\label{subsec:lucability}
We are now ready for fully characterizing a behaviour in terms of visitable paths:
\begin{itemize}
\item The set of visitable paths of a behaviour is {\em ludicable} (proposition~\ref{prop:visitableTOludicable}).
\item A {\em ludicable} set of paths is the set of visitable paths of some behaviour (proposition~\ref{prop:ludicableTObehaviour}).
\end{itemize}

\begin{defi}[Ludicable set]
A set $S$ of legal paths of same base is {\tt pre-ludicable} when $S$ is prefix-closed, daimon-closed, positively and negatively saturated.
\\
A set $S$ is {\tt ludicable} if $S$ and $\dual S$ are pre-ludicable.
\end{defi}


\begin{prop}\label{prop:visitableTOludicable}
Let $\behaviour E$ be a behaviour, then its set of visitable paths $V_{\behaviour E}$ is ludicable.
\end{prop}
%
\proof
 We just need to check that $V_{\behaviour E}$ is pre-ludicable: When $\behaviour E$ is a behaviour we have  $\dual{V_{\behaviour E}}=V_{{\behaviour E}^\perp}$, hence $\dual{V_{\behaviour E}}$ is also pre-ludicable.
\begin{itemize}
\item {\it $ V_{\behaviour E}$ is both daimon and positive prefix closed.} This is a direct consequence of the following fact: When a design $\design{D}$ is obtained from a design ${\design E}$ by replacing chronicles $\chronicle c\kappa^+\chronicle{c'}$ of $\design E$ by the chronicle $\chronicle c\daimon$ then ${\design E}^\perp\subset{\design D}^\perp$.
\item {\it Positive saturation.} Let $\pathLL p$ be a visitable path of $V_{\behaviour E}$. Then the design $\fullview{\pathLL p}^c$ belongs to $\behaviour E$. Let us consider the  clique 
 $$C_{\pathLL p}=
	\{ \pathLL q ~;~ \pathLL q \in V_{\behaviour E}, \fullview{\pathLL q}\subset\fullview{\pathLL p}\}
	\cup
	\{\pathLL w\kappa^-\daimon ~;~ \pathLL w\kappa^-\daimon\in V_{\behaviour E}, \fullview{\pathLL w}\subset\fullview{\pathLL p}, \fullview{\pathLL w\kappa^-}\not\subset\fullview{\pathLL p}\}$$
We remark that $C_{\pathLL p}$ is a maximal clique of $V_{\behaviour E}$. Furthermore the design $\fullview{C_{\pathLL p}}$ is the incarnation of $\fullview{\pathLL p}^c$ with respect to $\behaviour E$,  hence $\fullview{C_{\pathLL p}}$ belongs to $\behaviour E$. 
\\
Let  $\pathLL m$ and $\pathLL n\kappa^-\kappa^+$ be two paths belonging to  $C_{\pathLL p}$, and suppose that $\pathLL m\kappa^-\daimon\in V_{\behaviour E}$, so $\fullview{\overline{m\kappa^-}}^c\in {\behaviour E}^\perp$. 
Note that the path $\pathLL m\kappa^-$ should be a prefix of $\normalisationSeq{\fullview{C_{\pathLL p}}}{\fullview{\overline{\pathLL m\kappa^-}}^c}$ and that $\fullview{C_{\pathLL p}} \perp \fullview{\overline{\pathLL m\kappa^-}}^c$. Hence there exists a positive action $\kappa^+$ such that $\pathLL m\kappa^-\kappa^+$ is a prefix of $\normalisationSeq{\fullview{C_{\pathLL p}}}{\fullview{\overline{\pathLL m\kappa^-}}^c}$.
In other words, $\pathLL m\kappa^-\kappa^+\in V_{\behaviour E}$.
\\
Conclusion: $C_{\pathLL p}$  is positively saturated.
\item {\it Negative saturation.} Suppose that $\pathLL p\kappa^-\daimon$ is a legal path such that $\pathLL p \in V_{\behaviour E}$ and for all positively saturated maximal clique $C$ of $V_{\behaviour E}$ we have that $\fullview{C} \perp \fullview{\overline{\pathLL p\kappa^-}}^c$. 
Let $\design D \in |\behaviour E|$, it follows from proposition~\ref{prop:caracDessinIncarnation} that there exists a positively saturated maximal clique $C$ of $V_{\behaviour E}$ such that $\fullview C =  \design D$, therefore $\fullview{\overline{\pathLL p\kappa^-}}^c \perp \design D$. Thus $\fullview{\overline{\pathLL p\kappa^-}}^c \in \behaviour E^\perp$. Finally $\fullview{\pathLL p}^c \in \behaviour E$ and $\pathLL p\kappa^-\daimon = \normalisationSeq{\fullview{\pathLL p}^c}{\fullview{\overline{\pathLL p\kappa^-}}^c}$, thus $\pathLL p\kappa^-\daimon \in V_{\behaviour E}$.
\qed
\end{itemize}


Our issue now is to prove that if a set $S$ of paths is ludicable then it is the set of visitable paths of some behaviour $\behaviour A$. 
Previously, it is important to notice that such a behaviour $\behaviour A$ may not be unique. In some cases, in particular when $S$ is finite, the incarnation of $\behaviour A$ is exactly defined from maximal cliques of $S$, hence $\behaviour A$ is unique. However, when $S$ is infinite and, more precisely when it contains at least one infinite strictly increasing sequence of paths, several choices of behaviour are possible.
\begin{exa}\label{example:ludicableBehaviour}
Let us consider for example a set $S$ generated by an infinite strictly increasing sequence $\chronicle c_0 = \kappa_0^+$ and $\chronicle c_{i} = \chronicle c_{i-1}\kappa_i^-\kappa_i^+$, for $i \geq 1$, where each action except $\kappa_0^+$ is justified by the previous one: $\kappa_i^+$ is justified by $\kappa_i^-$ and $\kappa_{i+1}^-$ is justified by $\kappa_i^+$. Formally,
$$
S= \{\daimon\} \cup \{ \chronicle c_i ~;~ i\geq 0\} \cup \{ \chronicle c_{i-1}\kappa_i^-\daimon ~;~ i \geq 1\}
$$
Note that the elements of $S$ are in fact chronicles.
Let us consider the two following behaviours:
$$
\begin{array}{rcl}
\behaviour A_1 &=& \{ \fullview C ~;~ C\mbox{ is a maximal clique of } S\}^{\perp\perp}\\
\behaviour A_2 &=& \{ \fullview D ~;~ D\mbox{ is a maximal clique of } \dual S\}^{\perp}
\end{array}
$$
It is worth noticing that $S$ is ludicable and that $S = V_{\behaviour{A}_1} = V_{\behaviour{A}_2}$, although $\behaviour{A}_1 \neq \behaviour{A}_2$.
The difference between $\behaviour{A}_1$ and $\behaviour{A}_2$ is that $\bigcup_{i} \chronicle c_i$ is a design of $\behaviour{A}_1$ but not of $\behaviour{A}_2$. Whereas $\behaviour{A}_2$ contains only designs generated by bounded maximal cliques of $S$ of the form $\{ \chronicle c_i ~;~ i\leq N\} \cup \{\chronicle c_N\kappa_N^-\daimon\}$. On the opposite, $\bigcup_{i}\overline{ \chronicle c_{i}}$ belongs to ${\behaviour{A}_2}^\perp$.
\end{exa}


\begin{prop}\label{prop:ludicableTObehaviour}
Let $S$ be a non-empty set of legal paths of same base, if $S$ is ludicable then there exists a behaviour $\behaviour E$ such that $S = V_{\behaviour E}$.
\end{prop}

\begin{proof}
We set:\\
- $E=\{\fullview{C} ~;~ C$ is a positively saturated maximal clique of $S\}$, \\
- $F=\{\fullview{D} ~;~ D$ is a positively saturated maximal clique of $\dual S$ with no infinite strictly increasing sequences of paths$\}$.\\
We shall prove that $S = V_{F^{\perp}}$. There may be different choices for $E$ and $F$ with the same result as it may be deduced from example~\ref{example:ludicableBehaviour}: for each infinite strictly increasing sequence of $S$, either $E$ or $F$ should have a restriction. \\
Note that with conditions as stated on $E$ and $F$, normalization between a design of $E$ and a design of $F$ cannot continue infinitely: either it converges or it stops because a positive action in one side has no negative counterpart on the other side.
 
The proof sketch is as follows:
\begin{itemize}
\item[i)] We prove first that $E \subset F^\perp$ (hence $F\subset F^{\perp\perp}\subset E^\perp$). 
\item[ii)] Next we prove that $S \subset V_{F^\perp}$.
\item[iii)] Then, we prove that $V_{F^\perp}\subset S$.
\item[iv)] We conclude from ii) and iii) that $S = V_{F^\perp}$ and the fact that $F^\perp$ is a behaviour.
 \end{itemize}

\noindent i) We prove that $E\subset {F}^\perp$.
We will show that the interaction between a design of $E$ and a design of   $F$ does not diverge. Therefore, since by construction  the interaction cannot go on infinitely, it converges, so $E\subset {F}^\perp$. 
Namely, let $\fullview C \in E$ and $\fullview D \in F$, we show by induction on the length of $\pathLL p$, a strict prefix of the interaction path of $\fullview C$ with $\fullview D$, that:\\
{\it  either $\pathLL p\daimon \in C$, or $\overline{\pathLL p}\daimon \in D$, or there exists a proper action $\kappa$ such that $\pathLL{p\kappa}$ is a prefix of a path of $C$ and $\overline{\pathLL{p\kappa}}$ is a prefix of a path of $D$}.
\\
It follows not only that the interaction path continues but it remains a prefix of an element of $S$ as $\pathLL p$ (resp. $\overline{\pathLL p}$) is a prefix of an element of $C$ (resp. of $D$).
	\begin{itemize}
\item The path $\pathLL p$ is the empty sequence. Either $\fullview C$ or $\fullview D$ is the set $\{\daimon\}$ hence the property. Or suppose $\fullview C$ is a positive design with first (proper) action $\kappa^+$, then $\kappa^+ \in C$ ($C$ is positive-prefix closed) thus 
$\overline{\kappa^+}\daimon \in \dual S$. As $D$ is a maximal clique of $\dual S$, there should exist a path beginning with the action $\overline{\kappa^+}$, \eg, there exists a positive action $\kappa_0^+$ such that $\overline{\kappa^+}\kappa_0^+ \in D$.
And the property is satisfied. If $\fullview C$ is a negative design, then $\fullview D$ is a positive design and the same reasoning applies exchanging $C$ and $D$.

	\item   
Let $\pathLL p = \pathLL p_0\kappa^+$ and $\pathLL p_0$ satisfies the property.
Note that $\pathLL p_0$ is negative-ended and that $\pathLL p_0\daimon \not\in C$ otherwise $\pathLL p$ is not a strict prefix as $\normalisationSeq{\fullview{C}}{\fullview{D}} = \pathLL p_0\daimon$.
Thus there exists a proper action $\kappa_0^+$ such that $\pathLL p_0\kappa_0^+$ is a prefix of a path of $C$ (in fact $\pathLL p_0\kappa_0^+ \in C$) and $\overline{\pathLL p_0\kappa_0^+}$ is a prefix of a path of $D$. As $C$ is a clique and $\pathLL p$ is a prefix of $\normalisationSeq{\fullview{C}}{\fullview{D}}$, we should have $\kappa_0^+ = \kappa^+$. As $\overline{\pathLL p_0\kappa_0^+}$ is a prefix of a path of $D$ and $\overline{\pathLL p_0\kappa_0^+}$ is negative-ended and $D$ is a clique of paths, there should exist a positive action $\kappa_1^+$ such that $\overline{\pathLL p_0\kappa_0^+}\kappa_1^+$ is a prefix of a path of $D$.
If $\kappa_1^+ = \daimon$, the property is satisfied. 
Otherwise remark that $\overline{\pathLL p_0\kappa_0^+}\kappa_1^+ \in \dual S$, hence $\pathLL p_0\kappa_0^+\overline{\kappa_1^+}$ is a prefix of an element of $S$.
As $\pathLL p_0\kappa_0^+ \in C$ and $C$ is a maximal clique of $S$, $\pathLL p_0\kappa_0^+\overline{\kappa_1^+}$ should be a prefix of an element of $C$. Hence the result.

	\item The case of  $\pathLL p$ negative-ended is symmetrical.
	\end{itemize}

\noindent ii) We prove that {\bf $S\subset V_{F^\perp}$}.
Let $\pathLL p$ be an element of $S$, 
let $C$ be a positively saturated maximal clique of $S$ containing $\pathLL p$, 
let $D$ be a positively saturated maximal clique of $\dual S$ containing $\dual{\pathLL p}$ and with no infinite strictly increasing sequences of paths. 
Remark that $\normalisationSeq{\fullview C}{\fullview D} = \pathLL p$. 
Remark also that $\fullview{C} \in E$ and $\fullview{D} \in F$, and, since $E \subset {F}^\perp$,  $\fullview{C}$   belongs to ${F}^\perp$.
Hence the path $\pathLL p$ is the interaction path between a design of $F$ and a design of ${F}^\perp$, \ie, $\pathLL p \in V_{F^\perp}$.

\noindent iii) We prove now that $V_{F^\perp}\subset S$. 
 Let $\pathLL q$ be a visitable path of $F^\perp$ and let us denote by $\pathLL p$ the longest prefix of $\pathLL q$ that is a prefix of an element of $S$. If $\pathLL q$ is empty then the base of $F^\perp$ hence of $S$ is negative then the empty sequence is in $S$ as $S$ is prefix closed. We suppose now that $\pathLL q$ is not empty. Notice that $\pathLL p$ cannot be empty. Indeed, if $\pathLL q=\daimon$, then $\pathLL p = \daimon \in S$ as $S$ is daimon-closed, otherwise $\pathLL q=\kappa\pathLL{q'}$ and either $\kappa$ is positive  and  $\kappa\in S$ or $\kappa$ is negative and $\overline{\kappa}\in\dual S$, \ie, $\kappa\daimon \in S$. \\
Suppose that $\pathLL p$ is a strict prefix of $\pathLL q$, this means that  there is an action $\kappa$ such that $\pathLL q=\pathLL{p\kappa}\pathLL{p'}$  and neither $\pathLL p\kappa$  belongs to $S$ nor $\overline{\pathLL p\kappa}$ belongs to $\dual S$. 
 \begin{enumerate}
 \item If $\kappa$ is negative. By hypothesis, $\overline{\pathLL p}\daimon\in\dual S$. Since  $\pathLL{p\kappa}\pathLL{p'}$  is visitable in ${F^\perp}$, so is $\pathLL{p\kappa}\daimon$. This means that there exist an element $\fullview{D_0}$ in $F$ and a design ${\design D_0}\in{F^\perp}$ such that $\pathLL p\kappa\daimon= \normalisationSeq{\design D_0}{\fullview{D_0}}$. Then we have $\overline{\pathLL p\kappa}$ is a path in $\fullview{D_0}$. 
This is only possible when there exists a path $\pathLL m\overline{\kappa} \in D_0\subset\dual S$.
Therefore, by positive saturation of $D_0$, $\overline{\pathLL p\kappa} \in \dual S$, hence $\pathLL p\kappa \in S$ against the hypothesis: $\pathLL p$ is not the longest prefix of $\pathLL q$ that is a prefix of an element of $S$.

\item If $\kappa$ is positive. By hypothesis $\overline{\pathLL p} \in \dual S$. 
Since  $\pathLL{p \kappa}$ is visitable in ${F^\perp}$ and $F^\perp$ is a behaviour, the design $\fullview{\pathLL p\kappa}^c \in F^\perp$.
Therefore, for all positively saturated maximal clique without infinite strictly increasing sequence $D$ of $\dual S$, 
  $\fullview{\pathLL p\kappa}^c\perp\fullview D$. Then, since $\dual S$ satisfies the negative saturation condition (and with lemma~\ref{lem:negativesat}), $\overline{\pathLL p\kappa}\daimon\in \dual S$, that is $\pathLL p\kappa \in S$, against the hypothesis that $\pathLL p$ is the longest prefix of $\pathLL q$ that is a prefix of an element of $S$.
 
   \end{enumerate}

We conclude from ii) and iii) that $S = V_{F^\perp}$ and the fact that $F^\perp$ is a behaviour.
%
\end{proof}


 \subsection{Ludicable closures}\label{subsec:LudicableClosures}
In the previous subsection, propositions~\ref{prop:visitableTOludicable} and~\ref{prop:ludicableTObehaviour} give necessary and sufficient conditions for relating sets of paths and behaviours. In this subsection we give a constructive means for closing a set of paths $S$, \ie, for obtaining the minimal ludicable set containing $S$.

\begin{defi}[Ludicable Closure]
Let $S$ be a set of legal paths of the same base, we define the pre-ludicable closure, written $S'$, of $S$ to be the smallest set of paths including $S$ that is prefix-closed, daimon-closed, and that satisfies negative saturation and positive saturation.\\
The {\tt ludicable closure} of $S$, written $LC(S)$, is $\bigcup_n S_n$ such that the family $(S_n)$ is defined inductively in the following way:
	\begin{itemize}
	\item $S_0 = S$.
	\item $S_{n+1} = S_n \cup S_n'\cup \dddual[.9]{({\dddual[.9]{S_n}})'}$
	\end{itemize}
\end{defi}

\begin{lem}\label{lem:LC_properties}
Let $S$ be a set of legal paths of the same base, $LC(S)$ is well-defined and pre-ludicable.
\end{lem}
\begin{proof}
We show that the function $f: S \rightarrow S \cup S'\cup \dddual[.9]{({\dddual[.9]{S}})'}$, which domain $\mathcal D$ is the set of sets of legal paths of same base as $S$ and including $S$, is Scott-continuous. Hence, Kleene fixed-point theorem applies: $LC(S)$ is well-defined and is the least fixed-point of $f$, \ie, $LC(S)$ is pre-ludicable.
\\
We have that $S \subset f(S)$. 
It is also immediate that $\mathcal D$ is a complete partial order for the inclusion of sets.
Let $\mathcal E$ be a directed subset of $\mathcal D$, note that $\bigvee_{S \in \mathcal E} S' = (\bigvee_{S \in \mathcal E} S)'$ and that $\bigvee_{S \in \mathcal E} \dual S = \dual{(\bigvee_{S \in \mathcal E} S)}$. Hence also $\bigvee_{S \in \mathcal E} f(S) = f(\bigvee_{S \in \mathcal E} S)$.
\end{proof}

\begin{prop}
A ludicable closure is a ludicable set of paths.
\end{prop}
\begin{proof}
We know with lemma~\ref{lem:LC_properties} that $LC(S)$ is pre-ludicable.
Let $T = \dual{S}$ and $(T_n)$ be the family defining $LC(T)$. We prove by induction on $n$ that $T_n = \dual {S_n}$. By definition $T_0 = \dual S = \dual {S_0}$. Suppose that $T_n = \dual {S_n}$, then 
		\begin{eqnarray*}
		T_{n+1} &=& T_n \cup T_n'\cup \dddual[.9]{{\dddual[.9]{T_n}}'}	\\
			&=& \dual {S_n} \cup \dual {S_n}' \cup \dddual[.9]{\dddual[.9]{\dddual[.9]{S_n}}'} \\
			&=& \dual {S_n} \cup \dddual[.9]{\dddual[.9]{\dddual[.9]{S_n}'}} \cup \dual{S_n'}	\\
			&=& \dual{ S_n \cup \dddual[.9]{\dddual[.9]{S_n}'} \cup S_n'} \\
			&=& \dual{S_{n+1}}
		\end{eqnarray*}
It follows that $LC(T) = \bigcup_n T_n = \bigcup_n \dual{S_n} = \dual{\bigcup_n S_n} = \dual{LC(S)}$ thus $\dual{LC(S)}$ is also pre-ludicable.
So $LC(S)$ is ludicable.
\end{proof}

\begin{lem}\label{lem:cliqueMacDeLC}
Let $E$ be a set of designs of same base, let $C$ be a positively saturated maximal clique of $LC(V_E)$ such that if there is an infinite increasing sequence of paths $(\pathLL p_i)$ in $C$ then $\bigcup \pathLL p_i$ is already a path in a design of $E$, then $\fullview{C} \in E^{\perp\perp}$.
\end{lem}

\begin{proof}
Let $\design D\in |E^\perp|$.
\begin{enumerate}
\item The interaction path between $\fullview C$ and the design $\design D$ cannot be infinite:
For each infinite increasing sequence of paths $(\pathLL p_i)$ in $C$, the path $\bigcup \pathLL p_i$ is already included in a design of $E$, hence there is an index $i_0$ such that $\fullview{\overline{\pathLL{p_{i_0}}}\daimon} \subset \design D$.

\item We prove by contradiction that  the interaction between $\fullview C$ and $\design D$ cannot diverge finitely.
 Let $\pathLL p$ be the longest interaction sequence of $\fullview{C}$ with $\design D$. Note that $\pathLL p$ cannot end with a daimon otherwise the two designs are orthogonal.
Let $\pathLL q$ be the longest prefix of $\pathLL p$ such that either $\pathLL q\daimon$ or $\pathLL q$ is an element of $LC(V_E)$ (depending on the polarity of $\pathLL q$). Note that $\pathLL q$ is not the empty sequence.
\begin{itemize}
\item If $\overline{\pathLL q}\kappa^+$ is a path in $\design D$. $\kappa^+$ cannot be the daimon otherwise the two designs are orthogonal. Then there exists a path $\pathLL n\kappa^+ \in \dual{V_{E}}$ that is a path of $\design D$. Thus $\pathLL n\kappa^+ \in \dual{LC(V_E)}$. Furthermore we can write $\overline{\pathLL q} = \pathLL m\kappa^-$ and $\pathLL m\kappa^-\daimon \in \dual{LC(V_E)}$ by hypothesis. Finally $\pathLL m \coh \pathLL n \kappa^+$. Thus, as $\dual{LC(V_E)}$ satisfies positive saturation, $\overline{\pathLL q}\kappa^+ = \pathLL m\kappa^-\kappa^+ \in \dual{LC(V_E)}$, hence $\pathLL q\overline{\kappa^+}\daimon \in LC(V_E)$. Contradiction.
\item If $\pathLL q\kappa^+$ is a path in $\fullview{C}$. Thus $\pathLL q\daimon \in LC(V_E)$. Furthermore there exists a path $\pathLL n\kappa^+ \in LC(V_E)$ in the design $\fullview C$. We can write $\pathLL q = \pathLL m \kappa^-$, thus as $LC(V_E)$ is prefix-closed, $\pathLL m \in LC(V_E)$. Finally, $\pathLL m \coh \pathLL n\kappa^+$ (as they are paths of the same design $\fullview C$). So, by positive saturation of $LC(V_E)$, we have $\pathLL q\kappa^+ \in LC(V_E)$. Contradiction.
\end{itemize}
\end{enumerate}
Thus $\fullview C\perp\design D$. Hence $\fullview C \in E^{\perp\perp}$.
\end{proof}

\begin{prop}
Let $E$ be a set of designs of the same base, let $\behaviour A$ be the behaviour generated by $LC(V_E)$ and such that, for each increasing sequence $(\pathLL{p}_i)$ of paths belonging to some positively saturated maximal clique $C$ of $LC(V_E)$, if $\bigcup\pathLL{p}_i$ is already a path in $E$ then $\fullview C\in \behaviour A$, otherwise  $\fullview{\dual C}\in {\behaviour A}^\perp$. 
Then the behaviour $\behaviour A$ is equal to ${E}^{\perp\perp}$. 
\end{prop}

\begin{proof}
The lemma~\ref{lem:cliqueMacDeLC} enables to affirm that $\behaviour A\subset E^{\perp\perp}$.
On the other side, if $\design{D}$ belongs to $|E|$, then $\design D\in\behaviour A$. Indeed, the set of paths of $\design D$  which are visitable in $E$ is a maximal clique of $V_E$. And this maximal clique of $V_E$ gives rise to a positively saturated maximal clique of $LC(V_E)$ (the new paths are only the one obtained by positive saturation and are already some paths in $\design D$). 
\end{proof}

\section{A Characterization of MALL behaviours}\label{section:grammar}

This section is devoted to a characterization of MALL behaviours in terms of their visitable paths. A MALL behaviour is built from constants $\bf 1$, $\bf 0$, $\boldtop$, $\boldbot$ and connectives multiplicative tensor `$\otimes$', additive sum `$\oplus$', shift `$\bigshpos$' and their duals. We first recall definitions and properties of these connectives. We give an example showing that there exist behaviours that are not the denotation of a MALL behaviour. We prove then that a MALL behaviour satisfies two criteria: {\em finiteness}, \ie, the incarnation has a finite number of designs and these designs are finite, and {\em regularity}, \ie, a legal path built from actions in designs of the incarnation should be visitable.

\subsection{Linear Connectives}
Before establishing properties concerning the behaviours which may be associated with linear formulas, 
we recall below the main linear operations on behaviours: multiplicative tensor `$\otimes$', additive sum `$\oplus$' and also the shift `$\bigshpos$' operation. The shift operation is required as the logic is polarized: it allows for switching from/to a positive behaviour to/from a negative behaviour. Dual operations are defined in a standard way: $\behaviour{A}\parr \behaviour{B}= (\behaviour{A}\otimes\behaviour{B})^\perp$, $\behaviour{A}\with\behaviour{B} = (\behaviour{A}\oplus\behaviour{B})^\perp$ and $\bigshneg \behaviour{A}= (\bigshpos \behaviour{A})^\perp$.

\begin{defi}~
\begin{itemize}
\item Let $(\behaviour{G}_k)$ be a family of positive behaviours pairwise disjoint, 
$\bigoplus_k \behaviour{G}_k = (\bigcup_k \behaviour{G}_k)^{\perp\perp}$
\item Let $\design{A}$ and $\design{B}$ be two positive alien designs\footnote{Two positive designs $(+,\xi,I)\design A$ and $(+,\xi,J)\design B$ are {\em alien} when $I \cap J = \emptyset$. A positive design $\design A$ and the design $\{\daimon\}$ are alien. Two positive behaviours are alien when their designs are pairwise alien.}:\\
- If $\design{A}$ or $\design{B}$ is $\{\daimon\}$, then $\design{A} \otimes \design{B} = \{\daimon\}$.\\
- Otherwise $\design{A} = \scalebox{.9}{$(+,\xi,I)$}\design{A}'$ and $\design{B} = \scalebox{.9}{$(+,\xi,J)$}\design{B}'$
then $\design{A} \otimes \design{B} = \scalebox{.9}{$(+,\xi,I \cup J)$}(\design{A}' \cup \design{B}')$.
\item Let $\behaviour{G}$ and $\behaviour{H}$ be two positive alien behaviours,
$\behaviour{G} \otimes \behaviour{H} = \{\design{A} \otimes \design{B} ~;~ \design{A} \in \behaviour{G}, \design{B} \in \behaviour{H}\}^{\perp\perp}$
\item Let $\behaviour G$ be a negative behaviour of base $\xi i\vdash$, $\bigshpos \behaviour{G} = ((+,\xi,\{i\}) \behaviour G)^{\perp\perp}$.
\end{itemize}
\end{defi}

\begin{thm}[internal completeness~\cite{DBLP:journals/mscs/Girard01}]~
\begin{itemize}
\item Let $K \neq \emptyset$, ~~~~ $\bigoplus_{k\in K} \behaviour{G}_k = \bigcup_{k \in K} \behaviour{G}_k$
\item A behaviour of positive base is always decomposable as a $\bigoplus$ of connected behaviours.
\item Let $\behaviour{G}$ and $\behaviour{H}$ be two alien positive behaviours, $\behaviour{G} \otimes \behaviour{H} = \{\design{A} \otimes \design{B} ~;~ \design{A} \in \behaviour{G}, \design{B} \in \behaviour{H}\}$.
\item Let $\behaviour G$ be a negative behaviour of base $\xi i\vdash$, $\bigshpos \behaviour{G} = \{\{\daimon\}\} \cup (+,\xi,\{i\}) \behaviour G$.
\end{itemize}
\end{thm}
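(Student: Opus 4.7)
The theorem bundles four statements, which I would establish in order: additive internal completeness, connected decomposition, multiplicative internal completeness, and the shift case, and then sketch full MALL soundness/completeness at the end. For each of the first four, one inclusion is immediate from the definitions; the work lies in showing that the bi-orthogonal closure does not add extra designs.

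For the additive sum, take $\design{D} \in (\bigcup_k \behaviour{G}_k)^{\perp\perp}$. If $\design{D} = \dai$ it lies trivially in each $\behaviour{G}_k$. Otherwise its first action $(+,\xi,I)$ determines a unique $k_0$ since the families are pairwise alien. To prove $\design{D} \in \behaviour{G}_{k_0}$ it suffices, using $\behaviour{G}_{k_0} = \behaviour{G}_{k_0}^{\perp\perp}$, to check $\design{D} \perp \design{E}$ for every $\design{E} \in \behaviour{G}_{k_0}^\perp$. The key construction is to extend $\design{E}$ into a design $\design{E}'$ by grafting a $\daimon$-branch on every ramification foreign to $\behaviour{G}_{k_0}$; then $\design{E}' \in \bigcap_k \behaviour{G}_k^\perp = (\bigcup_k \behaviour{G}_k)^\perp$, hence $\design{D} \perp \design{E}'$, and since normalisation only inspects the first action's ramification $I$, which belongs to $\behaviour{G}_{k_0}$, one obtains $\normalisation{\design{D},\design{E}} = \normalisation{\design{D},\design{E}'} = \{\daimon\}$. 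The decomposition of any positive behaviour into connected components then falls out by grouping designs according to the ramification of their first action, and the shift case is the specialisation where the only allowed ramification is $\{i\}$.

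For the tensor, the difficult direction is to show every $\design{D} \in \behaviour{G} \otimes \behaviour{H}$ splits as $\design{A} \otimes \design{B}$ with $\design{A} \in \behaviour{G}$ and $\design{B} \in \behaviour{H}$. Either $\design{D} = \dai$ and we are done, or its first action must have the form $(+,\xi, I \cup J)$ with $I$ a ramification coming from $\behaviour{G}$ and $J$ from $\behaviour{H}$, by an alien argument analogous to the one above, using counter-tests built from a single design of $\behaviour{G}$ and a single design of $\behaviour{H}$. By linearity the chronicles of $\design{D}$ below that action partition into those continuing on addresses built from $I$ and those built from $J$; call the resulting sub-designs $\design{A}'$ and $\design{B}'$, and set $\design{A} := (+,\xi,I)\design{A}'$ and $\design{B} := (+,\xi,J)\design{B}'$. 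To see $\design{A} \in \behaviour{G}$ I would test it against an arbitrary $\design{E} \in \behaviour{G}^\perp$: combining $\design{E}$ with any counter-design of $\behaviour{H}$ and applying the extension $P_{[Q]}$ construction just introduced to realign first-action ramifications produces an element of $(\behaviour{G} \otimes \behaviour{H})^\perp$, whose orthogonality to $\design{D}$ forces orthogonality of $\design{E}$ to $\design{A}$.

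The main obstacle is precisely this tensor step: one must keep track simultaneously of the shared base address $\xi$ and of the disjoint ramifications contributed by $\behaviour{G}$ and $\behaviour{H}$, and the counter-design construction has to isolate the $\behaviour{G}$-part without disturbing the $\behaviour{H}$-part. Example~\ref{example:tensor_shuffle} already warns that naively shuffling visitable paths does not land in the tensor, so the argument must work at the level of designs themselves rather than paths. With the four internal-completeness facts in hand, full soundness and completeness for polarised MALL is obtained by a standard induction on sequent derivations: each proof rule is interpreted by the corresponding Ludics operation (tensor, sum, shift, identity, cut), cut-elimination mirrors normalisation, and a non-trivial denotation is converted back into a proof by iteratively reading off the decomposition provided by the first four items of the theorem.
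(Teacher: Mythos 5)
The paper offers no proof of this statement at all: it is imported wholesale from Girard's article \cite{DBLP:journals/mscs/Girard01}, and the authors later invoke its ingredients as black boxes under the names ``disjunction property'' (Th.~11), ``adjunction theorem'' (Th.~14) and ``independence property'' (Th.~20). So there is nothing in the paper to compare your argument against line by line; I can only judge the sketch on its own terms. Your additive, connected-decomposition and shift cases are essentially the standard arguments, and they match the style the authors do use when they reprove the analogous statements for visitable paths in Proposition~\ref{prop:OperationsOnVisitable}: grafting $(-,\xi,I)\daimon$ branches onto a counter-design of $\behaviour{G}_{k_0}^\perp$ to land in $(\bigcup_k \behaviour{G}_k)^\perp$ is exactly the trick used there. (You silently assume that the first ramification of a design in the bi-orthogonal lies in the directory of some $\behaviour{G}_{k_0}$; that needs a one-line separation argument, but it is routine.)

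The genuine gap is in the tensor step, and it is not only the difficulty you flag. To show $\design{A}=(+,\xi,I)\design{A}'\in\behaviour{G}$ you propose to ``combine'' a counter-design $\design{E}\in\behaviour{G}^\perp$ with a counter-design of $\behaviour{H}$ into an element of $(\behaviour{G}\otimes\behaviour{H})^\perp$. Two negative designs on the base $\xi\vdash$ cannot be merged this way: after the shared first action $(-,\xi,I\cup J)$ a design carries exactly \emph{one} positive action, whereas $\design{E}$ wants to focus inside the $I$-subtree and the $\behaviour{H}$-counter-design inside the $J$-subtree; their union is not a design, and any scheduling of the two is precisely the $\parr$ construction whose correctness is what is being proved. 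The extension $P_{[Q]}$ machinery of the paper only relabels the ramification of a \emph{first positive} action and does not address this. The standard resolution runs in the opposite direction, via the adjunction theorem: for $\design{F}\in(\behaviour{G}\otimes\behaviour{H})^\perp$ and $\design{B}\in\behaviour{H}$, the partial application $(\design{F})\design{B}$ belongs to $\behaviour{G}^\perp$, and $\design{F}\perp\design{A}\otimes\design{B}$ iff $(\design{F})\design{B}\perp\design{A}$; combined with stability and separation this yields the independence property $\design{D}=\design{D}_1\otimes\design{D}_2$ together with membership of the factors. This is exactly the route the authors themselves take when they use Th.~14 and Th.~20 inside the proof of Proposition~\ref{prop:OperationsOnVisitable}. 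Without the adjunction, your tensor step does not close, and consequently neither does the soundness/completeness item that depends on it.
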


To be complete, let us recall the definition of multiplicative and additive Ludics constants (on base $\vdash \xi$ for $\bf 1$):
\begin{defi}
$\bf 1 = \{\{\daimon\}, (+,\xi,\emptyset)\}$,
$\boldbot = \bf 1^\perp$,
$\bf 0 = \{\{\daimon\}\}$,
$\boldtop = \bf 0^\perp$.
\end{defi}

We intend to relate MALL logical connectives, in fact `$\otimes$', `$\oplus$', `$\bigshpos$', with operations on visitable paths. Obtaining the set of visitable paths for `$\oplus$' and `$\bigshpos$' is quite immediate. The behaviour $\behaviour{A} \oplus \behaviour{B}$ is the union of the two behaviours $\behaviour A$ and $\behaviour B$, hence visitable paths of the result should be the union of the two sets of visitable paths.
The behaviour $\bigshpos \behaviour{A}$ is built by adding to each design the same action at the root, hence visitable paths of the resulting behaviour should be built by adding this action as prefix of visitable paths of $\behaviour{A}$. As the operation `$\otimes$' models a kind of concurrency, it is natural to consider that the set of visitable paths of a tensor should be in some way the shuffle of the sets of visitable paths. Without surprise, it is  necessary to consider only {\em legal} paths among the ones obtained by shuffle of visitable paths.
However, being a legal path in the shuffle is not sufficient as shown in example~\ref{example-nonregular}.
\begin{exa}\label{example-nonregular}
Let us consider the two behaviours $\behaviour A$ which incarnation is the daimon-closure of the set $\{\design A_1, \design A_2\}$ and $\behaviour B$ which incarnation is the daimon-closure of the set $\{\design B_1, \design B_2\}$:
\begin{center}
\begin{tabular}{cccc}
$\design A_1 = \!\!\!\!\!\!
\scalebox{.8}{
\infer[\gamma_0^+]{\vdash \langle\rangle}{
\infer[\kappa_1^-]{\alpha \vdash}
	{
	\infer[\kappa_1^+]{\vdash \alpha1, \alpha2}
		{
		\infer[\kappa_2^-]{\alpha11 \vdash \alpha2}
			{
			\infer[\kappa_2^+]{\vdash \alpha111, \alpha2}
				{
				\infer[\kappa_3^-]{\alpha22 \vdash \alpha111}
					{
					\infer[\daimon]{\vdash \alpha111, \alpha222}
						{
						}
					}
				}
			}
		}
	}
}
}
$
&
$\design A_2 = \!\!\!\!\!\!
\scalebox{.8}{
\infer[\gamma_0^+]{\vdash \langle\rangle}{
\infer[\kappa_1^-]{\alpha \vdash}
	{
	\infer[\kappa_2^+]{\vdash \alpha1, \alpha2}
		{
		\infer[\kappa_3^-]{\alpha22 \vdash \alpha1}
			{
			\infer[\kappa_1^+]{\vdash \alpha1, \alpha222}
				{
				\infer[\kappa_2^-]{\alpha11 \vdash \alpha222}
					{
					\infer[\daimon]{\vdash \alpha111, \alpha222}
						{
						}
					}
				}
			}
		}
	}
}
}
$
&
$\design B_1 =\!\!\!\!\!\!\!\!\!\!\!\!\!\!\!\!
\scalebox{.8}{
\infer[\gamma_1^+]{\vdash \langle\rangle}{
\infer[\lambda_0^-]{\beta \vdash}
	{
	\infer[\lambda_0^+]{\vdash \beta0}
		{
		\infer[\lambda_1^-]{\beta01 \vdash}
			{
			\infer[\lambda_1^+]{\vdash \beta011}
				{
				\beta0111 \vdash
				}
			}
		&
		\infer[\lambda_2^-]{\beta02 \vdash}
			{
			\infer[\lambda_2^+]{\vdash \beta022}
				{
				\beta0222 \vdash
				}
			}
		}
	}
}
}
$
&
$\design B_2 =\!\!\!\!\!\!\!\!\!\!\!\!\!\!
\scalebox{.8}{
\infer[\gamma_1^+]{\vdash \langle\rangle}{
\infer[\lambda_0^-]{\beta \vdash}
	{
	\infer[\lambda_0^+]{\vdash \beta0}
		{
		\infer[\lambda_1^-]{\beta01 \vdash}
			{
			\infer[\daimon]{\vdash \beta011}
				{
				}
			}
		&
		\beta02 \vdash
		}
	}
}
}
$
\end{tabular}
\end{center}
Let us consider the three following paths (where $\gamma^+ =(+,\langle\rangle,\{\alpha,\beta\})$):\\
\begin{minipage}{.6\textwidth}
\begin{itemize}
	\item $\pathLL p = \kappa_1^-\kappa_1^+\kappa_2^-\kappa_2^+$
	\item $\pathLL q = \lambda_0^-\lambda_0^+\lambda_1^-\lambda_1^+\lambda_2^-\lambda_2^+$
	\item $\pathLL r = \gamma^+\lambda_0^-\lambda_0^+\kappa_1^-\kappa_1^+\lambda_1^-\lambda_1^+\kappa_2^-\kappa_2^+\lambda_2^-\lambda_2^+$
\end{itemize}
Remark that $\gamma_0^+\pathLL p \in V_{\behaviour A}$ and $\gamma_1^+\pathLL q \in V_{\behaviour B}$.
Furthermore $\pathLL r \in \gamma^+(\pathLL p \shuffle \pathLL q)$ is a legal path.
However $\pathLL r \not\in V_{\behaviour A \otimes \behaviour B}$.
Indeed $\fullview{\,\dual{r}\,}^c \not\in (\behaviour A \otimes \behaviour B)^\perp$
as $\design A_2 \otimes \design B_2 \not\perp \fullview{\,\dual{r}\,}^c$.
\end{minipage}
\begin{minipage}{.4\textwidth}
\begin{center}
$
\fullview{\,\dual{r}\,} = \!\!\!\!\!\!\!\!\!\!\!\!\!\!\!\!\!\!\!\!\!\!\!\!
\scalebox{.8}{
\infer[\gamma^+]{\vdash \langle\rangle}
{
\infer[\overline{\lambda_0^-}]{\vdash \alpha, \beta}
	{
	\infer[\overline{\lambda_0^+}]{\beta0 \vdash \alpha}
		{
		\infer[\overline{\kappa_1^-}]{\vdash \alpha, \beta01, \beta02}
			{
			\infer[\overline{\kappa_1^+}]{\alpha1 \vdash \beta01}
				{
				\infer[\overline{\lambda_1^-}]{\vdash \alpha11, \beta01}
					{
					\infer[\overline{\lambda_1^+}]{\beta011 \vdash \alpha11}
						{
						\infer[\overline{\kappa_2^-}]{\vdash \alpha11, \beta0111}
							{
							\alpha111 \vdash \beta0111
							}
						}
					}
				}
			&
			\infer[\overline{\kappa_2^+}]{\alpha2 \vdash \beta02}
				{
				\infer[\overline{\lambda_2^-}]{\vdash \alpha22, \beta02}
					{
					\infer[\overline{\lambda_2^+}]{\beta022 \vdash \alpha22}
						{
						\infer[\daimon]{\vdash \alpha22, \beta0222}
							{
							}
						}
					}
				}
			}
		}
	}
}
}
$
\end{center}
\end{minipage}

\end{exa}

In example~\ref{example-nonregular}, the behaviour $\behaviour B$ is not decomposable by means of linear connectives. We present in the following section a very simple characterization of behaviours which may be recursively decomposable by means of linear connectives. For such an issue we introduce the notion of {\em regular} behaviour.

\subsection{Regular Behaviours}\label{sec:reg_beh}

We prove in this subsection that {\em finiteness} and {\em regularity} as defined later characterize a behaviour for being the denotation of a MALL formula.

\begin{defi}[Trivial view/chronicle]
A legal path which is equal to its view is a chronicle. When a chronicle (view) $\chronicle c$ is such that $\dual{\chronicle c}$ is also a chronicle, or, in other words, each action of $\chronicle c$ is justified by the immediate previous one, except the first one which is initial, this chronicle is  said to be {\tt trivial}.
\\
If $w\kappa w'$ is a path with $\kappa$ a proper action, the {\tt trivial chronicle of $\kappa$ for $w\kappa w'$} is a trivial chronicle with last action $\kappa$ that is a subsequence of $w\kappa$.
\end{defi}

Note that an action occurs only once in a path hence there is only one trivial chronicle of a given proper action that is a subsequence of this path.

\begin{prop}\label{prop:trivialchronique}
Let $V$ be a set of legal paths and $\dual V$ be its dual.  Let $w\kappa w' \in V$ and $\chronicle c$ be the trivial chronicle of $\kappa$ for $w\kappa w'$. If $V$ and $\dual V$ are both prefix, daimon and view-closed, then either $\chronicle c \in V$ or $\chronicle c\daimon \in V$. 
\end{prop}

\proof
The proof is done with respect to $\kappa$.
\begin{itemize}
	\item Suppose $\kappa$ initial. 
	Either $\kappa$ is negative, hence $\chronicle c = \kappa$ and $w = \epsilon$. As $\kappa w' \in V$ and $V$ daimon-closed then $\chronicle c\daimon = \kappa \daimon \in V$.
	Or $\kappa$ is positive hence $w\kappa\in V$ as $V$ is prefix-closed. Therefore $\dual{w\kappa}\in\dual V$. Since $\overline{\kappa}$ is a negative initial action, $\view{\overline{w\kappa}\daimon}=\overline{\kappa}\daimon\in\dual V$.  Thus $\chronicle c = \kappa \in V$.

	\item Suppose $\kappa$ non initial and  positive, hence $w\kappa\in V$ as $V$ is prefix-closed. Therefore $\dual{w\kappa}\in\dual V$. Since $\overline{\kappa}$ is a negative action and $\dual V$ is view-closed, $\view{\overline{w\kappa}\daimon}=\view{w_0}\overline{\kappa}\daimon\in\dual V$ where $w_0$ is the prefix of $w$ ending on the justifier $\kappa_1$ of $\kappa$. 
Therefore, we may write $\view{\overline{w\kappa}\daimon}= w_1\kappa_1\overline{\kappa}\daimon\in\dual V$. And since $\overline{w_1\kappa_1\overline{\kappa}}\in V$, we have that $\view{\overline{w_1\kappa_1}\kappa}\in V$.  Since $\overline{\kappa_1}$ is a negative action, $\view{\overline{w_1\kappa_1}\kappa}=w_2\kappa_2\overline{\kappa_1}\kappa\in V$ where $w_2\kappa_2$ is the view of the prefix of $w_1$ ending on the justifier $\kappa_2$ of $\kappa_1$. The chain of justifiers of $\kappa$ is finite, therefore we obtain in a finite number of steps that the trivial chronicle $\chronicle c$ of $\kappa$ belongs to $V$. 

	\item Suppose $\kappa$ non initial and negative, hence $\overline{w\kappa}\in \dual V$ as $\dual V$ is prefix-closed. The same reasoning as in the previous item applies exchanging $V$ and $\dual V$. Thus the result.
\qed
\end{itemize}

In the following we consider mainly legal paths. For that purpose we introduce a specific definition.

\begin{defi}[Shuffle of two sets of legal paths]
Let $V$ and $S$ be two sets of legal paths respectively based on $\sigma\vdash$ and $\tau\vdash$. The set $V\lshuffle S$ contains the legal paths $\pathLL r$ such that there are some paths $\pathLL{p}_V\in V$ and $\pathLL{p}_S\in S$ and $\pathLL r\in\pathLL{p}_V\shuffle\pathLL{p}_S$. 

The definition is extended to  couples of sets of legal paths $V$ and $S$, both  based on $\vdash\xi$, by considering  the set of  legal paths $(+,\xi,I \cup J)\pathLL r$ such that there are $(+,\xi,I)\pathLL{p}_V\in V$ and $(+,\xi,J)\pathLL{p}_S\in S$ and $\pathLL r\in\pathLL{p}_V\shuffle\pathLL{p}_S$ (when $I \cap J = \emptyset$). 
\end{defi}

\noindent Remark: the notation $\lshuffle$ is a remainder for a shuffle ($\shuffle$) limited to legal ($l$) paths.

\begin{defi}[Regular set of legal paths]
Let $V$ be a set of legal paths and $\dual V$ be its dual.  The data $V/\dual V$ is a {\tt regular data} when $V$ and $\dual V$ are both sets of legal paths, prefix closed, daimon closed, view-closed  and moreover, both are shuffle closed, that is:\\
 if $\pathLL p$ and $\pathLL q$ belong to $V$ (resp. $\dual V$), then $\pathLL p\lshuffle\pathLL q \subset V$ (resp. $\dual V$).
\end{defi}

We prove in propositions~\ref{tousleschemins} and~\ref{touschemins} that a data $V/\dual V$ is regular iff {\em all} legal paths made of actions occurring in $V$ (resp.\ $\dual V$) are in $V$ (resp.\ $\dual V$).
These two propositions concern the non-additive case: a data $V/\dual V$ is {\em non-additive} if there is no pair of actions occurring in $V/\dual V$ with same focus but different ramifications, \ie, if $(+/-,\xi,I) \in V$ and $(+/-,\xi,J) \in V$ then $I = J$. In such a case, a focus occurs only in one action (that may be present several times in paths of $V$ or $\dual V$). 
It follows also that a trivial chronicle of an action does no more depend on a path: it is unique being given a data $V/\dual V$.
We show after these propositions that this may be generalized to the additive case. We think that presenting the propositions in the non-additive case allows a more explicit reading of what regularity implies. 

 \begin{prop}\label{tousleschemins}
 Let $V/\dual V$ be a non-additive regular data. If $\pathLL p$ is a legal path containing only actions occurring in $V$ then $\pathLL p\in V$.
 \end{prop}

\proof
If $\pathLL p$ ends  with $\daimon$ then $\dual{\pathLL p}$ is a $\daimon$-free legal path containing only actions occurring in $\dual V$. Hence, without loss of generality we can suppose that $\pathLL p$ is $\daimon$-free.

\noindent Let us consider first that $\pathLL p$ is a chronicle  containing only actions occurring in $V$. We prove the result by induction on the number of positive actions which are not justified by their immediate previous negative action in $\pathLL p$.
	\begin{itemize}
		\item If $\pathLL p $ is a trivial chronicle we already know by proposition~\ref{prop:trivialchronique} that $\pathLL p\in V$.
		\item Otherwise, we suppose that if  the chronicle $\pathLL p$ is made of actions of $V$ and contains at most  $n$ positive actions which are not justified by their immediate previous negative action, then $\pathLL p \in V$. Suppose that $\pathLL p$ is a chronicle made of actions of $V$ and contains $n+1$  positive actions which are not justified by their immediate previous negative action. Let $\kappa$ be the last such positive action occurring in $\pathLL p$, that is, $\pathLL p= \pathLL{c}\kappa\pathLL{d}$ where $\kappa\pathLL{d}$ is a trivial chronicle.\\
			-- Either $\kappa$ is initial. We may write  $\dual{\pathLL  p}=\overline{\pathLL{c}\kappa\pathLL{d}}\daimon\in ( \overline{\pathLL{c}}\shuffle\overline{\kappa\pathLL{d}}\daimon )$. 
Note that $\overline{\pathLL{c}}$ and $\overline{\kappa\pathLL{d}}\daimon$ are legal paths.
Since $\overline{\kappa\pathLL{d}}\daimon$ is the dual of a trivial chronicle, it belongs to $\dual V$; by induction hypothesis, $\overline{\pathLL{c}}\in\dual V$, therefore, by shuffle closure,  $\dual{\pathLL{p}}\in\dual V$ and thus $\pathLL{p}\in V$.\\
			-- Or $\kappa$ is justified by an action $\kappa_0$, and we may write  $\pathLL  p=\pathLL{c}_0\kappa_0\pathLL{c}_1\kappa\pathLL{d}$, where $\pathLL{c}_0\kappa_0\pathLL{c}_1$ and $\pathLL{c}_0\kappa_0\kappa\pathLL{d}$ are two chronicles having at most $n$ positive actions which are not justified by their immediate previous negative action. That $\pathLL{c}_0\kappa_0\kappa\pathLL{d}$ is indeed a chronicle is due to the fact that in $\kappa\pathLL{d}$ all actions are immediately justifed by the immediate previous one. Then, by induction hypothesis,  $\overline{\pathLL{c}_0\kappa_0\pathLL{c}_1} \in \dual V$ and $\pathLL{c}_0\kappa_0\kappa\pathLL{d} \in V$, thus $\overline{\pathLL{c}_0\kappa_0\kappa\pathLL{d}}\daimon \in \dual V$.
		It follows that $\dual{\pathLL  p}=\overline{\pathLL{c}_0\kappa_0\pathLL{c}_1\kappa\pathLL{d}}\daimon\in  \overline{\pathLL{c}_0\kappa}_0(\overline{\pathLL{c}_1}\shuffle\overline{\kappa\pathLL{d}}\daimon )$. 
		Furthermore $\overline{\pathLL{c}_0\kappa_0\pathLL{c}_1}$ and $\overline{\pathLL{c}_0\kappa_0\kappa\pathLL{d}}\daimon$ are legal paths.
		By shuffle closure, $\dual{\pathLL p}\in\dual V$ and $\pathLL{p}\in V$.
	\end{itemize}
	
\noindent Let us consider now  that $\pathLL p$ is a legal path  containing only actions occurring in $V$. We prove by induction on the number of negative actions which are not justified by their immediate previous positive action (we call the number of `negative jumps') that $\pathLL p\in V$.
	\begin{itemize}
		\item If $\pathLL p $ is a  chronicle we know already that $\pathLL p\in V$.

		\item Otherwise, we suppose that: if  the legal path $\pathLL q$ is made of actions of $V$ and contains at most  $n$ negative jumps, then $\pathLL q\in V$. Suppose that $\pathLL p$ is a legal path  made of actions of $V$ and that $\pathLL p$ contains $n+1$  negative jumps. Let $\kappa$ be the last such negative action involving a negative jump  in $\pathLL p$, that is, $\pathLL p= \pathLL{w}\kappa\pathLL{c}$ where $\kappa\pathLL{c}$ is a chronicle.\\
			-- Either $\kappa$ is initial, and we may write  $\pathLL  p=\pathLL{w}\kappa\pathLL{c} \in ( \pathLL{w} \shuffle \kappa\pathLL{c} )$. Since $\kappa\pathLL{c}$ is a chronicle, it belongs to $\dual V$; furthermore $\pathLL{w}$ is a legal path hence, by induction hypothesis, $\pathLL{w} \in V$. Therefore, by shuffle closure,  $\pathLL{p} \in V$.\\
			-- Or $\kappa$ is justified by an action $\kappa_0$, and we may write  $\pathLL  p=\pathLL{w}_0\kappa_0\pathLL{w}_1\kappa\pathLL{c} $. let us observe that $\pathLL{w}_0\kappa_0\kappa\pathLL{c}$ is a legal path: 
	(i) $\pathLL{w}_0\kappa_0$ is a legal path, 
	(ii) $\pathLL{w}_0\kappa_0\kappa\pathLL{c}$ is a path as any positive action occurring in $\pathLL{c}$ cannot be justified in $\pathLL{w}_1$ otherwise $\pathLL p$ could not be a path,
	(iii) $\pathLL{w}_0\kappa_0\kappa\pathLL{c}$ is legal since $\kappa_0\kappa\pathLL{c}$ is a chronicle. Then we may apply the induction hypothesis, \ie, $\pathLL{w}_0\kappa_0\pathLL{w}_1\in V$ and   $\pathLL{w}_0\kappa_0\kappa\pathLL{c} \in V$.\\
Finally, $\pathLL  p = \pathLL{w}_0\kappa_0\pathLL{w}_1\kappa\pathLL{c} \in \pathLL{w}_0\kappa_0(\pathLL{w}_1 \shuffle \kappa\pathLL{c})$. Therefore, by shuffle closure, $\pathLL{p}\in V$.
\qed
	\end{itemize}

That a data $V/\dual V$ contains all legal paths written from a set of justified actions is not only a necessary condition to be regular, but it is also sufficient as stated in the following proposition.

\begin{prop}\label{touschemins}
Let $V$ be a set of legal paths of base $\beta$ such that $V/\dual V$ is non-additive. 
Let us say that a set of legal paths of base $\beta$ is {\em complete} if it contains all legal paths of base $\beta$ which may be written from actions in its paths. 
If $V$ and $\dual V$ are complete and non-empty, then $V/\dual V$ is a regular data.
\end{prop}
\begin{proof}
Note that, if $V$ is complete, $V$ is {\em exactly} the set of legal paths of base $\beta$ which may be written from actions in paths of $V$.
Let $\pathLL p \in V$ then prefixes and views of $\pathLL p$ are made of actions of $\pathLL p$ hence are elements of $V$ as $V$ is complete.
Similarly, let $\pathLL p, \pathLL q \in V$ then shuffles of $\pathLL p$ and $\pathLL q$ are made of actions of paths of $V$ hence are elements of $V$.
The same applies for $\dual V$.
Finally, as $V$ is prefix-closed, $\dual V$ is daimon-closed, and vice-versa.
\end{proof}

Let us go back now to the general case where the data $V/\dual V$ may be additive. The following proposition links properties of the non-additive case to the general case.

\begin{lem}\label{lem:AdditiveNonAdditive}
Let $V/\dual V$ be a data set, non necessarily non-additive.
Let $(V_i/\dual{V_i})$ be the family of maximal non-additive data such that $V_i \subset V$.
The data $V/\dual V$ is regular iff each $V_i/\dual{V_i}$ is regular.
\end{lem}
\begin{proof}
It is immediate that $V$ and $\dual V$ are sets of legal paths, prefix closed, daimon closed, view-closed iff each $V_i$ and $\dual{V_i}$ have also these properties.
Finally the shuffle operation is not defined on paths that contain two actions with same focus, hence $V$ and $\dual V$ are shuffle closed iff each $V_i$ and $\dual{V_i}$ are shuffle closed.
\end{proof}

\begin{prop}\label{prop:RegularThenLudicable}
Let $V$ be a   set of legal paths. If $V/\dual V$ is a regular data then $V$ is ludicable. 
\end{prop}
\begin{proof}
It follows from lemma~\ref{lem:AdditiveNonAdditive} that we just need to check the positive saturation and negative saturation on each $V_i$ and $\dual{V_i}$. These properties are trivially satisfied since, as soon as $C$ is a  clique of $V_i$ (resp. $\dual{V_i}$) then all view in $\fullview C$ belongs to $V_i$ (resp. $\dual{V_i}$), by view closure, and therefore each legal path in $\fullview C$ belongs to $V_i$ (resp. $\dual{V_i}$) by shuffle closure.
\end{proof}

Regularity is trivially stable by operations $(\cdot)^\perp$ and $\bigshneg$. Let us consider now the tensor of behaviours.

\begin{prop}\label{prop:shuffleregular}
Let $V$ and $S$ be two sets of legal paths. 
$$
V/\dual V\mbox{ and } S/\dual S \mbox{  are regular iff } V\lshuffle S/\dual{V\lshuffle S}\mbox{ is regular.}
$$
\end{prop}

\proof
We only deal with the case $V$ and $S$ be two non-additive sets of legal paths respectively based on $\sigma\vdash$ and $\tau\vdash$.
Adding a positive action to obtain the same base $\vdash \xi$ is immediate, and the possibly additive case is dealt similarly as in lemma~\ref{lem:AdditiveNonAdditive}.
 We prove in each case that proposition~\ref{touschemins} may be applied.
\begin{itemize}

\item[($\Leftarrow$)]~
Since the empty path $\epsilon$ belongs to $S$, every path $\pathLL{p}_V$ of $V$ may be seen as belonging to the  shuffle $\pathLL{p}_V\shuffle\epsilon$. Therefore, all legal paths which may be written using only the actions from $V$ belong to $V$. Similarly for $S$.

\item[($\Rightarrow$)]~
Let $\pathLL r$ be a legal path made of actions occurring in $V$ or $S$, we prove first that $\pathLL r \in V \lshuffle S$.
Let $\pathLL v$ (resp. $\pathLL s$) be the subsequence of $\pathLL r$ made of actions of $V$ (resp. $S$), hence with foci subaddresses of $\sigma$ (resp. $\tau$).
Note that if $\pathLL r = w_1\kappa^-\kappa^+ w_2$ then $\kappa^-$ and $\kappa ^+$ are subaddresses either the two of $\sigma$ or the two of $\tau$ for $\pathLL r$ to be a path.
Thus $\pathLL v$ (resp.\ $\pathLL s$) is a finite alternated sequence of actions based on $\sigma \vdash$ (resp. $\tau \vdash$). As actions of $\pathLL r$ have distinct foci and, if the daimon is present, it is its last action, this is also the case for $\pathLL v$ and $\pathLL s$.
Furthermore, let $\kappa^+$ be an action of $\pathLL v$ justified by $\kappa^-$ in $\pathLL r$, let $\pathLL r = r_0\kappa^-r_1\kappa^+ r_2$ and $\pathLL v = v_1\kappa v_2$, then $\view{r_0}\kappa^-\kappa^+ = \view{r_0\kappa^-r_1\kappa^+} = \view{v_1\kappa^+}$ thus $\pathLL v$ is a path. Similarly $\pathLL s$ is a path.
Finally, with lemma~\ref{lem:dual_inverse_shuffle} (see annex~\ref{sec:reg_beh}), we have that $\pathLL v$ and $\pathLL s$ are legal paths.
\\
Hence $\pathLL v \in V$ and $\pathLL s \in S$, so $\pathLL r \in V \lshuffle S$.
In other words if $\pathLL r$ is made of actions occurring in $V \lshuffle S$ then $\pathLL r \in V \lshuffle S$.
Furthermore if $\pathLL r$ is a legal path made of actions occurring in $\dual{V \lshuffle S}$ then $\dual{\pathLL r}$ is a legal path made of actions occurring in $V$ and/or $S$ so $\dual{\pathLL r} \in V \lshuffle S$, \ie, $\pathLL r \in \dual{V \lshuffle S}$.

\noindent Therefore, by proposition~\ref{touschemins}, the data $V\lshuffle S/\dual{V\lshuffle S}$ is regular.
\qed
 \end{itemize}

\begin{defi}
A behaviour $\behaviour A$ is said {\tt regular} when $V_{\behaviour A}/\dual{V_{\behaviour A}}$ is regular.
\end{defi}

\begin{prop}\label{prop:TensorStableRegular}
Let $\behaviour A$ and $\behaviour B$ be two  positive alien behaviours distinct from $\behaviour 0$ and based on $\vdash \xi$, $\behaviour A\otimes\behaviour B$ is regular iff $\behaviour A$ and $\behaviour B$ are regular. 
\end{prop}

\proof
We just consider the paradigmatic case: as $\behaviour A$ and $\behaviour B$ are distinct from $\bf 0$, we can suppose that designs of $\behaviour A$ distinct from $\daimon$ have as first action $(+,\xi,I)$ and designs of $\behaviour B$ distinct from $\daimon$ have as first action $(+,\xi,J)$ and $I\cap J=\emptyset$. 

\begin{itemize}
\item We set $C=C_{\behaviour A}\lshuffle C_{\behaviour B}$ where $C=\{ \pathLL p \in \pathLL p_{\behaviour A}\shuffle \pathLL{p_{\behaviour B}} ~;~ \pathLL p$ legal, $\pathLL p_{\behaviour A}\in C_{\behaviour A},\; \pathLL p_{\behaviour B}\in C_{\behaviour B}\}$. We prove first that $C$ is a maximal, positively saturated  clique of $V_{\behaviour A}\lshuffle V_{\behaviour B}$  such that $\dual C$ is finite stable iff $C_{\behaviour A}$ and $C_{\behaviour B}$ are maximal, positively saturated  cliques of respectively $V_{\behaviour A}$ and $ V_{\behaviour B}$  such that $\dual C_{\behaviour A}$ and $\dual C_{\behaviour B}$ are  finite stable:
	\begin{itemize}
	\item The equivalence between the maximality of $C$ and the one of both $C_{\behaviour A}$ and $C_{\behaviour B}$ is immediate, as is the equivalence between the finite stability of $\dual C$ and the finite stability of both $\dual{C_{\behaviour A}}$ and $\dual{C_{\behaviour B}}$. 
	\item Let us check that $C$ is a positively saturated  clique of $V_{\behaviour A}\lshuffle V_{\behaviour B}$ iff $C_{\behaviour A}$ and $C_{\behaviour B}$ are  positively saturated  cliques of respectively $V_{\behaviour A}$ and $ V_{\behaviour B}$.  
	\\
	The condition is necessary: suppose that $\pathLL m\in C$ and $\pathLL n\kappa^-\kappa^+\in C$ while $\pathLL m\kappa^-\daimon\in V_{\behaviour A}\lshuffle V_{\behaviour B}$. Without loss of generality we can suppose that $\kappa^-$ is an action of $V_{\behaviour A}$, therefore $\kappa^+$ also is an action of $V_{\behaviour A}$. 
	Hence there exist paths $\pathLL m_{\behaviour A}\in C_{\behaviour A}$ and $\pathLL n_{\behaviour A}\kappa^-\kappa^+\in C_{\behaviour A}$ and $\pathLL m_{\behaviour B}\in C_{\behaviour B}$ and $\pathLL n_{\behaviour B} \in C_{\behaviour B}$ such that $\pathLL m\in \pathLL m_{\behaviour A}\shuffle\pathLL m_{\behaviour B}$, $\pathLL m\kappa^-\kappa^+\in \pathLL m_{\behaviour A}\kappa^-\kappa^+\shuffle\pathLL m_{\behaviour B}$ and $\pathLL n\kappa^-\kappa^+\in \pathLL n_{\behaviour A}\kappa^-\kappa^+\shuffle\pathLL n_{\behaviour B}$. 
	Since $\pathLL m\kappa^-\daimon\in V_{\behaviour A}\shuffle V_{\behaviour B}$, we have that $\pathLL m_{\behaviour A}\kappa^-\daimon\in V_{\behaviour A}$ and since $C_{\behaviour A}$ is positively saturated, we have that $\pathLL m_{\behaviour A}\kappa^-\kappa^+\in V_{\behaviour A}$. Therefore, $\pathLL m\kappa^-\kappa^+\in V_{\behaviour A}\lshuffle V_{\behaviour B}$. 
	\\
	The condition is also sufficient: suppose that $\pathLL m_{\behaviour A}\in C_{\behaviour A}$ and $\pathLL n_{\behaviour A}\kappa^-\kappa^+\in C_{\behaviour A}$ while $\pathLL m_{\behaviour A}\kappa^-\daimon\in V_{\behaviour A}$. 
	By applying the positive saturation of $C$ to  paths belonging to $\pathLL m_{\behaviour A}\shuffle (+,\xi,J)$, to $ \pathLL n_{\behaviour A}\kappa^-\kappa^+\shuffle (+,\xi,J)$ and $\pathLL m_{\behaviour A}\kappa^-\daimon\shuffle (+,\xi,J)$, we may conclude that $\pathLL m_{\behaviour A}\kappa^-\kappa^+$ belongs to $V_{\behaviour A}$.
	Idem with the behaviour $\behaviour B$.
	\end{itemize}

\item We prove now that $C=C_{\behaviour A}\lshuffle C_{\behaviour B}$ iff $\fullview C=\fullview{C_{\behaviour A}}\otimes\fullview{C_{\behaviour B}}$.
As sets of views, the designs $\fullview C$ and $\fullview{C_{\behaviour A}}\otimes\fullview{C_{\behaviour B}}$ are clearly identical when $ C= C_{\behaviour A}\shuffle C_{\behaviour B}$. Moreover the set of legal paths of $\fullview C$ being the sets of shuffles of  views of $\fullview C$ which are legal, we have that  $C=C_{\behaviour A}\lshuffle C_{\behaviour B}$ as soon as $\fullview C=\fullview{C_{\behaviour A}}\otimes\fullview{C_{\behaviour B}}$.

\item Let $\behaviour A$ and $\behaviour B$ be regular, then $V_{\behaviour A}/\dual{V_{\behaviour A}}$ and $V_{\behaviour B}/\dual{V_{\behaviour B}}$ are regular data. Thus $V_{\behaviour A}\lshuffle V_{\behaviour B}/\dual{V_{\behaviour A}\lshuffle V_{\behaviour B}}$ is regular (proposition~\ref{prop:shuffleregular}), hence $V_{\behaviour A}\lshuffle V_{\behaviour B}$ is ludicable (proposition~\ref{prop:RegularThenLudicable}). We prove that the behaviour with $V_{\behaviour A}\lshuffle V_{\behaviour B}$ as visitable paths is $\behaviour C = \behaviour A\otimes \behaviour B$. 
	Indeed, a design of $|\behaviour C|$ is $\fullview C$ when $C$ is a maximal, positively saturated clique of $V_{\behaviour A}\lshuffle V_{\behaviour B}$, such that $\dual C$ is finite stable, that is exactly a design $\fullview{C_{\behaviour A}}\otimes\fullview{C_{\behaviour B}}$, or in other words a design $\design A\otimes\design B$ when $\design A$ and $\design B$ belong respectively to $|\behaviour A|$ and $|\behaviour B|$ (by proposition~\ref{prop:caracDessinIncarnation}). By internal completeness, such designs are exactly the ones of $|\behaviour A\otimes\behaviour B|$. Thus $\behaviour A \otimes \behaviour B$ is regular. Furthermore $V_{\behaviour A \otimes \behaviour B} = V_{\behaviour A} \lshuffle V_{\behaviour B}$.

\item Suppose now that $\behaviour A \otimes \behaviour B$ is regular. The inverse reasoning of the previous item yields $\behaviour A$ and $\behaviour B$ regular.
\qed
\end{itemize}

Let us notice that regularity is necessary to ensure $V_{\behaviour A\otimes \behaviour B}=V_{\behaviour A}\lshuffle V_{\behaviour B}$. Indeed, equality is not true even when behaviours are not view-closed as seen in example~\ref{example-nonregular}. However the set of visitable paths of a tensor of two behaviours may be characterized without considering hypothesis of regularity. The following proposition~\ref{prop:OperationsOnVisitable} is a joint work with A. Pavaux. One may find in~\cite{PavauxCSL17} the proof in a framework of Ludics {\em \`a la Terui}. A. Pavaux proposed also in~\cite{PavauxCSL17} a different proof of proposition~\ref{prop:TensorStableRegular} based on proposition~\ref{prop:OperationsOnVisitable}. In her paper, she uses these results for studying a representation of data and function types.

\begin{restatable}{prop}{restateOperationsOnVisitable}
\label{prop:OperationsOnVisitable}
Let $\behaviour{P}$ and $\behaviour{Q}$ be alien positive behaviours,\\
$\pathLL r \in V_{\behaviour P \otimes \behaviour Q}$ iff the two following conditions are satisfied:
\begin{itemize}
	\item {\em(Shuffle condition)} $\pathLL r \in V_{{\behaviour{P}}} \lshuffle V_{{\behaviour{Q}}}$,
	\item {\em(Dual condition)} for all path $\overline{\pathLL s \kappa^-}$ in $\fullview{\,\dual{\pathLL r}\,}$, if there exist paths $\pathLL p' = (+,\xi,I)\pathLL p'_1 \in V_{\behaviour{P}}$\\
		and $\pathLL q' = (+,\xi,J)\pathLL q'_1 \in V_{\behaviour{Q}}$ with $\pathLL s \in (+,\xi,I \cup J)(\pathLL p'_1 \shuffle \pathLL q'_1)$, \\
		then either $\pathLL p' \kappa^-\daimon \in V_{\behaviour P}$ or $\pathLL q' \kappa^-\daimon \in V_{\behaviour Q}$.
\end{itemize}
\end{restatable}
\begin{proof}
See annex~\ref{subsec:proofs}.
\end{proof}

We are now able to state the main theorem of this section: MALL formulas are denoted by regular and finite behaviours. A behaviour is finite when it contains only a finite number of designs in the incarnation and these designs are finite, \ie, each such design has a finite number of actions.
\begin{thm}\label{th:regular-finite}
Let $\behaviour E$ be a behaviour. $\behaviour E$ is regular and finite iff   it is generated by the following grammar:
\begin{align*}
\behaviour P &::= \bf 0 ~|~ \bf 1 ~|~ \bigshpos \behaviour N ~|~ {\behaviour N}^\perp ~|~ \behaviour P\otimes \behaviour P ~|~ \behaviour P \oplus \behaviour P
\\
\behaviour N &::= \boldtop ~|~ \boldbot ~|~ \bigshneg \behaviour P ~|~ {\behaviour P}^\perp ~|~ \behaviour N \parr \behaviour N ~|~ \behaviour N \with \behaviour N         
\end{align*}
\end{thm}
\begin{proof}
Let $\behaviour E$ be a regular and finite behaviour, hence the proof may be done by induction on the number of distinct actions present in designs in the incarnation. The main ingredients are the following. If $\behaviour E$ has a negative base, consider its dual $\behaviour E^\perp$, thus of positive base, that has the same number of distinct actions. 
If $\behaviour E$ is different from $\bf 0$ and $\bf 1$ then $\behaviour E$ may be decomposed as a $\oplus$ of connected behaviours $\behaviour E_i$, one for each distinct first action~\cite{DBLP:journals/mscs/Girard01}. 
Obviously each $\behaviour E_i$ is a finite behaviour.
Remark that the family $(V_{\behaviour E_i}/\dual{V_{\behaviour E_i}})$ is the family of maximal non-additive data such that $V_{\behaviour E_i} \subset V_{\behaviour E}$. 
Thus $\behaviour E_i$ is a regular behaviour (consequence of Lemma~\ref{lem:AdditiveNonAdditive}). 
A connected, regular and finite behaviour may be decomposed as a tensor of regular and finite behaviours (Proposition~\ref{prop:TensorStableRegular}), or a shift when the first ramification has only one element.
Conversely,
it is immediate that $\bf 0$ and $\bf 1$ are regular and regularity is stable by connectives $\cdot^\perp$ and $\bigshpos$. Finally it follows from Lemma~\ref{lem:AdditiveNonAdditive} and Proposition~\ref{prop:TensorStableRegular} that regularity is also stable by $\oplus$ and $\otimes$.
\end{proof}

Note that theorem~\ref{th:regular-finite} is a full characterization of MALL behaviours amongst finite behaviours: there exist finite behaviours that are not regular, hence not generated by the previous grammar.
For example we already mentioned that the behaviour $\behaviour B$ defined in example~\ref{example-nonregular}, obviously finite, is not regular. Let us give some insight on the non-regularity of $\behaviour B$. For ease of reading, we recall here that the incarnation of $\behaviour B$ is the daimon-closure of the set $\{\design B_1, \design B_2\}$:
\begin{center}
\begin{tabular}{cc}
$\design B_1 =\!\!\!\!\!\!\!\!\!\!\!\!\!\!\!\!
\scalebox{.8}{
\infer[\gamma_1^+]{\vdash \langle\rangle}{
\infer[\lambda_0^-]{\beta \vdash}
	{
	\infer[\lambda_0^+]{\vdash \beta0}
		{
		\infer[\lambda_1^-]{\beta01 \vdash}
			{
			\infer[\lambda_1^+]{\vdash \beta011}
				{
				\beta0111 \vdash
				}
			}
		&
		\infer[\lambda_2^-]{\beta02 \vdash}
			{
			\infer[\lambda_2^+]{\vdash \beta022}
				{
				\beta0222 \vdash
				}
			}
		}
	}
}
}
$
&
$\design B_2 =\!\!\!\!\!\!\!\!\!\!\!\!\!\!
\scalebox{.8}{
\infer[\gamma_1^+]{\vdash \langle\rangle}{
\infer[\lambda_0^-]{\beta \vdash}
	{
	\infer[\lambda_0^+]{\vdash \beta0}
		{
		\infer[\lambda_1^-]{\beta01 \vdash}
			{
			\infer[\daimon]{\vdash \beta011}
				{
				}
			}
		&
		\beta02 \vdash
		}
	}
}
}
$
\end{tabular}
\end{center}
Obviously we may begin its decomposition: $\behaviour B = \bigshpos \bigshneg \behaviour C$ where the incarnation of $\behaviour C$  is the daimon-closure of the set $\{\design C_1, \design C_2\}$:
\begin{center}
\begin{tabular}{cc}
$\design C_1 =\!\!\!\!\!\!\!\!
\scalebox{.8}{
	\infer[\lambda_0^+]{\vdash \beta0}
		{
		\infer[\lambda_1^-]{\beta01 \vdash}
			{
			\infer[\lambda_1^+]{\vdash \beta011}
				{
				\beta0111 \vdash
				}
			}
		&
		\infer[\lambda_2^-]{\beta02 \vdash}
			{
			\infer[\lambda_2^+]{\vdash \beta022}
				{
				\beta0222 \vdash
				}
			}
		}
}
$
&
$\design C_2 =\!\!\!\!\!\!
\scalebox{.8}{
	\infer[\lambda_0^+]{\vdash \beta0}
		{
		\infer[\lambda_1^-]{\beta01 \vdash}
			{
			\infer[\daimon]{\vdash \beta011}
				{
				}
			}
		&
		\beta02 \vdash
		}
}
$
\end{tabular}
\end{center}
The behaviour $\behaviour C$ is not regular. By the way, as the first proper action is unique and positive but not a shift, if $\behaviour C$ were regular, $\behaviour C$ should be a tensor of behaviours. But the tensor is commutative. In terms of interaction paths, this means that the path $\lambda_0^+\lambda_1^-\lambda_1^+\lambda_2^-\lambda_2^+$ (left then right chronicle in $\design C_1$) is visitable iff the path $\lambda_0^+\lambda_2^-\lambda_2^+\lambda_1^-\lambda_1^+$ (right then left chronicle in $\design C_1$) is visitable. However the first path is visitable but not the second one. In other words, interaction has to visit the left chronicle of $\design C_1$ before the right one: non-commutativity is present there. Next section is devoted to understand in which extent non-commutative tensors may be considered in this framework.

Non-commutativity is not the only kind of non-regularity that we may observe as shown in example~\ref{exa:entangled_behaviour}.
\begin{exa}\label{exa:entangled_behaviour} 
Let us consider the behaviour $\behaviour{D} = \{\design{D}_1,\design{D}_2\}^{\perp\perp}$ where designs $\design{D}_1$ and $\design{D}_2$ are given below. 
\begin{center}
$\design{D}_1 = 
\scalebox{.75}{
\infer{\vdash \xi}
{
	\infer{\xi1 \vdash}
	{
		\infer[\alpha_1]{\vdash \xi10}
			{\xi100 \vdash}
	}
	&
	\infer{\xi2 \vdash}
	{
		\infer[\beta_1]{\vdash \xi20}
			{\xi200 \vdash}
	}
}
}
$
~~~~~
$\design{D}_2 = 
\scalebox{.75}{
\infer{\vdash \xi}
{
	\infer{\xi1 \vdash}
	{
		\infer[\alpha_2]{\vdash \xi10}
			{\xi101 \vdash}
	}
	&
	\infer{\xi2 \vdash}
	{
		\infer[\beta_2]{\vdash \xi20}
			{\xi201 \vdash}
	}
}
}
$
\end{center}
$\behaviour{D}$ is a finite behaviour whose incarnation is the daimon-closure of these two designs, and $\behaviour D$ is not regular. It cannot also be interpreted in terms of non-commutativity. Indeed, during an interaction, action $\alpha_1$ may be followed by action $\beta_1$ (or the converse), but not by action $\beta_2$ (or $\alpha_2$). And this situation is symmetric changing indices $1$ and $2$ in the previous statement. This situation is a kind of entanglement as part of a (standard) tensor.
We let the study of such a situation to further works. 
\end{exa}

\section{Beyond Regular Behaviours}\label{sec:SimpleOrientedTensor}

In previous sections we presented examples of behaviours that are not regular, hence are not generated by the connectives of multiplicative-additive Linear Logic. We propose in this section a study of non-commutative connectives, as they are defined in the literature, plus a new one.

Non-commutativity has been a subject of interest in Logic. J. Lambek~\cite{lambek} developed an intuitionistic non-commutative logic by omitting the exchange rule of the sequent calculus but giving rise to two symbols of implication. With regards to Linear Logic, D. Yetter~\cite{Yetter90} (after a talk given by J.-Y. Girard in 1987) studied a cyclic version: the exchange rule is replaced by a cyclic one over the list of formulas of a sequent. P. Ruet~\cite{DBLP:journals/mscs/Ruet00}, with further works with M. Abrusci~\cite{DBLP:journals/apal/AbrusciR99} and R. Maieli~\cite{DBLP:journals/iandc/MaieliR03}, developed a logic fully integrating commutativity and non-commutativity. Again, this is the exchange rule that is replaced. In all these cases, non-commutativity arises {\em spatially}: the structure over the set/list of formulas in a sequent is non-commutative, however the choice of formula to be decomposed (in a bottom-up reading of proofs) does not depend on this structure. 
In the following, we begin with the study of a non-commutative tensor $\olessthan$ due to J.-Y. Girard~\cite{DBLP:journals/mscs/Girard01} but stable by regularity, hence adding it to MALL connectives does not extend the set of behaviours. This connective is not spatial but {\em prioritize} between operands when a choice between actions has to be done: only one action is kept. Then we study connectives that deal with {\em temporality}: what is at stake is to begin interaction with one operand before the other. This is the case with the sequoid $\oslash$ connective proposed by M. Churchill, J. Laird and G. McCusker~\cite{DBLP:journals/apal/ChurchillLM13}. Regularity is not stable by this connective as well as a new one we propose in a last subsection.

\subsection{A not so significant non-commutative connective.}\label{subsec:oriented_Girard}

If J.-Y. Girard presented non-commutative connectives in his seminal paper of Ludics~\cite{DBLP:journals/mscs/Girard01}, these are of no help for decomposing a behaviour. It is the case for the non-commutative tensor product $\olessthan$.
Let $\design A$ and $\design B$ be two designs, in $\design A \olessthan \design B$ priority is given to chronicles of $\design B$ when a choice has to be done. In terms of execution, \ie, visitable paths, priority is given to the strategy $\design B$ again the strategy of $\design A$.
The connective $\olessthan$ is defined when behaviours are not disjoint, \ie, the ramification of their first actions is not necessarily disjoint:
\begin{defi}
Let $\design A$ and $\design B$ be two positive designs of same base $\vdash \xi$,
	\begin{itemize}
	\item If $\design A = \{\daimon\}$ or $\design B = \{\daimon\}$, $\design A \olessthan \design B = \design A \otimes \design B = \{\daimon\}$.
	\item Otherwise let $\design A = (+,\xi,I)(\bigcup_{j\in J, K_j} (-,\xi.j,K_j)\design A_{j,K_j} \cup \bigcup_{i\not\in J, L_i}(-,\xi.i,L_i)\design A_{i,L_i})$ and $\design B = (+,\xi,J)\design B'$, then $\design A \olessthan \design B = \design B \otimes (+,\xi,I \setminus J)\bigcup_{i\not\in J, L_i}(-,\xi.i,L_i)\design A_{i,L_i}$.
	\end{itemize} 
Let $\behaviour A$ and $\behaviour B$ be two behaviours, $\behaviour A \olessthan \behaviour B = \{\design A \olessthan \design B ~;~ \design A \in \behaviour A, \design B \in \behaviour B\}^{\perp\perp}$.
\end{defi}

The tensor $\olessthan$ is non-commutative, however, as it immediately follows from the definition, the design $\design A \olessthan \design B$ may always be viewed as the tensor of two designs.
As the connective $\olessthan$ distributes over $\oplus$~\cite{DBLP:journals/mscs/Girard01}, we can study this connective $\olessthan$ on connected behaviours, \ie, behaviours with a unique first ramification: in this case in particular internal completeness is satisfied.
We remark in the following lemma a property that relates the non-commutative tensor of two behaviours with a commutative tensor of two behaviours.
Hence augmenting the grammar for regular behaviours with such a non-commutative tensor does not change the language of behaviours. 

\begin{lem}
Let $\behaviour A$ and $\behaviour B$ be two connected positive behaviours of base $\vdash \xi$, let $J$ be such that $\behaviour B = \{\daimon\} \cup (+,\xi,J)\behaviour B'$. We consider the following function $\phi$: let $\design A \in \behaviour A$,
$$\phi(\design A) = \design A \backslash \bigcup_{i \in I\cap J,K_i} (+,\xi,I)(-,\xi.i,K_i)\design A_{i,K_i}$$
Let $\phi(\behaviour A) = \{\phi(\design A) ~;~ \design A \in \behaviour A\}$.
Then $\behaviour A \olessthan \behaviour B = \phi(\behaviour A)^{\perp\perp} \otimes \behaviour B$.
\end{lem}
\begin{proof}
Note that $\phi(\design A)$ is always a design hence $\phi(\behaviour A)^{\perp\perp}$ is a behaviour.
Remark first that, as behaviours are connected, for designs $\design A \in \behaviour A$ and $\design B \in \behaviour B$, we have that $\design A \olessthan \design B = \phi(\design A) \otimes \design B$. \\
Hence $\behaviour A \olessthan \behaviour B 
	= (\phi(\behaviour A) \otimes \behaviour B)^{\perp\perp}
	= (\phi(\behaviour A)^{\perp\perp} \otimes \behaviour B)^{\perp\perp}
	= \phi(\behaviour A)^{\perp\perp} \otimes \behaviour B
$.
\end{proof}


\subsection{The sequoid $\oslash$ game~\cite{DBLP:journals/apal/ChurchillLM13}}\label{subsec:sequoid}

M. Churchill, J. Laird and G. McCusker proposed in~\cite{DBLP:journals/apal/ChurchillLM13} a first-order logic {\tt WS1} and a games model in which proofs denote history-sensitive strategies. We do not consider here the full logic but only represent in terms of Ludics the essence of their main specific non-commutative tensor connective $\oslash$. Briefly speaking, the first move in a play of $\behaviour A \oslash \behaviour B$ has to be done in $\behaviour A$, the following moves are considered as with the standard commutative tensor. We show how such a connective may be defined in our setting by considering an adequate set of visitable paths. Obviously such a set of visitable paths should contain $\daimon$ and the first (positive) action of $\behaviour A \otimes \behaviour B$. Furthermore, it should contain visitable paths of $\behaviour A \otimes \behaviour B$ with the restriction that the second action ``comes'' from $\behaviour A$.

\begin{prop}\label{prop:ludicabilityofOslash}
Let $\behaviour A$ and $\behaviour B$ be two connected disjoint positive behaviours of base $\vdash \xi$,
let $\kappa_0^+$ (resp.\ $\kappa_{\behaviour A}^+$) be the first action of designs in $\behaviour A \otimes \behaviour B$ (resp.\ $\behaviour A$), 
let $S = \{\daimon, \kappa_0^+\} \cup \{\kappa_0^+\kappa_0^-\pathLL s ~;~ \kappa_{\behaviour A}^+\kappa_0^-\daimon \in V_{\behaviour A}, \kappa_0^+\kappa_0^-\pathLL s \in V_{\behaviour A \otimes \behaviour B}\}$.
Then $S$ is ludicable.
\end{prop}
\begin{proof}
$S$ is a set of legal paths as a subset of $V_{\behaviour A \otimes \behaviour B}$.
$S$ is also daimon and prefix closed.
\\
$S$ is pre-ludicable:
	\begin{itemize}
	\item Positive saturation: let $\pathLL p \in S$, let us consider the set $C_{\pathLL p}$ for $S$. Let $\pathLL m \in C_{\pathLL p}$, $\pathLL n\kappa^-\kappa^+ \in C_{\pathLL p}$, $\pathLL m\kappa^-\daimon \in S$, note that $C_{\pathLL p} \subset D_{\pathLL p}$ where $D_{\pathLL p}$ is the positively saturated maximal clique for $\pathLL p$ in $V_{\behaviour A \otimes \behaviour B}$. Thus $\pathLL m \kappa^-\kappa^+ \in D_{\pathLL p}$. Hence $\pathLL m \kappa^-\kappa^+ \in C_{\pathLL p}$ as $\pathLL m \kappa^-\kappa^+$ satisfies the required properties for being in $S$. Thus $C_{\pathLL p}$  is positively saturated.
	\item Negative saturation: let $\pathLL p \in S$, $\pathLL p \kappa^-\daimon$ be a legal path, suppose that for all positively saturated maximal clique $C$ of $S$, $\fullview{C} \perp \fullview{\,\overline{\pathLL p\kappa^-}\,}^c$, then for all positively saturated maximal clique $D$ of $V_{\behaviour A \otimes \behaviour B}$, we have also that $\fullview{D} \perp \fullview{\,\overline{\pathLL p\kappa^-}\,}^c$, then $\pathLL p\kappa^-\daimon \in V_{\behaviour A \otimes \behaviour B}$. Thus $\pathLL p\kappa^-\daimon \in S$ as conditions are fulfilled.
	\end{itemize}
$\dual S$ is pre-ludicable: The proof is similar noticing that $\dual{V_{\behaviour A \otimes \behaviour B}}$ is pre-ludicable.
\end{proof}

It is straightforward to notice that $S$ as defined before is the set of visitable paths of a behaviour $\behaviour A \oslash \behaviour B$ with properties as required in the logic {\tt WS1}~\cite{DBLP:journals/apal/ChurchillLM13}. We do not go further on the study of {\tt WS1} in our framework, we just remark the strength of our approach to prove that a connective is well-defined when given in terms of visitable paths (or plays). Note finally that $\behaviour A \oslash \behaviour B$ is not a regular behaviour hence this connective may really augment the grammar of MALL.


\subsection{An absolute non-commutative connective.}

We propose another non-commutative connective written $\fullotensor$ defined by means of visitable paths. 
We shall here consider an oriented tensor such that a visitable path $\pathLL r$ of $\behaviour A \fullotensor \behaviour B$ consists of a visitable path $\pathLL p$ of $\behaviour A$ possibly followed by a visitable path $\pathLL q$ of $\behaviour B$ (except $\pathLL q$'s first action). 
When $\pathLL q$ is not empty, the path $\pathLL p$ should be maximal with respect to the design $\fullview{\pathLL p}$ it generates, otherwise it could be possible to switch back from $\pathLL q$ to a path in $\behaviour A$, what should be rejected. 
Note that if a visitable path is maximal with respect to its generated design then it is also maximal with respect to length. The converse is not always true as stated in example~\ref{exa:maxVisitIncluded}.


\begin{exa}\label{exa:maxVisitIncluded}
Let us consider the behaviour $\behaviour A = \{\design A\}^{\perp\perp}$ where the design $\design A$ is given below on the left.
Its dual $\behaviour A^\perp$ contains designs $\design B_1$ and $\design B_2$, drawn on the right, and $\daimon$-restrictions of these two designs.
Let us consider the two following paths:
\begin{itemize}
	\item $\pathLL p = \kappa^+\kappa^-\kappa_1^+\kappa_1^-\kappa_2^+\lambda_0^-\lambda_0^+\lambda_1^-\lambda_1^+ = \normalisationSeq{\design A}{\design B_1}$
	\item $\pathLL q = \kappa^+\kappa^-\kappa_1^+\lambda_0^-\lambda_0^+\kappa_1^-\kappa_2^+ = \normalisationSeq{\design A}{\design B_2}$
\end{itemize}

Paths $\pathLL p$ and $\pathLL q$ are maximal in length among visitable paths of $\behaviour A$. Furthermore $\fullview{\pathLL q} \subsetneq \fullview{\pathLL p} = \design A$, hence $\pathLL q$ is not maximal with respect to generated designs.

\renewcommand{\inferstyle}{\scriptstyle}

\begin{center}
$\design A = 
\infer[\scriptstyle \kappa^+]{\vdash \xi}{
\infer[\scriptstyle \kappa^-]{\alpha \vdash}
	{
	\infer[\scriptstyle \kappa_1^+]{\vdash \alpha1, \alpha2}
		{
		\infer[\scriptstyle \kappa_1^-]{\alpha11 \vdash \alpha2}
			{
			\infer[\scriptstyle \kappa_2^+]{\vdash \alpha111, \alpha2}
				{
				\alpha22 \vdash \alpha111
				}
			}
		}
	}
&
\infer[\scriptstyle \lambda_0^-]{\beta \vdash}
	{
	\infer[\scriptstyle \lambda_0^+]{\vdash \beta0}
		{
		\infer[\scriptstyle \lambda_1^-]{\beta00 \vdash}
			{
			\infer[\scriptstyle \lambda_1^+]{\vdash \beta000}
				{
				\beta0000 \vdash
				}
			}
		}
	}
}
$
~~~~~~
$\design B_1 = \!\!\!\!
\infer{\xi \vdash}
{
\infer{\vdash \alpha, \beta}
	{
	\infer{\alpha1 \vdash}
		{
		\infer{\vdash \alpha11}{\alpha111 \vdash}
		}
	&
	\infer{\alpha2 \vdash \beta}
		{
		\infer{\vdash \alpha22, \beta}
			{
			\infer{\beta0 \vdash \alpha22}
				{
				\infer{\vdash \beta00, \alpha22}
					{
					\infer{\beta000 \vdash \alpha22}
						{
						\infer[\daimon]{\vdash \beta0000, \alpha22}{}
						}
					}
				}
			}
		}
	}
}
$
~~~~~~
$\design B_2 = \!\!\!\!
\infer{\xi \vdash}
{
\infer{\vdash \alpha, \beta}
	{
	\infer{\alpha1 \vdash}
		{
		\infer{\vdash \alpha11, \beta}
			{
			\infer{\beta0 \vdash \alpha11}
				{
				\infer{\vdash \beta00, \alpha11}
					{
					\alpha111 \vdash \beta00
					}
				}
			}
		}
	&
	\infer{\alpha2 \vdash}
		{
		\infer[\daimon]{\vdash \alpha22}{}
		}
	}
}
$

\end{center}
\renewcommand{\inferstyle}{\textstyle}

\end{exa}

To prove that such an oriented connective $\fullotensor$ may be well-defined (proposition~\ref{prop:orientedTensor}), we consider the set of legal paths that should be visitable. We prove first that such a set of legal paths is ludicable (lemmas~\ref{lem:S_legalPaths},~\ref{lem:S_preludicable},~\ref
{lem:dualS_preludicable}), hence the existence of the connective follows: The proof of proposition~\ref{prop:orientedTensor} is then immediate.


\begin{restatable}{prop}{restateOrientedTensor}
\label{prop:orientedTensor}
Let $\behaviour A$ and $\behaviour B$ be two connected disjoint positive behaviours of base $\vdash \xi$, let us consider the three following definitions:
	\begin{itemize}
	\item $V_{\behaviour A_{[\behaviour B]}}$ is the set of paths $\kappa w$ of $V_{\behaviour A \otimes \behaviour B}$ such that $w$ contains only actions of  $V_{\behaviour A}$.
	\item $V_{\behaviour A_{[\behaviour B]}}^{max} = \{\pathLL p ~;~ \pathLL p\in V_{\behaviour A_{[\behaviour B]}}, \pathLL p~\daimon$-free, $\nexists \pathLL q \in V_{\behaviour A_{[\behaviour B]}}, \fullview{\pathLL p} \subsetneq \fullview{\pathLL q}\}$
	\item $V_{\behaviour B}^- = \{\pathLL q ~;~ \exists \kappa^+, \kappa^+\pathLL q\in V_{\behaviour B}\}$
	\end{itemize}
Suppose moreover that the behaviour $\behaviour A$ satisfies the following constraint $(C)$:
	\begin{itemize}
	\item[] $(C)$ For each $\pathLL p \in V_{\behaviour A}$ and $\pathLL q \in V_{\behaviour A}^{max}$ such that $\pathLL p$ and $\pathLL q$ end on the same (positive) action and $\fullview{\overline{\pathLL p}} \subset \fullview{\overline{\pathLL q}}$, then $\pathLL p \in V_{\behaviour A}^{max}$.
	\end{itemize}
Then $S=  V_{\behaviour A_{[\behaviour B]}} \cup V_{\behaviour A_{[\behaviour B]}}^{max}V_{\behaviour B}^-$ is ludicable.\\
We note $\behaviour A \fullotensor \behaviour B$ the behaviour such that $V_{\behaviour A \fullotensor \behaviour B} = S$.
\end{restatable}

Remark that $V_{\behaviour A_{[\behaviour B]}}^{max}$ is the subset of $V_{\behaviour A_{[\behaviour B]}}$ such that paths generate designs that have no extension in the incarnation $|\behaviour A_{[\behaviour B]}|$ of $\behaviour A_{[\behaviour B]}$: such paths are maximal with respect to designs they generate in $|\behaviour A_{[\behaviour B]}|$. 
Remark also that $V_{\behaviour B}^-$ is obtained by deleting the first action of a path of $V_{\behaviour B}$: this first action is already `taken into account' by the first action of paths of $V_{\behaviour A_{[\behaviour B]}}$. 
Example~\ref{exa:orientedBehaviour} proposes such a construction.
Example~\ref{exa:constraintC_counterEx} shows that constraint $(C)$ is required otherwise unexpected visitable paths appear. However there is always a possibility to define the ludicable closure of $S$ (see 
subsection~\ref{subsec:LudicableClosures}). Thus the connective $\fullotensor$ may always be defined even if the structure of visitable paths is not guaranteed when condition $(C)$ is not fulfilled (see next subsection for basic properties in that case).


\begin{exa}\label{exa:orientedBehaviour}
Let $\behaviour A$ be the behaviour $\{\design A\}^{\perp\perp}$ and $\behaviour B$ be the behaviour $\{\design B\}^{\perp\perp}$, then the behaviour $\behaviour A \otimes \behaviour B = \{\design A \otimes \design B\}^{\perp\perp}$ whereas $\behaviour A \fullotensor \behaviour B = \{\design A \otimes \design B, \design C, \design D\}^{\perp\perp}$ (designs $\design A$, $\design B$, $\design C$ and $\design D$ are drawn below). Indeed we have that the behaviour $\behaviour A$ satisfies constraint $(C)$ and:
\begin{itemize}
	\item $V_{\behaviour A}^{max} = \{
		\kappa_0^+\kappa_1^-\kappa_1^+\kappa_2^-\kappa_2^+,
		\kappa_0^+\kappa_2^-\kappa_2^+\kappa_1^-\kappa_1^+
		\}$
		and $V_{\behaviour A} = V_{\behaviour A}^{max\daimon}$
	\item $V_{\behaviour B}^- = \{\lambda_3^-\lambda_3^+\}$
\end{itemize}

The dual behaviour $(\behaviour A \fullotensor \behaviour B)^\perp$ is given as $\{\design X_1, \design X_2\}^{\perp\perp}$.
Note that behaviours $\behaviour A$, $\behaviour B$ and $\behaviour A \otimes \behaviour B$ are regular whereas $\behaviour A \fullotensor \behaviour B$ is not regular: the sequence $(+,\xi,\{1,2,3\})\lambda_3^-\lambda_3^+$ is a chronicle in a design of $|\behaviour A \fullotensor \behaviour B|$ however this chronicle is not visitable in $\behaviour A \fullotensor \behaviour B$. This example is sufficient to prove that $\behaviour A \fullotensor \behaviour B$ cannot be generated by connectives of Linear Logic.

\renewcommand{\inferstyle}{\scriptstyle}
\begin{center}
$
\design A = \!\!\!\!
	\infer[\scriptstyle \kappa_0^+]{\vdash \xi}
	{
		\infer[\scriptstyle \kappa_1^-]{\xi1 \vdash}
		{
			\infer[\scriptstyle \kappa_1^+]{\vdash \xi11}
			{
				\xi111 \vdash
			}
		}
		&
		\infer[\scriptstyle \kappa_2^-]{\xi2 \vdash}
		{
			\infer[\scriptstyle \kappa_2^+]{\vdash \xi22}
			{
				\xi222 \vdash
			}
		}
	}
~~~~~~
\design B = \!
	\infer[\scriptstyle \lambda_0^+]{\vdash \xi}
	{
		\infer[\scriptstyle \lambda_3^-]{\xi3 \vdash}
		{
			\infer[\scriptstyle \lambda_3^+]{\vdash \xi33}
			{
				\xi333 \vdash
			}
		}
	}
$
\\
\vspace{3mm}
$
\design A \otimes \design B = \!\!\!\!
	\infer{\vdash \xi}
	{
		\infer{\xi1 \vdash}
		{
			\infer{\vdash \xi11}
			{
				\xi111 \vdash
			}
		}
		&
		\infer{\xi2 \vdash}
		{
			\infer{\vdash \xi22}
			{
				\xi222 \vdash
			}
		}
		&
		\infer{\xi3 \vdash}
		{
			\infer{\vdash \xi33}
			{
				\xi333 \vdash
			}
		}
	}
~~~~~
\design C = \!\!\!\!
	\infer{\vdash \xi}
	{
		\infer{\xi1 \vdash}
		{
			\infer[\daimon]{\vdash \xi11}
			{
			}
		}
		&
		\infer{\xi2 \vdash}
		{
			\infer{\vdash \xi22}
			{
				\xi222 \vdash
			}
		}
		&
		\xi3 \vdash
	}
~~~~~
\design D = \!\!\!\!
	\infer{\vdash \xi}
	{
		\infer{\xi1 \vdash}
		{
			\infer{\vdash \xi11}
			{
				\xi111 \vdash
			}
		}
		&
		\infer{\xi2 \vdash}
		{
			\infer[\daimon]{\vdash \xi22}
			{
			}
		}
		&
		\xi3 \vdash
	}
$
\\
\vspace{3mm}
$
\design X_1 =
	\infer{\xi \vdash}
	{
		\infer{\vdash \xi1, \xi2, \xi3}
		{
			\infer{\xi11 \vdash \xi2, \xi3}
			{
				\infer{\vdash \xi111, \xi2, \xi3}
				{
					\infer{\xi22 \vdash \xi111, \xi3}
					{
						\infer{\vdash \xi111, \xi222, \xi3}
						{
							\infer{\xi33 \vdash \xi111, \xi222}
							{
								\infer[\daimon]{\vdash \xi111, \xi222, \xi333}{}
							}
						}
					}
				}
			}
		}
	}
$
~~~~~~
$
\design X_2 =
	\infer{\xi \vdash}
	{
		\infer{\vdash \xi1, \xi2, \xi3}
		{
			\infer{\xi22 \vdash \xi1, \xi3}
			{
				\infer{\vdash \xi1, \xi222, \xi3}
				{
					\infer{\xi11 \vdash \xi222, \xi3}
					{
						\infer{\vdash \xi111, \xi222, \xi3}
						{
							\infer{\xi33 \vdash \xi111, \xi222}
							{
								\infer[\daimon]{\vdash \xi111, \xi222, \xi333}{}
							}
						}
					}
				}
			}
		}
	}
$
\end{center}
\renewcommand{\inferstyle}{\textstyle}

\end{exa}


\begin{exa}\label{exa:constraintC_counterEx}
Let us consider the behaviour $\behaviour{A} = \{\design{E}, \design{F}\}^{\perp\perp}$ where designs $\design{E}$ and $\design{F}$ are given below. Its dual behaviour is $\behaviour{A}^\perp = \{\design{G}_{0}, \design{G}_{1}, \design{G}_{2}\}^{\perp\perp}$ with designs $\design{G}_{1}$, $\design{G}_{2}$ and $\design{G}_{3}$ given below.

\renewcommand{\inferstyle}{\scriptstyle}
\begin{center}
$\design{E} =
\infer[\scriptstyle \langle\rangle^+]{\vdash \langle\rangle}
	{
	\infer[\scriptstyle \kappa^-]{\xi \vdash}
		{
		\infer[\scriptstyle \kappa_2^+]{\vdash \xi1, \xi2}
			{
			\infer[\scriptstyle \kappa_{22}^-]{\xi22 \vdash \xi1}
				{
				\infer[\scriptstyle \kappa_1^+]{\vdash \xi222, \xi1}
					{
					\xi11 \vdash \xi222
					}
				}
			}
		}
	&
	\infer[\scriptstyle \lambda^-]{\sigma \vdash}
		{
		\infer[\scriptstyle \lambda^+]{\vdash \sigma1}
			{\sigma11 \vdash}
		}
	}
$
~~~~~~~~
$\design{F} =
\infer[\scriptstyle \langle\rangle^+]{\vdash \langle\rangle}
	{
	\infer[\scriptstyle \kappa^-]{\xi \vdash}
		{
		\infer[\scriptstyle \kappa_1^+]{\vdash \xi1, \xi2}
			{
			\xi11 \vdash \xi2
			}
		}
	&
	\infer[\scriptstyle \lambda^-]{\sigma \vdash}
		{
		\infer[\scriptstyle \lambda^+]{\vdash \sigma1}
			{\sigma11 \vdash}
		}
	}
$
\\
\vspace{3mm}
$\design{G}_{0} =
\infer{\langle\rangle \vdash}
	{
	\infer{\vdash \xi,\sigma}
		{
		\infer{\xi1 \vdash \sigma}
			{
			\infer{\vdash \xi11, \sigma}
				{
				\infer{\sigma1 \vdash \xi11}
					{
					\infer[\daimon]{\vdash \xi11, \sigma11}
						{}
					}
				}
			}
		&
		\infer{\xi2 \vdash}
			{
			\infer{\vdash \xi22}
				{
				\xi222 \vdash
				}
			}
		}
	}
$
~~~~~~~~
$\design{G}_{1} =
\infer{\langle\rangle \vdash}
	{
	\infer{\vdash \xi, \sigma}
		{
		\infer{\xi1 \vdash}
			{
			\infer[\daimon]{\vdash \xi11}
				{
				}
			}
		&
		\infer{\xi2 \vdash \sigma}
			{
			\infer{\vdash \xi22,\sigma}
				{
				\infer{\sigma1 \vdash \xi22}
					{
					\infer{\vdash \xi22, \sigma11}
						{\xi222 \vdash\sigma11}
					}
				}
			}
		}
	}
$
~~~~~~~~
$\design{G}_{2} =
\infer{\langle\rangle \vdash}
	{
	\infer{\vdash \xi, \sigma}
		{
		\infer{\sigma1 \vdash \xi}
			{
			\infer{\vdash \sigma11, \xi}
				{
				\infer{\xi1 \vdash}
					{
					\infer[\daimon]{\vdash \xi11}
						{
						}
					}
				&
				\infer{\xi2 \vdash}
					{
					\infer{\vdash \xi22}
						{
						\xi222 \vdash
						}
					}
				}
			}
		}
	}
$
\end{center}
\renewcommand{\inferstyle}{\textstyle}

\noindent Let us consider the three following paths:
\begin{itemize}
\item $\pathLL q = \langle\rangle^+\kappa^-\kappa_{2}^+\lambda^-\lambda^+\kappa_{22}^-$
\item $\pathLL p = \langle\rangle^+\kappa^-$
\item $\pathLL r = \langle\rangle^+\lambda^-\lambda^+\kappa^-$
\end{itemize}

\noindent We remark that:
\begin{itemize}
\item The three paths $\pathLL p\kappa_1^+$, $\pathLL q\kappa_1^+$, $\pathLL r\kappa_1^+$ are visitable in $\behaviour A$.
\item $\fullview{\pathLL q\kappa_1^+} = \design E \in |\behaviour A|$, $\fullview{\pathLL r\kappa_1^+} = \design F \in |\behaviour A|$, $\fullview{\dual{\pathLL q\kappa_1^+}} = \design G_1 \in |\behaviour A^\perp|$.
\item Thus $\pathLL q\kappa_1^+$ and $\pathLL r\kappa_1^+$ are maximal visitable paths of $\behaviour A$.
\item Paths $\pathLL p\kappa_1^+$ and $\pathLL q\kappa_1^+$ end on the same action and $\fullview{\overline{\pathLL p\kappa_1^+}} \subsetneq \fullview{\overline{\pathLL q\kappa_1^+}}$.
\item However the path $\pathLL p\kappa_1^+$ is not maximal as $\fullview{\pathLL p\kappa_1^+} \subsetneq \fullview{\pathLL r\kappa_1^+}$.
\end{itemize}
Hence the behaviour $\behaviour A$ does not satisfy the constraint $(C)$. Why is it a problem? Because unexpected paths may be visitable when one tries to apply an oriented tensor to it. Let us consider the behaviour $\behaviour B = \{\design B\}^{\perp\perp}$ where
\renewcommand{\inferstyle}{\scriptstyle}
$\design{B} =
\infer[]{\vdash \langle\rangle}
	{
	\infer[\scriptstyle \tau^-]{\tau \vdash}
		{
		\infer[\scriptstyle \tau^+]{\vdash \tau1}
			{\tau11 \vdash}
		}
	}
$.
\renewcommand{\inferstyle}{\textstyle}
Then the set $S=  V_{\behaviour A_{[\behaviour B]}} \cup V_{\behaviour A_{[\behaviour B]}}^{max}V_{\behaviour B}^-$ is not ludicable: $\dual S$ does not satisfy positive saturation. Indeed\footnote{We still note $\langle\rangle^+$ the action $(+,\langle\rangle,\{\xi,\sigma,\tau\})$, hence $\pathLL p$ and $\pathLL q$ are unchanged even if the base is changed.}, by definition of $S$, $\overline{\pathLL q\kappa_1^+\tau^-} \in \dual S$ and also $\overline{\pathLL p\kappa_1^+}\daimon \in \dual S$. As $\fullview{\overline{\pathLL p\kappa_1^+}} \subsetneq \fullview{\overline{\pathLL q\kappa_1^+}}$, any maximal clique $C$ of $\dual S$ that includes $\overline{\pathLL q\kappa_1^+\tau^-}$ includes also $\overline{\pathLL p}$. Thus, if $\dual S$ would satisfy positive saturation, we should have $\overline{\pathLL p\kappa_1^+\tau^-} \in \dual S$, \ie, $\pathLL p\kappa_1^+\tau^-\daimon \in S$, in contradiction with the definition of $S$.

\end{exa}


In the following, we use notations $\pathLL p, \pathLL p', \pathLL p_0, \dots$ for paths of $V_{\behaviour A_{[\behaviour B]}}$ or $V_{\behaviour A_{[\behaviour B]}}^{max}$, $\pathLL q, \pathLL q', \pathLL q_0, \dots$ for paths of $V_{\behaviour B}^-$, and $\pathLL r, \pathLL r', \dots$ for paths of $S$.
It is not difficult to prove that the set $S$ as defined in proposition~\ref{prop:orientedTensor} is made of legal paths.

\begin{lem}\label{lem:S_legalPaths}
$S$ is a set of legal paths.
\end{lem}
\begin{proof}
Paths of $V_{\behaviour A_{[\behaviour B]}}$ are legal: paths of $V_{\behaviour A}$ are legal and paths of $V_{\behaviour A_{[\behaviour B]}}$ differ from paths of $V_{\behaviour A}$ only because of the ramification of the first (hence positive) action that is a larger set.
\\
Let $\pathLL p\pathLL q \in V_{\behaviour A_{[\behaviour B]}}^{max}V_{\behaviour B}^-$ (with conventions of notation given above). $\pathLL p$ is legal as $\pathLL p \in  V_{\behaviour A_{[\behaviour B]}}$.
Note that an action in $\pathLL q$ is necessarily justified by the first action of $\pathLL p$: the base contains a unique positive address, and behaviours $\behaviour A$ and $\behaviour B$ are connected and disjoint.
Let $\pathLL p\pathLL q_0$ be a $\daimon$-free positive-ended prefix of $\pathLL p\pathLL q$, let $\kappa_0^+$ be the first action of $\pathLL p$, then we have that $\view{\pathLL p\pathLL q_0} = \view{\kappa_0^+\pathLL q_0}$. Note that $\kappa_0^+\pathLL q_0 \in V_{\behaviour B_{[\behaviour A]}}$ thus $\kappa_0^+\pathLL q_0$ is a legal path hence $\pathLL p\pathLL q_0$ is a path.
Let $\pathLL p\pathLL q_0$ be now a negative-ended prefix of $\pathLL p\pathLL q$. Remark first that $\overline{\pathLL q_0}$ begins with a positive action and that this positive action is necessarily justified by the first action $\overline{\kappa_0^+}$ of $\overline{\pathLL p}$: the base contains a unique positive address. Thus $\view{\,\overline{\pathLL p\pathLL q_0}\,} = \view{\,\overline{\pathLL p}\,}\view{\,\overline{\pathLL q_0}\,}$. As $\pathLL p$ and $\kappa_0^+\pathLL q_0$ are legal paths, it follows from the previous computation that $\overline{\pathLL p\pathLL q_0}$ is a path.
\end{proof}

Before we prove the two last lemmas necessary for proposition~\ref{prop:orientedTensor}, we consider two technical lemmas, which proofs are in the annex for ease of reading.

\begin{restatable}{lem}{restatecliqueDecomposition}
\label{lem:cliqueDecomposition}
$C$ is a positively saturated maximal clique of $S$ with first action $\kappa_0^+$ iff there is a unique decomposition $C = C_1 \cup C'C''$ such that $C_1$ is a positively saturated maximal clique of $V_{\behaviour A_{[\behaviour B]}}$ with first action $\kappa_0^+$, $C' = C_1 \cap V_{\behaviour A_{[\behaviour B]}}^{max}$ and $C''$ is empty if $C'$ is empty or $\kappa_0^+C''$ is a positively saturated maximal clique of $V_{\behaviour B_{[\behaviour A]}}$.
\end{restatable}
\begin{proof}
See annex~\ref{subsec:proofs:SimpleOrientedTensor}.
\end{proof}

\begin{restatable}{lem}{restatedualcliqueDecomposition}
\label{lem:dualcliqueDecomposition}
$D$ is a positively saturated maximal clique of $\dual S$ with first action $\kappa_0^-$ iff there is a unique decomposition $D = D_1 \cup \bigcup_{\kappa_1^- \in K} D'_{\kappa_1^-} D''_{\kappa_1^-}$ such that $D_1$ is a positively saturated maximal clique of $\dual{V_{\behaviour A_{[\behaviour B]}}}$ with first action $\kappa_0^-$, 
$K$ is the set of (negative) actions $\kappa_1^-$ such that $D_1\kappa_1^- \cap \overline{V_{\behaviour A_{[\behaviour B]}}^{max}} \neq \emptyset$,
$D'_{\kappa_1^-} = D_1\kappa_1^- \cap \overline{V_{\behaviour A_{[\behaviour B]}}^{max}}$
and each $\kappa_0^-D''_{\kappa_1^-}$ is a positively saturated maximal clique of $\dual{V_{\behaviour B_{[\behaviour A]}}}$ such that the first action of paths in $D''_{\kappa_1^-}$ is distinct of first action of paths in $D''_{\kappa_1'^-}$ when $\kappa_1^- \neq \kappa_1'^-$.
\end{restatable}
\begin{proof}
See annex~\ref{subsec:proofs:SimpleOrientedTensor}.
\end{proof}

\begin{lem}\label{lem:S_preludicable}
$S$ is pre-ludicable.
\end{lem}
\proof
Let $\kappa_0^+$ be the first action of paths of $\behaviour A_{[\behaviour B]}$. We write $C$ a positively saturated maximal clique of $S$. By lemma~\ref{lem:cliqueDecomposition}, we can write $C = C_1 \cup C'C''$ with properties as stated in this lemma. 
	\begin{itemize}
	\item By definition of $S$ and the fact that a set of visitable paths of a behaviour is ludicable, $S$ is prefix-closed and daimon-closed.
	\item (positive saturation) Let $\pathLL r$ be a path of $S$:
		\begin{itemize}
		\item Either $\pathLL r \in V_{\behaviour A_{[\behaviour B]}}$. As $V_{\behaviour A_{[\behaviour B]}}$ is ludicable, there exists a positively saturated maximal clique $C_1$ for $V_{\behaviour A_{[\behaviour B]}}$ such that $\pathLL r \in C_1$. Let $C''$ be a positively saturated maximal clique of $V_{\behaviour B}^-$. Let us consider $C = C_1 \cup (C_1 \cap V_{\behaviour A_{[\behaviour B]}}^{max})C''$. Then, by lemma~\ref{lem:cliqueDecomposition}, $C$ is a positively saturated maximal clique of $S$ that contains $\pathLL r$.
		\item Or $\pathLL r= \pathLL p\pathLL q \in V_{\behaviour A_{[\behaviour B]}}^{max}V_{\behaviour B}^-$. Let $C_1$ be a positively saturated maximal clique of $V_{\behaviour A_{[\behaviour B]}}$ that contains $\pathLL p$ and $C''$ be a positively saturated maximal clique of $V_{\behaviour B}^-$ that contains $\kappa_0^+\pathLL q$. Then, by lemma~\ref{lem:cliqueDecomposition}, we remark that $C_1 \cup (C_1 \cap V_{\behaviour A_{[\behaviour B]}}^{max})C''$ is a positively saturated maximal clique of $S$ that contains $\pathLL r$.
		\end{itemize}

	\item (negative saturation)\\
		Let $\pathLL{r} \in S$ and $\pathLL{r}\kappa^-\daimon$ be a legal path such that for all positively saturated maximal clique $C$ of $S$, we have that $\fullview{C} \perp \fullview{\overline{\pathLL r\kappa^-}}^c$. We have to prove that $\pathLL{r}\kappa^-\daimon \in S$:
		\begin{itemize}
		\item Either $\pathLL{r} \in V_{\behaviour A_{[\behaviour B]}}$: 

		If $\kappa^-$ is an action appearing in $V_{\behaviour B}^-$, note that $\kappa^-$ should be immediately justified by $\kappa_0^+$, hence $\kappa_0^+\kappa^-\daimon \in V_{\behaviour B}$. 
Suppose that $\pathLL r \not\in V_{\behaviour A_{[\behaviour B]}}^{max}$.
Let $C_{\pathLL r}$ be the clique for $S$ as defined in proposition~\ref{prop:Cp}: in particular $\pathLL r \in C_{\pathLL r}$. We remark that there is no $\daimon$-free path $\pathLL q$ such that $\pathLL q \in V_{\behaviour A_{[\behaviour B]}}^{max}$ and $\fullview{\pathLL q} \subset \fullview{\pathLL r}$ thus there is no path $\pathLL q\kappa^-$ in $C_{\pathLL r}$. It follows in particular that $\fullview{C_{\pathLL r}} \not\perp \fullview{\overline{\pathLL r\kappa^-}}^c$, contradiction. So $\pathLL r \in V_{\behaviour A_{[\behaviour B]}}^{max}$. Thereby $\pathLL{r}\kappa^-\daimon \in S$.

		Otherwise $\kappa^-$ is an action appearing in $V_{\behaviour A}$. Let $C_1$ (resp.\ $\kappa_0^+C''$) be a positively saturated maximal clique of $V_{\behaviour A_{[\behaviour B]}}$ (resp.\ of $V_{\behaviour B_{[\behaviour A]}}$). Then $C = C_1 \cup (C_1\cap V_{\behaviour A_{[\behaviour B]}}^{max})C''$ is a positively saturated maximal clique of $S$. Note that $\normalisationSeq{\fullview{C}}{\fullview{\overline{\pathLL r\kappa^-}}^c} = \normalisationSeq{\fullview{C_1}}{\fullview{\overline{\pathLL r\kappa^-}}^c}$ hence, as $\fullview{C} \perp \fullview{\overline{\pathLL r\kappa^-}}^c$, we have that $\fullview{C_1} \perp \fullview{\overline{\pathLL r\kappa^-}}^c$. So, as $V_{\behaviour A}$ satisfies negative saturation, $\pathLL r\kappa^-\daimon \in V_{\behaviour A_{[\behaviour B]}}$, thus $\pathLL{r}\kappa^-\daimon \in S$.

		\item Or $\pathLL{r} \in V_{\behaviour A_{[\behaviour B]}}^{max}V_{\behaviour B}^-$: $\pathLL r = \pathLL p \pathLL q$ with $\pathLL p \in V_{\behaviour A_{[\behaviour B]}}^{max}$ and $\pathLL q \in V_{\behaviour B}^-$. 

		If $\kappa^-$ is an action occurring in $V_{\behaviour B}^-$. Let $C_1$ be a positively saturated maximal clique of $V_{\behaviour A_{[\behaviour B]}}$ that contains $\pathLL p$. Let $C' = C_1 \cap V_{\behaviour A_{[\behaviour B]}}^{max}$. Let $\kappa_0^+C''$ be a positively saturated maximal clique of $V_{\behaviour B_{[\behaviour A]}}$. By lemma~\ref{lem:cliqueDecomposition}, $C = C_1 \cup C'C''$ is a positively saturated maximal clique of $S$. Hence $\fullview{C} \perp  \fullview{\overline{\pathLL r\kappa^-}}^c$. Note that $\normalisationSeq{\fullview{C}}{ \fullview{\overline{\pathLL r\kappa^-}}^c} = \pathLL p\normalisationSeq{\fullview{C''}}{ \fullview{\overline{\pathLL q\kappa^-}}^c}$. Thus $\fullview{C''} \perp  \fullview{\overline{\pathLL q\kappa^-}}^c$. Hence $\pathLL q\kappa^-\daimon \in V_{\behaviour B}^-$ as $V_{\behaviour B}$ satisfies negative saturation. It follows that $\pathLL{r}\kappa^-\daimon \in S$.

		Otherwise $\kappa^-$ is an action appearing in $V_{\behaviour A_{[\behaviour B]}}$. Let $C_1$ be a positively saturated maximal clique of $V_{\behaviour A_{[\behaviour B]}}$ that contains $\pathLL p$. Let $C' = C_1 \cap V_{\behaviour A_{[\behaviour B]}}^{max}$. Let $\kappa_0^+C''$ be a positively saturated maximal clique of $V_{\behaviour B_{[\behaviour A]}}$ that contains $\kappa_0^+\pathLL q$. By lemma~\ref{lem:cliqueDecomposition}, $C = C_1 \cup C'C''$ is a positively saturated maximal clique of $S$. Hence $\fullview{C} \perp  \fullview{\overline{\pathLL r\kappa^-}}^c$. Note that $\normalisationSeq{\fullview{C}}{ \fullview{\overline{\pathLL r\kappa^-}}^c} = \normalisationSeq{\fullview{C_1}}{ \fullview{\overline{\pathLL p\kappa^-}}^c}$. Thus $\fullview{C_1} \perp  \fullview{\overline{\pathLL p\kappa^-}}^c$. Hence $\pathLL p\kappa^-\daimon \in V_{\behaviour A_{[\behaviour B]}}$, contradiction with the fact that $\pathLL p \in V_{\behaviour A_{[\behaviour B]}}^{max}$.
\qed
		\end{itemize}
	\end{itemize}


\begin{lem}\label{lem:dualS_preludicable}
$\dual S$ is pre-ludicable.
\end{lem}
\proof
Let $\kappa_0^-$ be the first action of paths of ${\behaviour A_{[\behaviour B]}}^\perp$. 
	\begin{itemize}
	\item By definition of $\dual S$ and the fact that a set of visitable paths of a behaviour is ludicable, $\dual S$ is prefix-closed and daimon-closed.
	\item (positive saturation) Let $\pathLL r$ be a path of $\dual S$, we prove that $D_{\pathLL r}$ is a positively saturated maximal clique for $\dual S$ (where $D_{\pathLL r}$ for $\dual S$ is as defined in proposition~\ref{prop:Cp}). It follows from the definition of $D_{\pathLL r}$ that $D_{\pathLL r}$ is a maximal clique for $\dual S$. Let $\pathLL m \in D_{\pathLL r}$, $\pathLL n\kappa^-\kappa^+ \in D_{\pathLL r}$, $\pathLL m\kappa^-\daimon \in \dual S$, we have to prove that $\pathLL m \kappa^-\kappa^+ \in \dual S$. Clearly we can suppose that $\kappa^+ \neq \daimon$, otherwise the result is trivial.
		\begin{itemize}
		\item Either $\pathLL m \in \dual{V_{\behaviour A_{[\behaviour B]}}}$ and $\pathLL n\kappa^-\kappa^+ \in \dual{V_{\behaviour A_{[\behaviour B]}}}$. As $\dual{V_{\behaviour A_{[\behaviour B]}}}$ satisfies positive saturation, there exists a positively saturated maximal clique $D'$ for $\dual{V_{\behaviour A_{[\behaviour B]}}}$ such that $\pathLL n \kappa^-\kappa^+ \in D'$. 

			If $\pathLL r \in \dual{V_{\behaviour A_{[\behaviour B]}}}$: Let $D'_{\pathLL r}$ for $\dual{V_{\behaviour A_{[\behaviour B]}}}$ be as defined in proposition~\ref{prop:Cp}. Note that we have $\pathLL m \in D'_{\pathLL r}$, $\pathLL n\kappa^-\kappa^+ \in D'_{\pathLL r}$. Furthermore $\pathLL m\kappa^-\daimon \in \dual{V_{\behaviour A_{[\behaviour B]}}}$. Then, as $\dual{V_{\behaviour A_{[\behaviour B]}}}$ satisfies positive saturation, $\pathLL m\kappa^-\kappa^+ \in \dual{V_{\behaviour A_{[\behaviour B]}}}$, thus $\pathLL m\kappa^-\kappa^+ \in \dual S$.

			Otherwise $\pathLL r = \pathLL s\pathLL t$ with $\pathLL s \in \overline{V_{\behaviour A_{[\behaviour B]}}^{max}}$ and $\pathLL t \in \dual{V_{\behaviour B}^-}$. Then $\pathLL s\daimon \in \dual{V_{\behaviour A_{[\behaviour B]}}}$: Let $D''_{\pathLL s\daimon}$ for $\dual{V_{\behaviour A_{[\behaviour B]}}}$ be as defined in proposition~\ref{prop:Cp}. Note that we have $\pathLL m \in D''_{\pathLL s\daimon}$, $\pathLL n\kappa^-\kappa^+ \in D''_{\pathLL s\daimon}$. Furthermore $\pathLL m\kappa^-\daimon \in \dual{V_{\behaviour A_{[\behaviour B]}}}$. Then, as $\dual{V_{\behaviour A_{[\behaviour B]}}}$ satisfies positive saturation, $\pathLL m\kappa^-\kappa^+ \in \dual{V_{\behaviour A_{[\behaviour B]}}}$, thus $\pathLL m\kappa^-\kappa^+ \in \dual S$.

		\item Or $\pathLL m \in \overline{V_{\behaviour A_{[\behaviour B]}}^{max}}\dual{V_{\behaviour B}^-}$ and $\pathLL n\kappa^-\kappa^+ \in \dual{V_{\behaviour A_{[\behaviour B]}}}$. But we have $\pathLL m \kappa^-\daimon \in \dual S$ thus, by definition of $\dual S$, the action $\kappa^-$ should be an action in $\dual{V_{\behaviour B}^-}$: contradiction with the fact that $\pathLL n\kappa^-\kappa^+ \in \dual{V_{\behaviour A_{[\behaviour B]}}}$.

		\item Or $\pathLL m \in \overline{V_{\behaviour A_{[\behaviour B]}}^{max}}\dual{V_{\behaviour B}^-}$ and $\pathLL n\kappa^-\kappa^+ \in \overline{V_{\behaviour A_{[\behaviour B]}}^{max}}\dual{V_{\behaviour B}^-}$. As we have $\pathLL m \kappa^-\daimon \in \dual S$ thus, by definition of $\dual S$, $\kappa^-$ is an action in $\dual{V_{\behaviour B}^-}$. Hence as $\pathLL n\kappa^-\kappa^+ \in \overline{V_{\behaviour A_{[\behaviour B]}}^{max}}\dual{V_{\behaviour B}^-}$, the action $\kappa^+$ is also an action in $\dual{V_{\behaviour B}^-}$.
		Let us write $\pathLL m = \pathLL m_0\pathLL m_1$ and $\pathLL n = \pathLL n_0\pathLL n_1$ where $\pathLL m_0$ and $\pathLL n_0$ are elements of $\overline{V_{\behaviour A_{[\behaviour B]}}^{max}}$ and $\pathLL m_1$ and $\pathLL n_1$ are elements of $\dual{V_{\behaviour B}^-}$.

		Remark that $\pathLL r \in \overline{V_{\behaviour A_{[\behaviour B]}}^{max}}\dual{V_{\behaviour B}^-}$, hence is of the form $\pathLL s\pathLL t$ where $\pathLL s \in \overline{V_{\behaviour A_{[\behaviour B]}}^{max}}$ and $\pathLL t \in \dual{V_{\behaviour B}^-}$. Remark then that $\pathLL t$ begins with a positive action, furthermore $\fullview{\pathLL t}$ is a slice, \ie, it has a unique first action. Then $\pathLL m_0$ and $\pathLL n_0$ ends on the same (negative) action (otherwise their next action should be distinct). Let us consider the clique $D'_{\kappa_0^-\pathLL t}$ for $\dual{V_{\behaviour B_{[\behaviour A]}}}$ as defined in proposition~\ref{prop:Cp}. Remark that $\kappa_0^-\pathLL m_1 \in D'_{\kappa_0^-\pathLL t}$, $\kappa_0^-\pathLL n_1\kappa^-\kappa^+ \in D'_{\kappa_0^-\pathLL t}$ and that $\kappa_0^-\pathLL m_1\kappa^-\daimon \in \dual{V_{\behaviour B_{[\behaviour A]}}}$ hence as $\dual{V_{\behaviour B_{[\behaviour A]}}}$ satisfies positive saturation, $\kappa_0^-\pathLL m_1\kappa^-\kappa^+ \in \dual{V_{\behaviour B_{[\behaviour A]}}}$. Thus $\pathLL m_0\pathLL m_1\kappa^-\kappa^+ \in \overline{V_{\behaviour A_{[\behaviour B]}}^{max}}\dual{V_{\behaviour B_{[\behaviour A]}}}$.

		\item Or $\pathLL m \in \dual{V_{\behaviour A_{[\behaviour B]}}}$ and $\pathLL n\kappa^-\kappa^+ \in \overline{V_{\behaviour A_{[\behaviour B]}}^{max}}\dual{V_{\behaviour B}^-}$.
		Remark that $\kappa^-$ cannot be an action present in $\dual{V_{\behaviour B}^-}$ as $\pathLL m\kappa^-\daimon$ is legal: in such a case $\kappa^-$ should be justified by $\kappa_0^-$, contradiction.
		Hence $\kappa^- \in \overline{V_{\behaviour A_{[\behaviour B]}}}$ and $\kappa^+ \in \dual{V_{\behaviour B}^-}$. As $\pathLL m\kappa^-\daimon \in \dual S$ then $\overline{\pathLL m\kappa^-} \in V_{\behaviour A_{[\behaviour B]}}$, furthermore $\overline{\pathLL n\kappa^-} \in V_{\behaviour A_{[\behaviour B]}}^{max}$, and paths $\overline{\pathLL m\kappa^-}$ and $\overline{\pathLL n\kappa^-}$ end on the same positive action. 

		As $\pathLL n \kappa^-\kappa^+$ is a path in $D_{\pathLL r}$, and being given the structure of $D_{\pathLL r}$, the path $\pathLL r$ is in $\overline{V_{\behaviour A_{[\behaviour B]}}^{max}}\dual{V_{\behaviour B}^-}$. Let $\pathLL r = \pathLL s\pathLL t$ with $\pathLL s \in \overline{V_{\behaviour A_{[\behaviour B]}}^{max}}$ and $\pathLL t \in \dual{V_{\behaviour B}^-}$. 
		We notice that $\kappa^+$ is necessarily the first action of $\pathLL t$ as all actions of $\pathLL t$ are actions occurring in $\dual{V_{\behaviour B}^-}$ and $\pathLL t$ is a legal path. Hence $\kappa^-$ is necessarily the last action of $\pathLL s$.
		Finally, being given the structure of $D_{\pathLL r}$, $\fullview{\pathLL m\kappa^-} \subset \fullview{\pathLL s}$. So constraint $(C)$ may be applied: we have that $\overline{\pathLL m\kappa^-} \in V_{\behaviour A_{[\behaviour B]}}^{max}$.
So $\pathLL m\kappa^-\kappa^+ \in \overline{V_{\behaviour A_{[\behaviour B]}}^{max}}\dual{V_{\behaviour B_{[\behaviour A]}}}$.
		\end{itemize}

	\item (negative saturation)\\
		Let $\pathLL{r} \in \dual S$ and $\pathLL{r}\kappa^-\daimon$ be a legal path such that for all positively saturated maximal clique $D$ of $\dual S$, we have that $\fullview{D} \perp \fullview{\overline{\pathLL r\kappa^-}}^c$. We have to prove that $\pathLL{r}\kappa^-\daimon \in \dual S$. We use lemma~\ref{lem:dualcliqueDecomposition} for decomposing such a saturated maximal clique $D$ of $\dual S$: $D = D_1 \cup \bigcup_{\kappa_1^- \in K} D'_{\kappa_1^-} D''_{\kappa_1^-}$.
		\begin{itemize}
		\item Either $\pathLL{r} \in \dual{V_{\behaviour A_{[\behaviour B]}}}$ then $\pathLL r \in D_1$ that is a positively saturated maximal clique of $\dual{V_{\behaviour A_{[\behaviour B]}}}$. 
		Remark that $\kappa^-$ cannot be an action appearing in $\dual{V_{\behaviour B}^-}$: as $\pathLL{r} \in \dual{V_{\behaviour A_{[\behaviour B]}}}$, $\kappa^-$ should be the first action in $\pathLL{r}\kappa^-\daimon$ from $\dual{V_{\behaviour B}^-}$ but by construction such an action should be positive.
		Hence $\kappa^-$ is an action appearing in $\dual{V_{\behaviour A}}$. Note that $\normalisationSeq{\fullview{D}}{\fullview{\overline{\pathLL r\kappa^-}}^c} = \normalisationSeq{\fullview{D_1}}{\fullview{\overline{\pathLL r\kappa^-}}^c}$ hence, as $\fullview{D} \perp \fullview{\overline{\pathLL r\kappa^-}}^c$, we have that $\fullview{D_1} \perp \fullview{\overline{\pathLL r\kappa^-}}^c$. So, as $\dual{V_{\behaviour A}}$ satisfies negative saturation, $\pathLL r\kappa^-\daimon \in \dual{V_{\behaviour A_{[\behaviour B]}}}$, thus $\pathLL{r}\kappa^-\daimon \in \dual S$.

		\item Or $\pathLL{r} \in \overline{V_{\behaviour A_{[\behaviour B]}}^{max}}\dual{V_{\behaviour B}^-}$: $\pathLL r = \pathLL p \pathLL q$ with $\pathLL p \in \overline{V_{\behaviour A_{[\behaviour B]}}^{max}}$ and $\pathLL q \in \dual{V_{\behaviour B}^-}$. 

		If $\kappa^-$ is an action occurring in $\dual{V_{\behaviour B}^-}$. We prove that $\kappa_0^-\pathLL q\kappa^-\daimon \in \dual{V_{\behaviour B}}$, hence $\pathLL p\pathLL q\kappa^-\daimon \in \dual S$. 
		Note that $\pathLL p$ ends with a negative action: $\pathLL p = \pathLL p'\kappa'^-$. Let $E_1$ be a positively saturated maximal clique of $\dual{V_{\behaviour A_{[\behaviour B]}}}$ that contains $\pathLL p'$.  Let us define the cliques $E'_{\kappa_1^-}$ as in lemma~\ref{lem:dualcliqueDecomposition}. Remark that there should exist a clique $E'_{\kappa_2^-}$ among these $E'_{\kappa_1^-}$ that contains the path $\pathLL p$. 
		Let $\kappa_0^-E''_{\kappa_1^-}$ be positively saturated maximal cliques of $\dual{V_{\behaviour B_{[\behaviour A]}}}$ satisfying constraints of lemma~\ref{lem:dualcliqueDecomposition}, then $E = E_1 \cup \bigcup_{\kappa_1^- \in K} E'_{\kappa_1^-} E''_{\kappa_1^-}$ is a positively saturated maximal clique of $\dual S$. Hence $\fullview{E} \perp  \fullview{\overline{\pathLL r\kappa^-}}^c$. 
		Note that $\normalisationSeq{\fullview{E}}{ \fullview{\overline{\pathLL r\kappa^-}}^c} = \pathLL p\normalisationSeq{\fullview{E''_{\kappa_2^-}}}{ \fullview{\overline{\pathLL q\kappa^-}}^c}$. Thus $\fullview{E''_{\kappa_2^-}} \perp  \fullview{\overline{\pathLL q\kappa^-}}^c$. Hence $\pathLL q\kappa^-\daimon \in \dual{V_{\behaviour B}^-}$. It follows that $\pathLL{r}\kappa^-\daimon \in \dual S$.

		Otherwise $\kappa^-$ is an action appearing in $\overline{V_{\behaviour A_{[\behaviour B]}}}$. 
		Note that $\pathLL q$ begins and ends with a positive action. Hence $\pathLL q$ is not empty otherwise $\pathLL p\kappa^-\daimon$ is not a path. Remark also that the only justifier of an action of $\pathLL q$ that is not in $\pathLL q$ is $\kappa_0^-$, then $\view{\overline{\pathLL p\pathLL q\kappa^-}}$ should be of the form $\overline{\kappa_0^-\pathLL q'\kappa^-}$ where $\pathLL q'$ is a subsequence of $\pathLL q$. As $\overline{\kappa^-}$ cannot be justified by $\overline{\kappa_0^-}$, the sequence $\overline{\pathLL p\pathLL q\kappa^-}$ cannot be a path, contradiction.
\qed
		\end{itemize}
	\end{itemize}


\restateOrientedTensor*
\begin{proof}
The proposition follows from lemmas~\ref{lem:S_legalPaths},~\ref{lem:S_preludicable} and~\ref{lem:dualS_preludicable}.
\end{proof}


\subsection{An absolute non-commutative connective: first algebraic properties.}

In this subsection we consider that condition $(C)$ may not be fulfilled, in that case the connective $\fullotensor$ may be defined as the ludicable closure of the set of oriented visitable paths as given in the previous subsection. 
We can then characterize the set of designs in $\behaviour A \fullotensor \behaviour B$, and more precisely we have internal completeness for this connective (proposition~\ref{prop:orientedBehaviour}). 
For ease of reading, we use the following notations:
\begin{defi}
Let $\behaviour A$ and $\behaviour B$ be behaviours,
	\begin{itemize}
	\item $\behaviour A^+ = \{\design D[\view{\pathLL p}\daimon / \view{\pathLL p\kappa^+}\chronicle d ~;~ \pathLL p\kappa^+ \in V_{\behaviour A}^{max}] ~;~ \design D \in \behaviour A\}$: $\behaviour A^+$ is obtained from $\behaviour A$ by replacing in each design of $\behaviour A$ all chronicles of the form $\view{\pathLL p\kappa^+}\chronicle d$ by the chronicle $\view{\pathLL p}\daimon$ when $\pathLL p\kappa^+$ is in $V_{\behaviour A}^{max}$. $|\behaviour A|^+$ is obtained in a similar way from the incarnation $|\behaviour A|$.
	\item $|\behaviour A|^{max}$ is the set of designs of the incarnation $|\behaviour A|$ which includes at least one path of $V_{\behaviour A}^{max}$.
	\item ${\boldtop}^+_{[\behaviour B]} = \{\{\kappa^+\} ~;~ \kappa^+$ is the first action of designs of $\behaviour B\}^{\perp\perp}$: the behaviour $\boldtop^+_{[\behaviour B]}$ contains all designs that may be built beginning with a first action of a design of $\behaviour B$, hence its unique visitable paths are $\daimon$ and actions $\kappa^+$.
	\end{itemize}
\end{defi}

Remark that $\behaviour A^+$ and ${\boldtop}^+_{[\behaviour B]}$ are positive behaviours.
Note also that, when $V_{\behaviour A}^{max} = \emptyset$, cliques of visitable paths of $\behaviour A \fullotensor \behaviour B$ are in fact cliques of visitable paths of $\behaviour A_{[\behaviour B]}$: there is no way to ``visit'' by interaction a design of $\behaviour B$.

\begin{prop}\label{prop:orientedBehaviour}
$\behaviour A \fullotensor \behaviour B = \behaviour A \otimes \behaviour B ~\cup~ \behaviour A^+ \otimes\boldtop^+_{[\behaviour B]}$
and
$|\behaviour A \fullotensor \behaviour B| = |\behaviour A|^{max} \otimes |\behaviour B| ~\cup~ |\behaviour A|^+ \otimes\boldtop^+_{[\behaviour B]}$
\end{prop}
\begin{proof}
We consider the non-additive case, the additive case follows easily.
Note that if $V_\behaviour A = \{\daimon\}$ then $\behaviour A \fullotensor \behaviour B = \{\{\daimon\}\}$ hence equalities hold.
Suppose that $V_{\behaviour A} \neq \{\daimon\}$. 
Let us note $C_{\behaviour A}$ the set of paths $\pathLL p$ of $V_{\behaviour A}$ for which there exists $\pathLL q \in V_{\behaviour A}^{max}$ such that $\pathLL p$ and $\pathLL q$ end on the same (positive) action and $\fullview{\overline{\pathLL p}} \subset \fullview{\overline{\pathLL q}}$. \Ie, $C_{\behaviour A}$ is the set of paths of $V_{\behaviour A}$ that invalidate condition $(C)$ for $\behaviour A$.
Remark then that the ludicable closure of $S=  V_{\behaviour A_{[\behaviour B]}} \cup V_{\behaviour A_{[\behaviour B]}}^{max}V_{\behaviour B}^-$ is the set $T = V_{\behaviour A_{[\behaviour B]}} \cup \bigcup_{\pathLL p_1 \in C_{\behaviour A}, \pathLL p_1\pathLL p_2 \in V_{\behaviour A}} \pathLL p_1(V_{\behaviour B}^- \lshuffle \pathLL p_2)$.
The incarnation $|\behaviour A \fullotensor \behaviour B|$ is the set of designs $\fullview{D}$ where $D$ is a maximal clique of $T$ such that $\dual D$ is finite-stable and saturated. The result follows as a clique that contains a path in $V_{\behaviour A}^{max}$ contains also extensions of paths in $V_{\behaviour B}^-$, the ludicable closure does not add new designs in the incarnation.
The full behaviour $\behaviour A \fullotensor \behaviour B$ is obtained by adding extensions to designs in the incarnation $|\behaviour A \fullotensor \behaviour B|$.
\end{proof}

\begin{prop}
The connective $\fullotensor$ is associative:
$(\behaviour A \fullotensor \behaviour B) \fullotensor \behaviour C = \behaviour A \fullotensor (\behaviour B \fullotensor \behaviour C)$\\
The behaviour $\bf 1$ is a neutral element: $\bf 1 \fullotensor \behaviour A = \behaviour A \fullotensor \bf 1 = \behaviour A$.
\end{prop}
\begin{proof}
Remark first that $V_{\behaviour A \fullotensor \behaviour B}^{max} = V_{\behaviour A_{[\behaviour B]}}^{max}V_{\behaviour B}^{-max}$. 
Thus $(\behaviour A \fullotensor \behaviour B)^+ = \behaviour A \otimes \behaviour B^+ ~\cup~ \behaviour A^+ \otimes\boldtop^+_{[\behaviour B]}$.
Note also that $\behaviour C^+ \subset \behaviour C \subset \boldtop^+_{[\behaviour C]}$.
\begin{align*}
(\behaviour A \fullotensor \behaviour B) \fullotensor \behaviour C
	&= (\behaviour A \otimes \behaviour B ~\cup~ \behaviour A^+ \otimes\boldtop^+_{[\behaviour B]}) \fullotensor \behaviour C \\
	&= (\behaviour A \otimes \behaviour B ~\cup~ \behaviour A^+ \otimes\boldtop^+_{[\behaviour B]}) \otimes \behaviour C  ~~\cup~~ \behaviour A \otimes \behaviour B^+ \otimes \boldtop^+_{[\behaviour C]} ~\cup~ \behaviour A^+ \otimes\boldtop^+_{[\behaviour B]} \otimes \boldtop^+_{[\behaviour C]}\\
	&= \behaviour A \otimes \behaviour B \otimes \behaviour C ~\cup~ \behaviour A^+ \otimes\boldtop^+_{[\behaviour B]} \otimes \behaviour C ~\cup~ \behaviour A \otimes \behaviour B^+ \otimes \boldtop^+_{[\behaviour C]} ~\cup~ \behaviour A^+ \otimes\boldtop^+_{[\behaviour B]} \otimes \boldtop^+_{[\behaviour C]}\\
	&= \behaviour A \otimes \behaviour B \otimes \behaviour C ~\cup~ \behaviour A \otimes \behaviour B^+ \otimes \boldtop^+_{[\behaviour C]} ~\cup~ \behaviour A^+ \otimes\boldtop^+_{[\behaviour B]} \otimes \boldtop^+_{[\behaviour C]}
\end{align*}
and
\begin{align*}
\behaviour A \fullotensor (\behaviour B \fullotensor \behaviour C)
	&= \behaviour A \fullotensor (\behaviour B \otimes \behaviour C ~\cup~ \behaviour B^+ \otimes \boldtop^+_{[\behaviour C]}) \\
	&= \behaviour A \otimes \behaviour B \otimes \behaviour C ~\cup~ \behaviour A \otimes \behaviour B^+ \otimes \boldtop^+_{[\behaviour C]} ~\cup~ \behaviour A^+ \otimes \boldtop^+_{[\behaviour B]} \otimes \boldtop^+_{[\behaviour C]}
\end{align*}
Hence the result.\\
Remind that $\bf 1$ is neutral for $\otimes$. Hence $\behaviour A \otimes \bf 1 = \bf 1 \otimes \behaviour A = \behaviour A$. Furthermore $\bf 1^+ = \{\daimon\}$ thus $\bf 1^+ \otimes \boldtop^+_{[\behaviour A]} = \{\daimon\}$. It follows that $\bf 1 \fullotensor \behaviour A = \behaviour A$. Finally $\boldtop^+_{[\bf 1]} = \bf 1$ and $\behaviour A^+ \subset \behaviour A$ thus $\behaviour A \fullotensor \bf 1 = \behaviour A$.
\end{proof}

\section{Conclusion}\label{sec:conclusion}

At first sight, Ludics objects seem to be easy to study: designs are nothing else but abstraction of proofs or counter-proofs of multiplicative-additive Linear Logic (MALL). However, this is not the case when one tries to identify among behaviours, \ie, closures of sets of designs, those that are interpretation of MALL formulas.
This paper is a first step toward a full algebraic study of behaviours. First, we make explicit the equivalence between the two presentations of a design, as set of paths versus set of chronicles. 
We give a few properties concerning orthogonality in terms of path traversal, introducing visitable paths, \ie, paths that are visited by orthogonality.
Our main result is a characterization of finite MALL formulas, \ie, formulas built from the linear constants by means of additive and multiplicative connectives.
In particular, we show that such behaviours should be {\em regular}. Regularity is in fact a global property of visitable paths: roughly speaking, legal paths built from actions in the incarnation should be visitable.
Such a study should help understanding the structure of MALL proofs. 
By contrast, many behaviours are not regular. We analyse one case of non-regularity: in short, a proof of $\behaviour A \fullotensor \behaviour B$ is a proof of $\behaviour A$ followed by a proof of $\behaviour B$. We show that such a situation may be fully defined. This should be considered as a first step towards a full study of orientation in Ludics.

Let us remark finally that properties of Ludics that serve for proving that Ludics is a fully abstract model of (slightly modified polarized second-order) MALL, are satisfied for the entire Ludics and not only for behaviours interpreting MALL formulas: interaction between objects, that is cut-elimination, is at the heart of Ludics, thus it allows to consider the Ludics framework as a semantics for computation beyond what is given with MALL: a behaviour may model a type and (open) interaction between behaviours corresponds to composition of types. For future work, we plan to extend our analysis to the whole set of behaviours, defining a grammar for it in such a way that connectives of the grammar may be computationally (or logically) interpreted.

We thank anonymous referees for their thorough reviews and highly appreciate the comments and 
suggestions.

\bibliographystyle{alpha}
\bibliography{Ludics,langue,publi-MQ,GameSemanticsLL,NCLL}

\appendix
\section{}\label{sec:annex}

\subsection{Ludics: Basic Definitions}\label{annex:basicLudics}

\begin{defi}[Base]
A {\tt base} is  a non-empty finite set of sequents: $\Gamma_1 \vdash \Delta_1, \dots, \Gamma_n \vdash \Delta_n$ such that each $\Delta_j$ is a finite set of addresses, at most one $\Gamma_i$ may be empty and the other $\Gamma_i$ contain each exactly one address.
Furthermore if an address appears twice then one occurrence is in one of $\Gamma_i$ of a sequent and the other in one of $\Delta_j$ of another sequent, otherwise an address appears only once.
\end{defi}

 \begin{defi}[Action]
An {\tt action} $\kappa$ is
\begin{itemize}
\item either a positive proper action $(+,\xi,I)$ or a negative proper action $(-,\xi,I)$ where the address $\xi$ is said the {\em focus} of the action, and the finite set of integers $I$ is said its {\em ramification},
\item or the positive action daimon written $\daimon$.
\end{itemize}
The notation $\overline{\kappa}$ may be extended to sequences of actions\footnote{The empty sequence is noted $\epsilon$.} by $\Overline{\epsilon} = \epsilon$ and $\Overline{w\kappa} = \Overline{w}\,\Overline{\kappa}$.
An address $\xi.i$ is {\em justified} by an action $(+,\xi,I)$ when $i \in I$. By extension an action $\kappa = (\pi,\xi.i,J)$ is {\em justified} by an action $\overline{\kappa} = (\overline{\pi},\xi,I)$ when $i \in I$, $\pi \in \{+,-\}$, $\overline{+} = -$ and $\overline{-} = +$.
When $w$ is a $\daimon$-free sequence of actions, we write also $\dual{w} = \overline{w}\daimon$ and $\dual{w\daimon} = \overline{w}$.
\end{defi}

\begin{defi}[Chronicle]
A {\tt chronicle} $\chronicle{c}$ based on $\Gamma \vdash \Delta$ is a non-empty and finite alternate sequence of actions such that
\begin{itemize}
\item {\em Positive proper action:} A  positive proper action is either justified, \ie, its focus is built by one of the previous actions in the sequence, or it is called initial. 
\item {\em Negative action:} A negative action may be initial, in such a case it is the first action of the chronicle and its focus is in $\Gamma$. Otherwise it is justified by the immediate previous positive action.
\item {\em Linearity:} Actions have distinct foci.
\item {\em Daimon:} If present, a daimon ends the chronicle. 
\item {\em Polarity:} If $\Gamma$ is empty, the first action of $\chronicle c$ is positive, otherwise it is negative.
\end{itemize}
\end{defi}

\begin{defi}[Coherence on Chronicles]
Two chronicles $\chronicle{c}_1$ and $\chronicle{c}_2$ are {\tt coherent}, noted $\chronicle{c}_1 \coh \chronicle{c}_2$, when the two following conditions are satisfied:
\begin{itemize}
\item {\em Comparability:}  Either one extends the other or they first differ on negative actions, \ie, if $w\kappa_1 \coh w\kappa_2$ then either $\kappa_1 = \kappa_2$ or $\kappa_1$ and $\kappa_2$ are negative actions.
\item {\em Propagation:} When they first differ on negative actions and these negative actions have distinct foci then the foci of following actions in $\chronicle{c}_1$ and $\chronicle{c}_2$ are pairwise distinct, \ie, if $w(-,\xi_1,I_1)w_1\kappa_1 \coh w(-,\xi_2,I_2)w_2\kappa_2$ with $\xi_1 \neq \xi_2$ then $\kappa_1$ and $\kappa_2$ have distinct foci.
\end{itemize}
\end{defi}

\begin{defi}[Designs, Slices, Nets]
A {\tt design} $\design{D}$, based on $\Gamma\vdash\Delta$, is a set of chronicles based on $\Gamma\vdash\Delta$, such that the following conditions are satisfied:
\begin{itemize}
\item {\em Forest:} The set of chronicles is prefix closed.
\item {\em Coherence:} The set is a clique of chronicles with respect to $\coh$.
\item {\em Positivity:} A chronicle without extension in $\design{D}$ ends with a positive action.
\item {\em Totality:} $\design{D}$ is non-empty when the base is positive, in that case all the chronicles begin with a (unique) positive action.
\end{itemize}
A {\tt slice} is a design $\design{S}$ such that if $w(-,\xi,I_1), w(-,\xi,I_2) \in \design{S}$ then $I_1 = I_2$.\\
A {\tt net} is a finite set of designs on disjoint bases. 
\end{defi}

A design is then a set of chronicles or a forest of actions (when one cares of justification between actions) that satisfies several constraints. It can also be presented as a sequent tree, however with ambiguity due to the possible weakening of addresses created by actions. We describe in example~\ref{exa:designTreeSet} a design based on $\vdash \xi$ as a set of chronicles (on the left) and as a sequent tree (on the right):

\begin{exa}\label{exa:designTreeSet}~\\
\begin{minipage}{.5\textwidth}
$$
 \begin{array}{cl}
 \{ &  (+,\xi,\{1,3\});\\
              & (+,\xi,\{1,3\})(-,\xi.1,\{0\});\\
                   & (+,\xi,\{1,3\})(-,\xi.1,\{0\})(+,\xi.1.0,\{0\});\\
                   &  (+,\xi,\{1,3\})(-,\xi.1,\{1\});\\
                    &  (+,\xi,\{1,3\})(-,\xi.1,\{1\})(+,\xi.1.1,\{0\});\\
                    &  (+,\xi,\{1,3\})(-,\xi.3,\{0\});\\
                    &  (+,\xi,\{1,3\})(-,\xi.3,\{0\})(+,\xi.3.0,\emptyset) \,\,\,\,\}
                        \end{array}
$$
\end{minipage}
\begin{minipage}{.5\textwidth}
$$
	\infer{\vdash \xi}
{
		\infer{\xi.1 \vdash}
			{
			\infer{\vdash \xi.1.0}
				{
				\xi.1.0.0 \vdash
				}
				&
				\infer{\vdash \xi.1.1}
				{
				\xi.1.1.0 \vdash
				}
			}
		&
		\infer{\xi.3 \vdash}
			{
			\infer{\vdash \xi.3.0}
				{
				\vdash
				}
			}
		}
	$$	
\end{minipage}
\end{exa}
 
\begin{defi}[Closed cut-net]
A net of designs $\design{R}$ is a {\tt closed cut-net} if 
\begin{itemize}
\item addresses in bases are either distinct or present twice, once in a left part of a base and once in a right part of another base,
\item the net of designs is acyclic and connected with respect to the graph of bases and cuts.
\end{itemize}
An address present in a left part and in a right part defines a {\tt cut}. 
In a closed cut-net, the (unique) design whose base is positive is called the main design of the cut-net. 
\end{defi}

\begin{defi}[Interaction on closed cut-nets]
Let $\design{R}$ be a closed cut-net. 
The design resulting from the interaction, denoted by $\normalisation{\design{R}}$, is defined in the following way:
let $\design{D}$ be the main design of $\design{R}$, with first action $\kappa$,
\begin{itemize}
\item if $\kappa$ is a daimon, then $\normalisation{\design{R}} = \{\daimon\}$,
\item otherwise $\kappa$ is a proper positive action $(+,\sigma,I)$ such that $\sigma$ is part of a cut with another design with last rule $(-,\sigma, {\mathcal N})$ (aggregating ramifications of actions with the same focus $\sigma$):
	\begin{itemize}
	\item If $I \not\in \mathcal N$, then interaction fails.
	\item Otherwise, interaction follows with the connected part of subdesigns obtained from $I$ with the rest of $\design{R}$.
	\end{itemize}
\end{itemize}
\end{defi}

Following this definition, either interaction fails, or it does not end, or it results in the design $\dai = \{\daimon\}$.
The definition of orthogonality follows:

\begin{defi}[Orthogonal, Behaviour]~
\begin{itemize}
\item Let $\design{D}$ be a design of base $\xi \vdash \sigma_1, \dots, \sigma_n$ (resp. $\vdash \sigma_1, \dots, \sigma_n$),
let $\design{R}$ be the net of designs $(\design{A}, \design{B}_1, \dots, \design{B}_n)$ (resp. $\design{R} = (\design{B}_1, \dots, \design{B}_n)$), where $\design{A}$ has base $\vdash \xi$ and $\design{B}_i$ has base $\sigma_i \vdash$,
then $\design{R}$ belongs to $\design{D}^\perp$ if $\psdes{\design{D}}{\design{R}} = \dai$. 
\item Let $\designset{E}$ be a set of designs of the same base, $\designset{E}^\perp = \bigcap_{\design{D} \in \designset{E}} \design{D}^\perp$.
\item $\designset{E}$ is a {\tt behaviour} if $\designset{E} = \designset{E}^{\perp\perp}$.
A behaviour is {\em positive} (resp. {\em negative}) if the base of its designs is positive (resp. negative).
\end{itemize}
\end{defi}

\begin{defi}[Interaction path]
Let $(\design{D},\design{R}$) be a convergent, \ie, orthogonal, closed cut-net such that all the cut loci belong to the base of $\design{D}$. The {\tt interaction path} of $\design{D}$ with $\design{R}$, denoted $\normalisationSeq{\design{D}}{\design{R}}$, is the sequence of actions of $\design{D}$ visited during the normalization. The construction goes as follows where $n$ is the number of normalization steps so far obtained:
Let $\kappa_1\dots\kappa_n$ be the prefix of $\normalisationSeq{\design{D}}{\design{R}}$ already defined (or the empty sequence if $n=0$).
\begin{itemize}
\item Either the interaction stops: if the main design is a subdesign of $\design{R}$ then $\normalisationSeq{\design{D}}{\design{R}}=\kappa_1\dots\kappa_n$, otherwise the main design is a subdesign of $\design{D}$ then $\normalisationSeq{\design{D}}{\design{R}}=\kappa_1\dots\kappa_n\daimon$.
\item Or, let $\kappa^+$ be the first proper action of the closed cut-net obtained after step $n$, $\normalisationSeq{\design{D}}{\design{R}}$ begins with $\kappa_1\dots\kappa_n\overline{\kappa^+}$ if the main design is a subdesign of $\design{R}$, or it begins with $\kappa_1\dots\kappa_n\kappa^+$ if the main design is a subdesign of $\design{D}$.
\end{itemize}
\noindent We note $\normalisationSeq{\design{R}}{\design{D}}$ the sequence of actions visited in $\design{R}$ during the normalization with $\design{D}$.
\end{defi}

It follows from the definition that $\normalisationSeq{\design D}{\design R} = \dual{\normalisationSeq{\design R}{\design D}}$.

\begin{defi}[Incarnation]\label{incarnationdessein}
Let $\behaviour{B}$ be a behaviour, $\design{D}$ be a design in $\behaviour{B}$. 
\begin{itemize}
\item The {\tt incarnation} $\Dincarnation{D}{\behaviour{B}}$ of $\design{D}$ with respect to the behaviour $\behaviour{B}$ is $\bigcup_{\design{R} \in \behaviour{B}^\perp} \normalisationDes{D}{R}$. 
\item The {\tt incarnation} $\Bincarnation{\behaviour{B}}$ of a behaviour $\behaviour{B}$ is the set $\{\Dincarnation{D}{\behaviour{B}} ~;~ \design D \in \behaviour B\}$.
\end{itemize}
$\Dincarnation{D}{\behaviour{B}}$ is simply noted $\Dincarnation{D}{}$ when $\behaviour{B}$ is clear from the context.
\end{defi}

\begin{defi}[Daimon closure]
Let $\design{D}$ be a design, the {\tt daimon closure} of $\design{D}$, written ${\design{D}}^\daimon$,
is the set of designs obtained from $\design{D}$ by substituting, for some set of  negative-ended chronicles $\chronicle{c}\in\design{D}$, all the chronicles $\chronicle{c}\kappa^+w\in\design{D}$ by the chronicles $\chronicle{c}\daimon$.\\
Let $\designset{E}$ be a set of designs of the same base, the daimon closure of $\designset{E}$ noted ${\designset{E}}^\daimon$ is the set $\bigcup_{\design{D}\in\designset{E}} \design{D}^\daimon$.
\end{defi}

\subsection{Proofs of Subsection~\ref{sec:reg_beh}}\label{subsec:proofs}

We prove first the inversion property of legality with respect to shuffle: a legal shuffle of paths is a shuffle of legal paths.

\begin{lem}\label{lem:dual_inverse_shuffle}
Let $\pathLL{p}$ and $\pathLL{q}$ be two positive-ended paths and $\pathLL{r} \in \pathLL{p} \shuffle \pathLL{q}$ be a path, if $\pathLL{r}$ is legal then also $\pathLL{p}$ and $\pathLL{q}$ are legal.
\end{lem}
\proof
We prove the result by contradiction: let us suppose that $\pathLL{p}$ and $\pathLL{q}$ are two paths, $\pathLL{r} \in \pathLL{p} \shuffle \pathLL{q}$ is a path such that its dual $\dual{\pathLL{r}}$ is a path, and at least one of the duals $\dual{\pathLL{p}}$ and $\dual{\pathLL{q}}$ is not a path. Without loss of generality, we can suppose that $\dual{\pathLL{p}}$ is not a path.
\begin{itemize} 
\item Remark  the following: Let $\pathLL{p}, \pathLL{q}, \pathLL{r}$ defined as above then there exist $\pathLL{p}', \pathLL{q}', \pathLL{r}'$ satisfying the same requirements as for $\pathLL{p}, \pathLL{q}, \pathLL{r}$ and such that $\pathLL{r}' = w\daimon$ where $w$ is a prefix of $\pathLL{r}$. Indeed,  as $\dual{\pathLL{p}}$ is not a path, there exists an action $\kappa^-$ justified by $\kappa^+$ such that $\pathLL{p} = w_1\kappa^+w_2\kappa^-w_3$ and $\overline{\kappa^+}$ does not appear in $\view{\,\overline{w_1\kappa^+w_2\kappa^-}\,}$. Hence $\pathLL{p}' := w_1\kappa^+w_2\kappa^-\daimon$ is a path such that $\dual{\pathLL{p}'}$ is not a path.
Let $\pathLL{r} = x_1\kappa^-x_2$ and note that $w_1\kappa^+w_2$ is a subsequence of $x_1$. Let $\pathLL{q}'$ be the subsequence of %
$x_1\kappa^-$ with actions in $\pathLL{q}$, then $\pathLL{q}'$ is a positive-ended prefix of $\pathLL{q}$, hence a path. 
Finally let $\pathLL{r}' := x_1\kappa^-\daimon$ then $\pathLL{r}'$ and $\dual{\pathLL r'}$ are paths such that $\pathLL{r}' \in \pathLL{p}' \shuffle \pathLL{q}'$ (by following the same construction steps as for $\pathLL{r}$).

\item Hence for proving the lemma, it suffices to consider triples $(\pathLL{p}', \pathLL{q}', \pathLL{r}')$ satisfying the following:
$\pathLL r'\in\pathLL p'\shuffle\pathLL q'$ is such that $\dual{\pathLL r'}$ is a path, $\pathLL p'=w_1\kappa^+w_2\kappa^-\daimon$, $\pathLL r'=w\kappa^-\daimon$ and $\dual{\pathLL p'}$ is not a path: $\overline{\kappa^+}$ does not appear in $\view{\,\overline{w_1\kappa^+w_2\kappa^-}\,}$.

\item Remark also that if lengths of $\pathLL p'$ and $\pathLL q'$ are less or equal to $2$ then $\dual{\pathLL p'}$ and $\dual{\pathLL q'}$ are paths.

\item Let $(\pathLL{p}_0, \pathLL{q}_0, \pathLL{r}_0)$ be such a triple with length of $\pathLL{r}_0$ minimal with respect to all such triples $(\pathLL{p}', \pathLL{q}', \pathLL{r}')$. 
Notice that $\kappa^-$ is not initial, otherwise $\dual{\pathLL{p}_0}$ would be a path. 
As $\dual{\pathLL{r}_0}$ is a path, let us write $\view{\,\dual{\pathLL r_0}\,} = w_0\overline{\alpha_n^+\alpha_n^-\dots\alpha_0^+\alpha_0^-}$ where $\alpha_n^+ = \kappa^+$ and $\alpha_0^- = \kappa^-$.
If $n=0$ then $\kappa^+$ precedes immediately $\kappa^-$ in $\pathLL r_0$, hence $\kappa^+$ precedes immediately $\kappa^-$ in $\pathLL p_0$, contradicting the fact that $\kappa^+$ does not occur in $\view{\,\dual{\pathLL p_0}\,}$. So $n > 0$.
Let us write $\pathLL{r}_0 = w'\alpha_1^-w''\alpha_0^+\kappa^-\daimon$. 
	\begin{itemize}
	\item Suppose $\alpha_0^+$ is an action of $\pathLL{p}_0$, then it is also the case for its justifier $\alpha_1^-$. Define $\pathLL{r}_1 = w'\kappa^-\daimon$. Remark that $\pathLL{r}_1$ is a path and its dual is also a path. Furthermore, we can define $\pathLL{q}_1$ (resp. $\pathLL{p}_1$) as the subsequence of $\pathLL{q}_0$ (resp. $\pathLL{p}_0$) present in $\pathLL{r}_1$. Remark that $\pathLL{r}_1 \in \pathLL{p}_1 \shuffle \pathLL{q}_1$ and $\dual{\pathLL{p}_1}$ is not a path. This contradicts the fact that $\pathLL{r}_0$ is minimal.
	\item Otherwise $\alpha_0^+$ is an action of $\pathLL{q}_0$, then it is also the case for its justifier $\alpha_1^-$. If actions in $w''$ are actions of $\pathLL{q}_0$, we define $\pathLL{r}_1$, $\pathLL{q}_1$, $\pathLL{p}_1$ as before and this yields a contradiction. Else let $\beta^+$ be the last action of $\pathLL{p}_0$ in $w''$. There is also an action   $\gamma^-$ of ${\pathLL p}_0$ which  immediately precedes $\beta^+$ in $w''$.
One can delete from $\pathLL{r}_0$ the actions $\gamma^-$ and $\beta^+$.
Then we get a shorter sequence $\pathLL{r}_1$ together with paths $\pathLL{p}_1$ and $\pathLL{q}_1$ such that $\dual{\pathLL{p}_1}$ is not a path. Hence a contradiction with the hypothesis of minimality.
\qed
	\end{itemize}
\end{itemize}


We prove now proposition~\ref{prop:OperationsOnVisitable} that characterizes, in the general case, the visitable paths of a tensor. This is a generalization of what was at stake for proving that the tensor is stable for regularity (proposition~\ref{prop:TensorStableRegular}).
We decompose the proof in two lemmas as each of them is quite long and technical (lemmas~\ref{lem:TensorThenShuffleDual} and~\ref{lem:ShuffleDualThenTensor}).
In each lemma, we need associativity of interaction in terms of paths (lemma~\ref{lem:assocSeq}). Associativity of interaction in terms of designs is proved in~\cite{DBLP:journals/mscs/Girard01}.
This initial lemma uses a notion of projection:
\begin{defi}[Projection]
Let $\pathLL p$ be a path, the projection of $\pathLL p$ on $X$, written $\proj{\pathLL p}{X}$, is the subsequence of $\pathLL p$ made of actions of $X$, $X$ being a set of actions, a path, a design or a net.
\end{defi}

\begin{lem}\label{lem:assocSeq}
Let $\design R$, $\design S$, $\design T$ be three nets of designs such that $\design S$ and $\design T$ have distinct bases and $\normalisation{\design R, \design S, \design T} = \daimon$,
then $\normalisationSeq{\design T}{\normalisation{\design R,\design S}} = \proj{\normalisationSeq{\design S \design T}{\design R}}{\design T}$.
\end{lem}
\proof
The proof is done by induction on the length of $\normalisationSeq{\design S \design T}{\design R}$. Note first that each proper step of normalization occurs between a subnet of $\design R$ and either a subnet of $\design S$ or a subnet of $\design T$ as $\design S$ and $\design T$ have distinct bases.
\\
If $\normalisationSeq{\design S \design T}{\design R}= \epsilon$ then $\design R = \daimon$ thus $\normalisation{\design R,\design S} = \daimon$ hence $\normalisationSeq{\design T}{\normalisation{\design R,\design S}} = \epsilon$. The result follows.
\\
Suppose the property satisfied for lengths less or equal to $n$ and $\normalisationSeq{\design S \design T}{\design R}$ has length $n+1$. We consider the various cases:
	\begin{itemize}
	\item If $\design S$ is positive: $\design S = (+,\sigma,I) \design S'$.
	\\ Then $\design R = (-,\sigma,I)\design R' \cup \bigcup_{J \neq I} (-,\sigma,J)\design R_J \cup \bigcup_{\xi \neq \sigma, K} (-,\xi,K)\design R_{\xi,K}$. 
	\\ Let $\design R'' = \design R' \cup \bigcup_{\xi \neq \sigma, K} (-,\xi,K)\design R_{\xi,K}$:
		\begin{itemize}
		\item $\normalisation{\design R,\design S} = \normalisation{\design R'',\design S'}$ then
		$\normalisationSeq{\design T}{\normalisation{\design R,\design S}} = \normalisationSeq{\design T}{\normalisation{\design R'',\design S'}}$
		\item $\normalisationSeq{\design S \design T}{\design R} = (+,\sigma,I)\normalisationSeq{\design S' \design T}{\design R''}$ then
		$\proj{\normalisationSeq{\design S \design T}{\design R}}{\design T} = \proj{\normalisationSeq{\design S' \design T}{\design R''}}{\design T}$
		\end{itemize}
			thus the result by induction hypothesis.
	\item If $\design R$ is positive:
		\begin{itemize}
		\item If $\design R = (+,\sigma,I) \design R'$ and $\design S = (-,\sigma,I)\design S' \cup \bigcup_{J \neq I} (-,\sigma,J)\design S_J \cup \bigcup_{\xi \neq \sigma, K} (-,\xi,K)\design S_{\xi,K}$. Let $\design S'' = \design S' \cup \bigcup_{\xi \neq \sigma, K} (-,\xi,K)\design S_{\xi,K}$:
			\begin{itemize}
			\item $\normalisation{\design R,\design S} = \normalisation{\design R',\design S''}$ then
		$\normalisationSeq{\design T}{\normalisation{\design R,\design S}} = \normalisationSeq{\design T}{\normalisation{\design R',\design S''}}$
			\item $\normalisationSeq{\design S \design T}{\design R} = (-,\sigma,I)\normalisationSeq{\design S'' \design T}{\design R'}$ then
		$\proj{\normalisationSeq{\design S \design T}{\design R}}{\design T} = \proj{\normalisationSeq{\design S'' \design T}{\design R'}}{\design T}$
			\end{itemize}
			thus the result by induction hypothesis.
		\item If $\design R = (+,\sigma,I) \design R'$ and $\design T = (-,\sigma,I)\design T' \cup \bigcup_{J \neq I} (-,\sigma,J)\design T_J \cup \bigcup_{\xi \neq \sigma, K} (-,\xi,K)\design T_{\xi,K}$. Let $\design T'' = \design T' \cup \bigcup_{\xi \neq \sigma, K} (-,\xi,K)\design T_{\xi,K}$:
			\begin{itemize}
			\item $\normalisation{\design R,\design S} = (+,\sigma,I)\normalisation{\design R',\design S}$ \\
	then $\normalisationSeq{\design T}{\normalisation{\design R,\design S}} 
			= \normalisationSeq{\design T}{(+,\sigma,I)\normalisation{\design R',\design S}}
			= (-,\sigma,I)\normalisationSeq{\design T''}{\normalisation{\design R',\design S}}$
			\item $\normalisationSeq{\design S \design T}{\design R} = (-,\sigma,I)\normalisationSeq{\design S \design T''}{\design R'}$ \\
	then $\proj{\normalisationSeq{\design S \design T}{\design R}}{\design T} = \proj{\normalisationSeq{\design S \design T''}{\design R'}}{\design T} = \proj{\normalisationSeq{\design S \design T''}{\design R'}}{\design T''}$
			\end{itemize}
			thus the result by induction hypothesis.
		\end{itemize}
	\item If $\design R$ and $\design S$ are negative, hence $\design T$ is positive: let $\design T = (+,\sigma,I) \design T'$ and $\design R = (-,\sigma,I)\design R' \cup \bigcup_{J \neq I} (-,\sigma,J)\design R_J \cup \bigcup_{\xi \neq \sigma, K} (-,\xi,K)\design R_{\xi,K}$.\\
		 Let $\design R'' = \design R' \cup \bigcup_{\xi \neq \sigma, K} (-,\xi,K)\design R_{\xi,K}$:
			\begin{itemize}
			\item $\normalisation{\design R,\design S} = 
				(-,\sigma,I)\normalisation{\design R',\design S}
				\cup \bigcup_{J \neq I} (-,\sigma,J)\normalisation{\design R_J,\design S} 
				\cup \bigcup_{\xi \neq \sigma, K} (-,\xi,K)\normalisation{\design R_{\xi,K},\design S}$ then
		$\normalisationSeq{\design T}{\normalisation{\design R,\design S}} 
			= (-,\sigma,I)\normalisationSeq{\design T''}{\normalisation{\design R',\design S}}$
			\item $\normalisationSeq{\design S \design T}{\design R} = (-,\sigma,I)\normalisationSeq{\design S \design T''}{\design R'}$ \\
		then $\proj{\normalisationSeq{\design S \design T}{\design R}}{\design T} = \proj{\normalisationSeq{\design S \design T''}{\design R'}}{\design T} = \proj{\normalisationSeq{\design S \design T''}{\design R'}}{\design T''}$
			\end{itemize}
			thus the result by induction hypothesis.
\qed
	\end{itemize}


\begin{lem}\label{lem:TensorThenShuffleDual}
Let $\behaviour{P}$ and $\behaviour{Q}$ be alien positive behaviours,\\
If $\pathLL r \in V_{\behaviour P \otimes \behaviour Q}$ then:
\begin{itemize}
	\item {\em(Shuffle condition)} $\pathLL r \in V_{{\behaviour{P}}} \lshuffle V_{{\behaviour{Q}}}$,
	\item {\em(Dual condition)} for all path $\overline{\pathLL s \kappa^-}$ in $\fullview{\,\dual{\pathLL r}\,}$, if there exist paths $\pathLL p' = (+,\xi,I)\pathLL p'_1 \in V_{\behaviour{P}}$\\
		and $\pathLL q' = (+,\xi,J)\pathLL q'_1 \in V_{\behaviour{Q}}$ with $\pathLL s \in (+,\xi,I \cup J)(\pathLL p'_1 \shuffle \pathLL q'_1)$, \\
		then either $\pathLL p' \kappa^-\daimon \in V_{\behaviour P}$ or $\pathLL q' \kappa^-\daimon \in V_{\behaviour Q}$.
\end{itemize}
\end{lem}
%
\proof
Let $\behaviour{P}$ and $\behaviour{Q}$ be alien positive behaviours with base $\vdash \xi$.
Let $\pathLL{r} \in V_{\behaviour{P} \otimes \behaviour{Q}}$. 
\\
{\em(Shuffle condition)}:
	\begin{itemize}
	\item As $\pathLL{r}$ is a visitable path, $\dual{\pathLL{r}}$ is a path. Furthermore there exist designs $\design{D} \in \behaviour{P} \otimes \behaviour{Q}$ and $\design{E} \in (\behaviour{P} \otimes \behaviour{Q})^\perp$ such that $\pathLL{r} = \normalisationSeq{\design{D}}{\design{E}}$. 

	\item Using the independence property (\cite{DBLP:journals/mscs/Girard01}, Th. 20), there exist designs $\design{D}_1 \in \behaviour{P}$ and $\design{D}_2 \in \behaviour{Q}$ such that $\design{D} = \design{D}_1 \otimes \design{D}_2$.

	\item If $\pathLL{r} = \daimon$, remark that $\pathLL{r} \in V_{\behaviour{P}}$ (and also $\pathLL{r} \in V_{\behaviour{Q}}$). So let us consider the other cases, \ie, designs $\design{D}_1$ and $\design{D}_2$ are distinct from the design $\{\daimon\}$. We write $\design D_1 = (+,\xi,I) \design D'_1$ and $\design D_2 = (+,\xi,J) \design D'_2$. Behaviours being alien, we have that $I \cap J = \emptyset$. Let $\design E = (-,\xi,I \cup J)(\design E' \cup \design E'')$.

	\item We have $\normalisationSeq{\design D_1 \otimes \design D_2}{\design E} = (+,\xi,I \cup J)\normalisationSeq{\design D'_1 ~ \design D'_2}{\design E'}$. Moreover, $\normalisation{\design D'_1 , \design D'_2 , \design E'} = \daimon$ and $\design D'_1$ and $\design D'_2$ have distinct bases. Hence lemma~\ref{lem:assocSeq} applies: 
	$\normalisationSeq{\design D'_1}{\normalisation{\design E',\design D'_2}} = \proj{\normalisationSeq{\design D'_1 \design D'_2}{\design E'}}{\design D'_1}$
	and
	$\normalisationSeq{\design D'_2}{\normalisation{\design E',\design D'_1}} = \proj{\normalisationSeq{\design D'_1 \design D'_2}{\design E'}}{\design D'_2}$.

	\item Let $\pathLL r_1 := \proj{\normalisationSeq{\design D'_1 \design D'_2}{\design E'}}{\design D'_1}$ and $\pathLL r_2 := \proj{\normalisationSeq{\design D'_1 \design D'_2}{\design E'}}{\design D'_2}$, then we have that $\pathLL r \in (+,\xi,I \cup J)\pathLL r_1 \shuffle (+,\xi,I \cup J)\pathLL r_2$.
Indeed, for each negative action that occurs in $\normalisationSeq{\design D'_1 \design D'_2}{\design E'}$ and is an address in $\design D'_1$ (resp. $\design D'_2$), then the next action in $\normalisationSeq{\design D'_1 \design D'_2}{\design E'}$ is positive and should also be an address in $\design D'_1$ (resp. $\design D'_2$) as in a path a positive action is in the same chronicle as the negative action that precedes it, hence also in an interaction path. 

	\item Furthermore following the adjunction theorem (\cite{DBLP:journals/mscs/Girard01}, Th. 14), 
we have that 
\begin{displaymath}
(-,\xi,I)\normalisation{\design{E}',\design{D}'_2} \cup \design E'' \in \behaviour P^\perp.
\end{displaymath}
Remark finally that $\normalisationSeq{\design D_1}{(-,\xi,I)\normalisation{\design{E}',\design{D}_2} \cup \design E''} = (+,\xi,I)\normalisationSeq{\design D'_1}{\normalisation{\design{E}',\design{D}_2}} = (+,\xi,I)\pathLL r_1$. 
Thus $(+,\xi,I)\pathLL r_1 \in V_{\behaviour P}$. Similarly, $(+,\xi,J)\pathLL r_2 \in V_{\behaviour Q}$.

	\item So $\pathLL r \in V_{\behaviour{P}} \shuffle V_{\behaviour{Q}}$.
	\end{itemize}
{\em(Dual condition)}. Let $\overline{\pathLL s \kappa^-}$ in $\fullview{\,\dual{\pathLL r}\,}$ such that there exist $\pathLL p' = (+,\xi,I)\pathLL p'_1 \in V_{\behaviour{P}}$ and $\pathLL q' = (+,\xi,J)\pathLL q'_1 \in V_{\behaviour{Q}}$ with $\pathLL s \in (+,\xi,I \cup J)(\pathLL p'_1 \shuffle \pathLL q'_1)$. Without loss of generality, we can suppose that $\kappa^-$ is an action occurring in $\behaviour P$. Then we have to prove that $\pathLL p' \kappa^-\daimon \in V_{\behaviour P}$:
	\begin{itemize}
	\item The justifier of $\kappa^-$ is in $\pathLL s$, hence the justifier of $\kappa^-$ is in $\pathLL p'$. So $\pathLL p' \kappa^-\daimon$ is a path.
Remark that $\pathLL s\kappa^-\daimon$ is a legal path and $\pathLL s\kappa^-\daimon \in (+,\xi,I \cup J)(\pathLL p'_1\kappa^-\daimon \shuffle \pathLL q'_1)$, thus by lemma~\ref{lem:dual_inverse_shuffle} the path 
$\pathLL p'\kappa^-\daimon$ is legal.
	\item We prove now that $\fullview{\pathLL p'\kappa^-\daimon}^c \in \behaviour P$. Remark that $\fullview{\pathLL p'\kappa^-\daimon}^c = \fullview{\pathLL p'}^c$. Finally, as $\pathLL p' \in V_{\behaviour P}$ we have that $\fullview{\pathLL p'}^c \in \behaviour P$. Hence the result.
	\item We prove now by contradiction that $\fullview{ \, \overline{\pathLL p'\kappa^-}\,}^c \in \behaviour P^\perp$. Let $\design D \in \behaviour P$ such that $\design D \not\perp \fullview{ \, \overline{\pathLL p'\kappa^-}\,}^c$. 
Remark that $\design D \neq \daimon$ and $\design D$ should begin with the action $(+,\xi,I)$ otherwise $\design D \perp \fullview{ \, \overline{\pathLL p'\kappa^-}\,}^c$.
We write $\design D =(+,\xi,I)\design D_1$.

Note that, as $\pathLL p' \in V_{\behaviour P}$, we have that $\design D \perp \fullview{ \, \overline{\pathLL p'}\,}^c$. Furthermore paths in $\fullview{ \, \overline{\pathLL p'}\,}^c$ are necessarily of finite length. Then the only possibility for the normalization to diverge is that there exists a legal path $\pathLL v = (+,\xi,I)\pathLL v_1$ such that \\
		-- $\pathLL v$ is a path of $\design D$ \\
		-- $\overline{\pathLL v\kappa^-}$ is a path of $\fullview{ \, \overline{\pathLL p'\kappa^-}\,}^c$ \\
		-- $\pathLL v\kappa^-$ is not a path of $\design D$.

Let $\fullview{\,\dual{\pathLL r}\,}^c = (-,\xi,I \cup J)\design R_1$. Note that $\design R_1$ is also complete with respect to negative actions: $\design R_1 = {\design R_1}^c$.
 
We detail the steps we need to conclude:
	\begin{itemize}
	\item As $\pathLL q' \in V_{\behaviour{Q}}$ then $\fullview{{\pathLL q'}}^c \in \behaviour Q$. Thus $\design D \otimes \fullview{{\pathLL q'}}^c \in \behaviour P \otimes \behaviour Q$.
	\\ As $\pathLL r \in V_{\behaviour P \otimes \behaviour{Q}}$ then $\fullview{\,\dual{\pathLL r}\,}^c \in (\behaviour P \otimes \behaviour Q)^\perp$.
	\\ It follows that $\design D \otimes \fullview{{\pathLL q'}}^c \perp \fullview{\,\dual{\pathLL r}\,}^c$, thus also $\normalisation{\design D_1, \fullview{{\pathLL q'_1}}^c, \design R_1} = \daimon$. Hence $\normalisationSeq{\design D_1}{\normalisation{\design R_1,\fullview{\pathLL q'_1}^c}}$ is well defined.
	\item Let us consider $\fullview{\pathLL p'} \otimes \fullview{\pathLL q'}$. 
We define $\pathLL s_1$ such that $\pathLL s = (+,\xi,I \cup J)\pathLL s_1$. 
As $\pathLL s \in (+,\xi,I \cup J)(\pathLL p'_1 \shuffle \pathLL q'_1)$, we have that $\fullview{\pathLL p'} \otimes \fullview{\pathLL q'} \perp \fullview{\overline{\pathLL s}\daimon}$. Thus $\normalisation{\fullview{\pathLL p'_1}, \fullview{\pathLL q'_1}, \fullview{\overline{\pathLL s_1}\daimon}} = \daimon$.
	\\ Hence by lemma~\ref{lem:assocSeq}, 
$\normalisationSeq{\fullview{\pathLL p'_1}}{\normalisation{\fullview{\,\overline{\pathLL s_1}\daimon\,},\fullview{\pathLL q'_1}}} = \proj{\normalisationSeq{\fullview{\pathLL p'_1} \otimes \fullview{\pathLL q'_1}}{\fullview{\,\overline{\pathLL s_1}\daimon\,}}}{\fullview{\pathLL p'_1}} = \pathLL p'_1$.
	\\ Thus $\fullview{\,\overline{\pathLL p'_1}\daimon\,} \subset \normalisation{\fullview{\,\overline{\pathLL s_1}\daimon\,},\fullview{\pathLL q'_1}}$. Hence, as there are no other actions with subaddresses of $\xi.i$ for $i \in I$ in $\normalisation{\fullview{\,\overline{\pathLL s_1}\daimon\,},\fullview{\pathLL q'_1}}$, we have that $\fullview{\,\overline{\pathLL p'_1}\daimon\,} = \normalisation{\fullview{\,\overline{\pathLL s_1}\daimon\,},\fullview{\pathLL q'_1}}$.
	\item Furthermore, as $\overline{\pathLL s}$ is a path of $\fullview{\,\dual{\pathLL r}\,}^c$, we have that $\overline{\pathLL s}$ is a path of $\design R_1$, hence also $\fullview{\,\overline{\pathLL s_1}\,} \subset \design R_1$.
	\\ Normalization being deterministic, we have then that $\fullview{\,\overline{\pathLL p'_1}\,} \subset \normalisation{\design R_1,\fullview{\pathLL q'_1}}$.
	\item Recall that $\normalisation{\design D_1, \fullview{\,\overline{\pathLL p'_1}\,}^c} = \daimon$, more precisely that $\pathLL v_1 = \normalisationSeq{\design D_1}{\fullview{\,\overline{\pathLL p'_1}\,}^c}$. And that $\normalisation{\design D_1, \design R_1,\fullview{\pathLL q'_1}^c} = \daimon$.
	\\ With the previous item, it follows that the path $\pathLL v_1$ is a prefix of $\normalisationSeq{\design D_1}{\normalisation{\design R_1,\fullview{\pathLL q'_1}^c}}$.
	\item Finally, $\overline{\pathLL v_1 \kappa^-}$ is a path of $\normalisation{\design R_1,\fullview{\pathLL q'_1}^c}$. Hence, for the normalization between $\design D_1$ and $\normalisation{\design R_1,\fullview{\pathLL q'_1}^c}$ to converge, we should also have $\pathLL v_1 \kappa^-$ to be a path of $\design D_1$, \ie, $\pathLL v\kappa^-$ to be a path of $\design D$.
	\item Contradiction with the hypothesis $\design D \not\perp \fullview{ \, \overline{\pathLL p'\kappa^-}\,}^c$.
	Thus $\fullview{ \, \overline{\pathLL p'\kappa^-}\,}^c \in \behaviour P^\perp$.

	\end{itemize}
As $\fullview{\pathLL p'\kappa^-\daimon}^c \in \behaviour P$ and $\fullview{ \, \overline{\pathLL p'\kappa^-}\,}^c \in \behaviour P^\perp$, we have that $\pathLL p'\kappa^-\daimon \in V_{\behaviour P}$.
\qed
\end{itemize}


\begin{lem}\label{lem:ShuffleDualThenTensor}
Let $\behaviour{P}$ and $\behaviour{Q}$ be alien positive behaviours,\\
$\pathLL r \in V_{\behaviour P \otimes \behaviour Q}$ if:
\begin{itemize}
	\item {\em(Shuffle condition)} $\pathLL r \in V_{{\behaviour{P}}} \lshuffle V_{{\behaviour{Q}}}$,
	\item {\em(Dual condition)} for all path $\overline{\pathLL s \kappa^-}$ in $\fullview{\,\dual{\pathLL r}\,}$, if there exist paths $\pathLL p' = (+,\xi,I)\pathLL p'_1 \in V_{\behaviour{P}}$\\
		and $\pathLL q' = (+,\xi,J)\pathLL q'_1 \in V_{\behaviour{Q}}$ with $\pathLL s \in (+,\xi,I \cup J)(\pathLL p'_1 \shuffle \pathLL q'_1)$, \\
		then either $\pathLL p' \kappa^-\daimon \in V_{\behaviour P}$ or $\pathLL q' \kappa^-\daimon \in V_{\behaviour Q}$.
\end{itemize}
\end{lem}
%
\proof
Let $\pathLL r$ satisfy {\em(Shuffle)} and {\em(Dual)} conditions. Suppose that $\pathLL r \in \pathLL p \shuffle \pathLL q$ where $\pathLL p \in V_{\behaviour{P}}$ and $\pathLL q \in V_{\behaviour{Q}}$.
	\begin{itemize}
	\item We first show that $\fullview{\pathLL r}^c \in \behaviour P \otimes \behaviour Q$. As $\pathLL p \in V_{\behaviour{P}}$ and $\pathLL q \in V_{\behaviour{Q}}$, we have that $\fullview{\pathLL p}^c \in \behaviour P$ and $\fullview{\pathLL q}^c \in \behaviour Q$. Hence $\fullview{\pathLL p}^c \otimes \fullview{\pathLL q}^c \in \behaviour P \otimes \behaviour Q$. Finally remark that $\fullview{\pathLL r}^c = \fullview{\pathLL p}^c \otimes \fullview{\pathLL q}^c$.

	\item We show now by contradiction that $\fullview{\,\dual{\pathLL r}\,}^c \in (\behaviour P \otimes \behaviour Q)^\perp$. Let $\design D \in \behaviour P$ and $\design E \in \behaviour Q$ such that $\design D \otimes \design E \not\perp \fullview{\,\dual{\pathLL r}\,}^c$. As $\fullview{\,\dual{\pathLL r}\,}^c$ is complete with respect to negative actions and paths of $\fullview{\,\dual{\pathLL r}\,}^c$ have a finite length, divergence occurs if there exists a path $\pathLL s$ such that
		\\ -- $\pathLL s$ is a path of $\design D \otimes \design E$,
		\\ -- $\overline{\pathLL s\kappa^-}$ is a path of $\fullview{\,\dual{\pathLL r}\,}^c$ hence of $\fullview{\,\dual{\pathLL r}\,}$,
		\\ -- $\pathLL s\kappa^-$ is not a path of $\design D \otimes \design E$.		
		\\
Obviously, neither $\design D$ nor $\design E$ is the daimon, otherwise $\design D \otimes \design E \perp \fullview{\,\dual{\pathLL r}\,}^c$.

We can choose $\design D$ and $\design E$ such that $\pathLL s$ is of minimal length with respect to such pairs of designs non orthogonal to $\fullview{\,\dual{\pathLL r}\,}^c$.
The path $\pathLL s$ defines a path $\pathLL p' = (+,\xi,I)\pathLL p'_1$ (resp.\ $\pathLL q' = (+,\xi,J)\pathLL q'_1$) in $\design D$ (resp.\ in $\design E$) such that $\pathLL s \in (+,\xi,I \cup J)(\pathLL p'_1 \shuffle \pathLL q'_1)$.
		\begin{itemize}
		\item Remark that $\fullview{\pathLL p'} \subset \design D \in \behaviour P$, thus $\fullview{\pathLL p'}^c \in \behaviour P$. Similarly, $\fullview{\pathLL q'}^c \in \behaviour Q$.
		\item As $\overline{\pathLL s}$ is a path in $\fullview{\,\dual{\pathLL r}\,}^c$, we have that $\fullview{\,\overline{\pathLL s}\,} \subset \fullview{\,\dual{\pathLL r}\,}^c$.

		\item We show by contradiction that $\fullview{\,\dual{\pathLL s}\,}^c \in (\behaviour P \otimes \behaviour Q)^\perp$.
		Let $\design D' \in \behaviour P$ and $\design E' \in \behaviour Q$ such that $\design D' \otimes \design E' \not\perp \fullview{\,\dual{\pathLL s}\,}^c$. Divergence occurs necessarily because there exists a path $\pathLL v$ such that
		\\ -- $\pathLL v$ is a path of $\design D' \otimes \design E'$,
		\\ -- $\overline{\pathLL v\kappa'^-}$ is a path of $\fullview{\,\dual{\pathLL s}\,}^c$ hence of $\fullview{\,\dual{\pathLL s}\,}$,
		\\ -- $\pathLL v\kappa'^-$ is not a path of $\design D' \otimes \design E'$.
		\\ As $\fullview{\,\overline{\pathLL s}\,} \subset \fullview{\,\dual{\pathLL r}\,}^c$, we have that $\overline{\pathLL v\kappa'^-}$ is a path of $\fullview{\,\dual{\pathLL r}\,}^c$. Thus $\design D' \otimes \design E' \not\perp \fullview{\,\dual{\pathLL r}\,}^c$. Moreover $\pathLL v$ is strictly shorter than $\pathLL s$. This contradicts the fact that $\pathLL s$ is of minimum length. So $\fullview{\,\dual{\pathLL s}\,}^c \in (\behaviour P \otimes \behaviour Q)^\perp$.	

		\item We show now that $\pathLL p' \in V_{\behaviour P}$.
		Let $\design D' \in \behaviour P$. We use the following notations:
		\\ -- $\design D = (+,\xi,I)\design D_1$
		\\ -- $\design D' = (+,\xi,I)\design D'_1$
		\\ -- $\design E = (+,\xi,I)\design E_1$
		\\ -- $\fullview{\,\dual{\pathLL s}\,}^c = (+,\xi,I)\design S_1$

		As $\fullview{\,\dual{\pathLL s}\,}^c \in (\behaviour P \otimes \behaviour Q)^\perp$, we have that $\normalisation{\design D \otimes \design E, \fullview{\,\dual{\pathLL s}\,}^c} = \daimon$, hence also $\normalisation{\design D_1, \design E_1, \design S_1} = \daimon$.
		\\ By lemma~\ref{lem:assocSeq}, we have that $\normalisationSeq{\design D_1}{\normalisation{\design S_1,\design E_1}} = \proj{\normalisationSeq{\design D_1 \design E_1}{\design S_1}}{\design D_1} = \pathLL p'_1$. Thus $\fullview{\,\dual{\pathLL p'_1}\,} \subset \normalisation{\design S_1,\design E_1}$.
		\\ Furthermore $\fullview{\,\dual{\pathLL s}\,}^c \perp \design D' \otimes \design E$. Thus $\normalisation{\design D', \design E, \fullview{\,\dual{\pathLL s}\,}^c} = \daimon$, hence by associativity 
\begin{displaymath}
\normalisation{\design D', \normalisation{\design E, \fullview{\,\dual{\pathLL s}\,}^c}} = \daimon.
\end{displaymath}
 Thus $\normalisation{\design D'_1, \normalisation{\design E_1, \design S_1}} = \daimon$. 
		\\ Hence as $\fullview{\,\dual{\pathLL p'_1}\,} \subset \normalisation{\design S_1,\design E_1}$, we have $\normalisation{\design D'_1, \fullview{\,\dual{\pathLL p'_1}\,}^c} = \daimon$. It follows that $\normalisation{\design D', \fullview{\,\dual{\pathLL p'}\,}^c} = \daimon$, \ie, $\design D'\perp \fullview{\,\dual{\pathLL p'}\,}^c$.
		\\ So $\fullview{\,\dual{\pathLL p'}\,}^c \in \behaviour P^\perp$. Thus, as $\pathLL p'$ is a path of $\design D \in \behaviour P$, it follows that $\pathLL p' \in V_{\behaviour P}$. 

		\item Similarly, we can prove that $\pathLL q' \in V_{\behaviour Q}$.

		\item Using the constraint, we should have $\pathLL p' \kappa^-\daimon \in V_{\behaviour P}$. As $\design D \in \behaviour P$ and $\pathLL p'$ is a path of $\design D$, we should also have $\pathLL p'\kappa^-$ a path of $\design D$ (necessary condition for visitable paths). Hence a contradiction. So $\fullview{\,\dual{\pathLL r}\,}^c \in (\behaviour P \otimes \behaviour Q)^\perp$.
		\end{itemize}
	\item As $\fullview{\pathLL r}^c \in \behaviour P \otimes \behaviour Q$ and $\fullview{\,\dual{\pathLL r}\,}^c \in (\behaviour P \otimes \behaviour Q)^\perp$, we have that $\pathLL r \in V_{\behaviour P \otimes \behaviour Q}$.
\qed
	\end{itemize} 

%
\restateOperationsOnVisitable*
\begin{proof}
Follows from lemmas~\ref{lem:TensorThenShuffleDual} and~\ref{lem:ShuffleDualThenTensor}.
\end{proof}

\subsection{Proofs of Subsection~\ref{sec:SimpleOrientedTensor}}\label{subsec:proofs:SimpleOrientedTensor}

\restatecliqueDecomposition*
\proof
$(\Leftarrow)$ Let $C = C_1 \cup C'C''$ with $C_1$, $C'$, $C''$ as defined in the lemma. 
	\begin{itemize}
	\item $C$ is a clique: Note that $\fullview{\pathLL p\pathLL q} = \fullview{\pathLL p} \cup \fullview{\kappa_0^+\pathLL q}$ as $\behaviour A$ and $\behaviour B$ are disjoint. Hence the fact that $C$ is a clique follows from the fact that $C_1$ and $C''$ are cliques.

	\item $C$ is a maximal clique: Let $\pathLL r \in S$ and $\pathLL r \coh \pathLL r'$ for all $\pathLL r' \in C$.
	Either $\pathLL r \in V_{\behaviour A_{[\behaviour B]}}$ hence $\pathLL r \in C_1 \subset C$ as $C_1$ is a maximal clique of $V_{\behaviour A_{[\behaviour B]}}$.
	Or $\pathLL r = \pathLL p\pathLL q$ where $\pathLL p \in V_{\behaviour A_{[\behaviour B]}}^{max}$ and $\pathLL q \in V_{\behaviour B}^-$. Thus $\pathLL p \in C_1$ and $\pathLL q \in C''$ as $C_1$ and $C''$ are maximal. Thus $\pathLL r \in C$.

	\item $C$ is positively saturated: Let $\pathLL m \in C$, $\pathLL n\kappa^-\kappa^+ \in C$, $\pathLL m\kappa^-\daimon \in S$:
		\begin{itemize}
		\item If $\pathLL m \in C_1$ and $\pathLL n\kappa^-\kappa^+ \in C_1$. The result follows from the fact that $C_1$ is positively saturated for $V_{\behaviour A_{[\behaviour B]}}$: the path $\pathLL m\kappa^-\kappa^+ \in C_1$ thus $\pathLL m\kappa^-\kappa^+ \in S$.
		\item If $\pathLL m \in C_1$ and $\pathLL n\kappa^-\kappa^+ \in C'C''$, \ie, $\pathLL n\kappa^-\kappa^+ = \pathLL p\pathLL q\kappa^-\kappa^+$ with $\pathLL p \in C'$ and $\pathLL q\kappa^-\kappa^+ \in C''$. As $\pathLL m\kappa^-\daimon \in S$, we have that $\kappa_0^+\kappa^-\daimon \in V_{\behaviour B_{[\behaviour A]}}$. Note also that $\kappa_0^+\pathLL q\kappa^-\kappa^+ \in \kappa_0^+C''$. Thus $\kappa_0^+\kappa^-\kappa^+ \in \kappa_0^+C''$ as $\kappa_0^+C''$ is a positively saturated maximal clique for $V_{\behaviour B_{[\behaviour A]}}$. It follows that $\pathLL m\kappa^-\kappa^+ \in S$.
		\item If $\pathLL m \in C'C''$ and $\pathLL n\kappa^-\kappa^+ \in C_1$: as $\pathLL m\kappa^-\daimon \in S$ then $\pathLL m\in C' \subset C_1$. The result follows from the fact that $C_1$ is positively saturated for $V_{\behaviour A_{[\behaviour B]}}$.
		\item If $\pathLL m \in C'C''$ and $\pathLL n\kappa^-\kappa^+ \in C'C''$, \ie, $\pathLL m = \pathLL p\pathLL q$ with $\pathLL p \in C'$ and $\pathLL q \in C''$, and $\pathLL n\kappa^-\kappa^+ = \pathLL p'\pathLL q'\kappa^-\kappa^+$ with $\pathLL p' \in C'$ and $\pathLL q'\kappa^-\kappa^+ \in C''$. As $\pathLL m\kappa^-\daimon \in S$ then $\kappa_0^+\pathLL q\kappa^-\daimon \in V_{\behaviour B_{[\behaviour A]}}$. Thus $\kappa_0^+\pathLL q\kappa^-\kappa^+ \in \kappa_0^+C''$ as $\kappa_0^+C''$ is a positively saturated maximal clique for $V_{\behaviour B_{[\behaviour A]}}$. It follows that $\pathLL m\kappa^-\kappa^+ \in S$.
		\end{itemize}
	\end{itemize}

\noindent $(\Rightarrow)$ Let $C$ be a positively saturated maximal clique of $S$. 
	\begin{itemize}
	\item $C$ may be factorized: Let $C = C_1 \cup C_2$ with $C_1 \subset V_{\behaviour A_{[\behaviour B]}}$ and $C_2 \subset V_{\behaviour A_{[\behaviour B]}}^{max}V_{\behaviour B}^-$. We remark the following points: let $\pathLL p_1 \in C_1$, $\pathLL p'_2\pathLL q'_2 \in C_2$ and $\pathLL p''_2\pathLL q''_2 \in C_2$. Then $\pathLL p'_2\pathLL q'_2 \coh \pathLL p''_2\pathLL q''_2$ and foci of $\pathLL p'_2$ and $\pathLL p''_2$ are disjoint from foci of $\pathLL q'_2$ and $\pathLL q''_2$. Thus  $\pathLL q'_2 \coh \pathLL q''_2$ hence $\pathLL p'_2\pathLL q''_2 \in C_2$ and $\pathLL p''_2\pathLL q'_2 \in C_2$. It follows that the set $C_2$ may be factorized: We can write $C_2 = C'C''$ with $C' \subset V_{\behaviour A_{[\behaviour B]}}^{max}$ and $C'' \subset V_{\behaviour B}^-$.
	\item As $C$ is a clique, $C_1$, $C'$ and $C''$ are cliques.

	\item By construction if $C'$ is empty, $C''$ is also empty. Suppose $C'$ not empty, let $\pathLL q \in V_{\behaviour B}^-$ such that $\pathLL q \coh C''$ then $C'\pathLL q \coh C'C''$ and also $C'\pathLL q \coh C_1$: $C''$ is a maximal clique or is empty.

	\item Furthermore $\pathLL p_1 \coh \pathLL p'_2 \coh \pathLL p''_2$ hence $\pathLL p'_2 \in C_1$ and $\pathLL p''_2 \in C_1$. It follows that $C' \subset C_1$. Hence by construction, $C' = C_1 \cap V_{\behaviour A_{[\behaviour B]}}^{max}$.

	\item Let $\pathLL p \in V_{\behaviour A_{[\behaviour B]}}$ such that $\pathLL p \coh C_1$ then $\pathLL p \coh C'$ as $C' \subset C_1$, thus $\pathLL p \coh C'C''$, so $\pathLL p \in C$, thus $\pathLL p \in C_1$: $C_1$ is a maximal clique or is empty.

	\item As $C$ is positively saturated, $C_1$ and $\kappa_0^+C''$ are also positively saturated.
\qed
	\end{itemize}

\restatedualcliqueDecomposition*
\proof
There is a unique $\kappa_0^-$ as behaviours are connected.
Note that elements of $V_{\behaviour A_{[\behaviour B]}}^{max}$ are $\daimon$-free then paths of $D'_{\kappa_1^-}$ end with a negative action.
Note also each $D''_{\kappa_1^-}$ has a unique first action as the first action of paths of $D''_{\kappa_1^-}$ should be positive. Finally it follows that $D'_{\kappa_1^-} D''_{\kappa_1^-} \coh D'_{\kappa_1'^-} D''_{\kappa_1'^-}$ for $\kappa_1^- \neq \kappa_1'^-$, and that if there is an action in common between $D''_{\kappa_1^-}$ and $D''_{\kappa_1'^-}$ then $\kappa_1^- = \kappa_1'^-$.
\\
$(\Leftarrow)$ Let $D = D_1 \cup  \bigcup_{\kappa_1^- \in K} D'_{\kappa_1^-} D''_{\kappa_1^-}$ with $D_1$, $D'_{\kappa_1^-}$, $D''_{\kappa_1^-}$ as defined in the lemma. 
	\begin{itemize}
	\item $D$ is a clique: This follows by construction of $D$ and the fact that $D_1$, $D'_{\kappa_1^-}$, $D''_{\kappa_1^-}$ are cliques.

	\item $D$ is a maximal clique: Let $\pathLL r \in \dual S$ and $\pathLL r \coh \pathLL r'$ for all $\pathLL r' \in D$.
	Either $\pathLL r \in \dual{V_{\behaviour A_{[\behaviour B]}}}$ hence $\pathLL r \in D_1 \subset D$ as $D_1$ is a maximal clique of $\dual{V_{\behaviour A_{[\behaviour B]}}}$.
	Or $\pathLL r = \pathLL p\pathLL q$ where $\pathLL p \in \overline{V_{\behaviour A_{[\behaviour B]}}^{max}}$ and $\pathLL q \in \dual{V_{\behaviour B}^-}$. Remark that $\pathLL p \coh \pathLL r'$ for all $\pathLL r' \in D$, thus $\pathLL p \in D_1$ hence $\pathLL p \in D'_{\kappa_1^-}$ for some action $\kappa_1^-$. Now remark that $\pathLL p\pathLL q \coh \pathLL p\pathLL q'$ for all $\pathLL q' \in D''_{\kappa_1^-}$. Thus $\pathLL q \in D''_{\kappa_1^-}$ as $D''_{\kappa_1^-}$ is a maximal clique. Thus $\pathLL r \in D$.

	\item $D$ is positively saturated: Let $\pathLL m \in D$, $\pathLL n\kappa^-\kappa^+ \in D$, $\pathLL m\kappa^-\daimon \in \dual S$:
		\begin{itemize}
		\item If $\pathLL m \in D_1$ and $\pathLL n\kappa^-\kappa^+ \in D_1$: the result follows from the fact that $D_1$ is positively saturated.
		\item If $\pathLL m \in D_1$ and $\pathLL n\kappa^-\kappa^+ \in D'_{\kappa_1^-} D''_{\kappa_1^-}$. As $\pathLL m\kappa^-\daimon \in \dual S$, we have that $\kappa^-$ is an action occurring in $\overline{V_{\behaviour A_{[\behaviour B]}}}$. Hence $\kappa^+$ occurs in $\overline{V_{\behaviour A_{[\behaviour B]}}}$. It follows that $\overline{\kappa_0^+}\kappa^+ \in \dual{V_{\behaviour B_{[\behaviour A]}}}$. Thus $\pathLL m\kappa^-\kappa^+ \in \dual S$.
		\item If $\pathLL m \in D'_{\kappa_1^-} D''_{\kappa_1^-} $ and $\pathLL n\kappa^-\kappa^+ \in D_1$: as $\pathLL m\kappa^-\daimon \in \dual S$ then $\pathLL m\in D_1$. The result follows from the fact that $D_1$ is positively saturated.
		\item If $\pathLL m \in D'_{\kappa_1^-} D''_{\kappa_1^-} $ and $\pathLL n\kappa^-\kappa^+ \in D'_{\kappa_1'^-} D''_{\kappa_1'^-}$, \ie, $\pathLL m = \pathLL m_0\pathLL m_1$ with $\pathLL m_0 \in D'_{\kappa_1^-} $ and $\pathLL m_1 \in D''_{\kappa_1^-} $, and $\pathLL n\kappa^-\kappa^+ = \pathLL n_0\pathLL n_1\kappa^-\kappa^+$ with $\pathLL n_0 \in D'_{\kappa_1'^-}$ and $\pathLL n_1\kappa^-\kappa^+ \in D''_{\kappa_1'^-}$. Furthermore $\pathLL m\kappa^-\daimon \in \dual S$ thus $\kappa_1^- = \kappa_1'^-$: let $\kappa'^+$ be the justifier of $\kappa^-$, then this action $\kappa'^+$ occurs in $\pathLL m_1$ and also in $\pathLL n_1$, thus there is a common view between $\fullview{\pathLL m_1}$ and $\fullview{\pathLL n_1}$, it follows that $\kappa_1^- = \kappa_1'^-$.
		Note now that $\kappa_0^-\pathLL m_1 \in D''_{\kappa_1^-}$ and $\kappa_0^-\pathLL n_1\kappa^-\kappa^+ \in D''_{\kappa_1^-}$ and that $\kappa_0^-\pathLL m_1\kappa^-\daimon \in \dual{V_{\behaviour B_{[\behaviour A]}}}$. Hence as $D''_{\kappa_1^-}$ is positively saturated for $\dual{V_{\behaviour B_{[\behaviour A]}}}$, we have that $\kappa_0^-\pathLL m_1\kappa^-\kappa^+ \in \dual{V_{\behaviour B_{[\behaviour A]}}}$. It follows that $\pathLL m\kappa^-\kappa^+ \in \dual S$.
		\end{itemize}
	\end{itemize}

\noindent $(\Rightarrow)$ Let $D$ be a positively saturated maximal clique of $\dual S$. 
	\begin{itemize}
	\item $D$ may be factorized: Let $D = D_1 \cup D_2$ with $D_1 \subset \dual{V_{\behaviour A_{[\behaviour B]}}}$ and $D_2 \subset \overline{V_{\behaviour A_{[\behaviour B]}}^{max}}\dual{V_{\behaviour B}^-}$. 
We remark the following points: let $\pathLL p_1 \in D_1$, $\pathLL p'_2\pathLL q'_2 \in D_2$ and $\pathLL p''_2\pathLL q''_2 \in D_2$ (with $\pathLL p'_2, \pathLL p''_2 \in \overline{V_{\behaviour A_{[\behaviour B]}}^{max}}$ and $\pathLL q'_2, \pathLL q''_2 \in \dual{V_{\behaviour B}^-}$). We have that $\pathLL p'_2\pathLL q'_2 \coh \pathLL p''_2\pathLL q''_2$.
Thus  $\pathLL p'_2 \coh \pathLL p''_2$. If $\pathLL p'_2$ and $\pathLL p''_2$ end on the same action then $\pathLL q'_2 \coh \pathLL q''_2$. In such a case, $\pathLL p'_2\pathLL q''_2 \in D_2$ and $\pathLL p''_2\pathLL q'_2 \in D_2$: the set $D_2$ may be factorized. 
Let $K$ be the set of last actions of such $\daimon$-free paths $\pathLL p'_2$ and $D'_{\kappa_1^-}$ be the set of paths $\pathLL p'_2$ ending with the same last action $\kappa_1^-$. We define $D''_{\kappa_1^-}$ to be the set of $\pathLL q'_2$ such that $\pathLL p'_2\pathLL q'_2 \in D_2$ and $\pathLL p'_2$ ends on action $\kappa_1^-$.
We can write $D_2 = \bigcup_{\kappa_1^- \in K} D'_{\kappa_1^-} D''_{\kappa_1^-}$ with for all $\kappa_1^-$, $D'_{\kappa_1^-} \subset \overline{V_{\behaviour A_{[\behaviour B]}}^{max}}$ and $D''_{\kappa_1^-} \subset \dual{V_{\behaviour B}^-}$.
	\item As $D$ is a clique, $D_1$, $D'_{\kappa_1^-}$ and $D''_{\kappa_1^-}$ are cliques.

	\item By construction if $D'_{\kappa_1^-}$ is empty, $D''_{\kappa_1^-}$ is also empty. Suppose $D'_{\kappa_1^-}$ not empty, let $\pathLL q \in \dual{V_{\behaviour B}^-}$ such that $\pathLL q \coh D''_{\kappa_1^-}$ then $D'_{\kappa_1^-}\pathLL q \coh D'_{\kappa_1^-} D''_{\kappa_1^-}$ and also $D'_{\kappa_1^-}\pathLL q \coh D_1$: $D''_{\kappa_1^-}$ is a maximal clique or is empty.

	\item With notations as before, we have that $\pathLL p_1 \coh \pathLL p'_2 \coh \pathLL p''_2$ hence $\pathLL p'_2 \in D_1$ and $\pathLL p''_2 \in D_1$. It follows that $D'_{\kappa_1^-} \subset D_1$. Hence by construction, $D'_{\kappa_1^-} = \overline{\dual{D_1}\kappa_1^+ \cap V_{\behaviour A_{[\behaviour B]}}^{max}}$.

	\item Let $\pathLL p \in \dual{V_{\behaviour A_{[\behaviour B]}}}$ such that $\pathLL p \coh D_1$ then $\pathLL p \coh D'_{\kappa_1^-}$ as $D'_{\kappa_1^-} \subset D_1$, thus $\pathLL p \coh D'_{\kappa_1^-} D''_{\kappa_1^-}$, so $\pathLL p \in D$, thus $\pathLL p \in D_1$: $D_1$ is a maximal clique or is empty.

	\item As $D$ is positively saturated, then $D_1$ and $\kappa_0^-D''_{\kappa_1^-}$ are also all positively saturated.
\qed
	\end{itemize}


\end{document}